\definecolor{defblue}{rgb}{0.121,0.47,0.705}
\definecolor{linkblue}{rgb}{0.098,0.098,0.5392}
\definecolor{blue}{rgb}{0.098,0.098,0.5392}
\let\emph\relax
\DeclareTextFontCommand{\emph}{\color{defblue}\em}
\newcommand{\rot}{\textrm{rot}}
\newcommand{\yes}{\ensuremath{\mathsf{yes}}}
\newcommand{\inter}{\ensuremath{\mathsf{int}}}
\newcommand{\lab}{\ensuremath{\mathsf{label}}}
\newcommand{\pred}{\ensuremath{\mathsf{prev}}}
\newcommand{\need}{\ensuremath{\mathsf{next}}}
\newcommand{\intP}{\ensuremath{\mathsf{intPoint}}}
\newcommand{\len}{\ensuremath{\mathsf{weight}}}
\newcommand{\on}{\ensuremath{\mathsf{on}}}
\newcommand{\tru}{\ensuremath{\mathsf{true}}\xspace}
\newcommand{\fal}{\ensuremath{\mathsf{false}}\xspace}
\newcommand{\lef}{\ensuremath{\mathsf{left}}}
\newcommand{\dirG}{\ensuremath{G^\rightarrow}\xspace}
\newcommand{\maY}{\ensuremath{\mathsf{maxY}}}
\newcommand{\miY}{\ensuremath{\mathsf{minY}}}
\newcommand{\maX}{\ensuremath{\mathsf{maxX}}}
\newcommand{\miX}{\ensuremath{\mathsf{minX}}}
\newcommand{\nx}{\ensuremath{\mathsf{nearX}}}
\newcommand{\ny}{\ensuremath{\mathsf{nearY}}}
\newcommand{\rig}{\ensuremath{\mathsf{right}}}
\newcommand{\abo}{\ensuremath{\mathsf{above}}}
\newcommand{\bel}{\ensuremath{\mathsf{below}}}
\newcommand{\redu}{\ensuremath{\mathsf{red}}}
\newcommand{\ma}{\ensuremath{\mathsf{max}}}
\newcommand{\trun}{\ensuremath{\mathsf{trun}}}
\newcommand{\num}{\ensuremath{\mathsf{num}}}
\newcommand{\F}{\ensuremath{\mathsf{F}}\xspace}
\newcommand{\C}{\ensuremath{\mathsf{C}}\xspace}
\newcommand{\R}{\ensuremath{\mathsf{R}}\xspace}
\newcommand{\E}{\ensuremath{\mathsf{E}}\xspace}
\newcommand{\W}{\ensuremath{\mathsf{W}}\xspace}
\newcommand{\N}{\ensuremath{\mathsf{N}}\xspace}
\newcommand{\South}{\ensuremath{\mathsf{S}}\xspace}
\newcommand{\XS}{\ensuremath{\mathsf{S}}\xspace}
\newcommand{\XL}{\ensuremath{\mathsf{L}}\xspace}
\newcommand{\mycap}{\ensuremath{\mathsf{cap}}}
\newcommand{\RCR}{\ensuremath{\mathsf{RCR}}\xspace}
\newcommand{\RCCC}{\ensuremath{\mathsf{RCCC}}\xspace}
\newcommand{\RCCR}{\ensuremath{\mathsf{RCCR}}\xspace}
\newcommand{\RRRC}{\ensuremath{\mathsf{RRRC}}\xspace}
\newcommand{\CRC}{\ensuremath{\mathsf{CRC}}\xspace}
\newcommand{\CRRC}{\ensuremath{\mathsf{CRRC}}\xspace}
\newcommand{\CRRR}{\ensuremath{\mathsf{CRRR}}\xspace}
\newcommand{\CCCR}{\ensuremath{\mathsf{CCCR}}\xspace}
\crefname{figure}{Fig.}{Figs.}
\Crefname{figure}{Figure}{Figures}
\newtheorem{reduction}{Reduction Rule}
\newtheorem{counting}{Counting Rule}
\newtheorem{observation}{Observation}
\newtheorem{lemma}{Lemma}
\newtheorem{theorem}{Theorem}
\newtheorem{proposition}{Proposition}
\journal{arXiv}
\begin{document}

\begin{frontmatter}

\title{Parameterized Approaches to Orthogonal Compaction\tnoteref{titleRef}}
\tnotetext[titleRef]{A preliminary version of this paper appeared in the proceedings of SOFSEM 2023.}

\author[1]{Walter Didimo\fnref{authorRef1}}
\ead{walter.didimo@unipg.it}
\fntext[authorRef1]{Work partially supported by Dept.\ Engineering, Università degli Studi di Perugia, grant RICBA21LG: Algoritmi, modelli e sistemi per la rappresentazione visuale di reti.}
\affiliation[1]{organization={Università degli Studi di Perugia}, city={Perugia}, country={Italy}}
             
\author[2]{Siddharth Gupta}
\ead{siddharthg@goa.bits-pilani.ac.in}
\affiliation[2]{organization={BITS Pilani, K K Birla Goa Campus}, city={Goa}, country={India}}

\author[3]{Philipp Kindermann}
\ead{kindermann@uni-trier.de}
\affiliation[3]{organization={Universität Trier}, city={Trier}, country={Germany}}

\author[1]{Giuseppe~Liotta}
\ead{giuseppe.liotta@unipg.it}

\author[4]{Alexander~Wolff}
\affiliation[4]{organization={Universität Würzburg}, city={Würzburg}, country={Germany}}

\author[5]{Meirav~Zehavi\fnref{authorRef6}}
\ead{meiravze@bgu.ac.il}
\fntext[authorRef6]{Work partially supported by European Research Council (ERC) grant PARAPATH.}
\affiliation[5]{organization={Ben-Gurion University of the Negev}, city={Beersheba}, country={Israel}}

\begin{abstract}
  Orthogonal graph drawings are used in applications such as UML
  diagrams, VLSI layout, cable plans, and metro maps.  We focus on
  drawing planar graphs and assume that we are given an
  \emph{orthogonal representation} that describes the desired shape,
  but not the exact coordinates of a drawing.  Our aim is to compute
  an orthogonal drawing on the grid that has minimum area among all
  grid drawings that adhere to the given orthogonal
  representation. \par %
  This problem is called orthogonal compaction (\textsc{OC}) and is
  known to be NP-hard, even for orthogonal representations of cycles
  [Evans et al.\ 2022].  We investigate the complexity of \textsc{OC}
  with respect to several parameters.  Among others, we show that
  \textsc{OC} is fixed-parameter tractable with respect to the most
  natural of these parameters, namely, the number of \emph{kitty
    corners} of the orthogonal representation: the presence of pairs
  of kitty corners in an orthogonal representation makes the
  \textsc{OC} problem hard. Informally speaking, a pair of kitty
  corners is a pair of reflex corners of a face that point at each
  other.  Accordingly, the number of kitty corners is the number of
  corners that are involved in some pair of kitty corners.
\end{abstract}

\begin{keyword}
  Orthogonal Graph Drawing \sep Orthogonal Representation \sep
  Compaction \sep Parameterized Complexity
\end{keyword}

\end{frontmatter}

\section{Introduction}

In a \emph{planar orthogonal drawing} of a planar graph $G$ each
vertex is mapped to a distinct point of the plane and each edge is
represented as a sequence of horizontal and vertical segments.  A
planar graph $G$ admits a planar orthogonal drawing if and only if it
has vertex-degree at most four. A \emph{planar orthogonal
  representation}~$H$ of~$G$ is an equivalence class of planar
orthogonal drawings of $G$ that have the same ``shape'', i.e., the
same planar embedding, the same ordered sequence of bends along the
edges, and the same vertex angles. A planar orthogonal drawing
belonging to the equivalence class $H$ is simply called a
\emph{drawing of $H$}. For example, \cref{fig:comp-1,fig:comp-2} are
drawings of the same orthogonal representation, while
\cref{fig:comp-3} is a drawing of the same graph with a different
shape.

Given a planar orthogonal representation $H$ of a connected planar
graph~$G$, the \emph{orthogonal compaction} problem (\textsc{OC} for
short) for~$H$ asks to compute a minimum-area drawing of~$H$. More
formally, it asks to assign integer coordinates to the vertices and to
the bends of~$H$ such that the area of the resulting planar orthogonal
drawing is minimum over all drawings of~$H$. The area of a drawing is
the area of the minimum bounding box that contains the drawing. For
example, the drawing in \cref{fig:comp-1} has area $7 \times 5 = 35$,
whereas the drawing in \cref{fig:comp-2} has area $7 \times 4 = 28$,
which is the minimum for that orthogonal representation.

The area of a graph layout is considered one of the most relevant
readability metrics in orthogonal graph drawing (see,
e.g.,~\cite{DBLP:books/ph/BattistaETT99,DBLP:conf/dagstuhl/1999dg}).
Compact grid drawings are desirable as they yield a good overview
without neglecting details.  For this reason, the \textsc{OC} problem
is widely investigated in the literature.  Bridgeman et
al.~\cite{DBLP:journals/comgeo/BridgemanBDLTV00} showed that
\textsc{OC} can be solved in linear time, in the number of vertices and bends, for a subclass of planar
orthogonal representations called \emph{turn-regular}. Informally
speaking, a face of a planar orthogonal representation~$H$ is
turn-regular if it does not contain any pair of so-called \emph{kitty
  corners}, i.e., a pair of reflex corners (turns of $270^\circ$) that
point to each other; a representation is turn-regular if all its faces
are turn-regular.  See \cref{fig:compaction} and refer to
\cref{se:basics} for a formal definition.  On the other hand,
Patrignani~\cite{DBLP:journals/comgeo/Patrignani01} proved that,
unfortunately, \textsc{OC} is NP-hard in general.  Evans et
al.~\cite{efkssw-mrpgas-CGTA22} showed that \textsc{OC} remains
NP-hard even for orthogonal representations of simple cycles.  Since
cycles have constant pathwidth (namely~2), this immediately shows that
we cannot expect an FPT (or even an XP) algorithm parameterized by
pathwidth alone unless P = NP.  The same holds for parametrizations
with respect to treewidth since the treewidth of a graph is upper
bounded by its pathwidth.

In related work, Bannister et al.~\cite{bes-ioc-JGAA12} showed that
several problems of compacting {\em not necessarily planar}
orthogonal graph drawings to use the minimum number of rows, area,
length of longest edge, or total edge length cannot be approximated
better than within a polynomial factor of optimal (if P$\,\ne\,$NP).
They also provided an FPT algorithm for testing whether a drawing can
be compacted to a number of rows $k$; the algorithm is parameterized
by the natural parameter $k$.
Note that their algorithm does not solve the planar case because the
algorithm is allowed to change the embedding.

\begin{figure}[tb]
  \begin{subfigure}[b]{.32\linewidth}
    \centering
    \includegraphics[page=2]{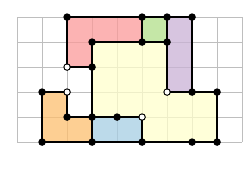}
    \caption{area = 35}
    \label{fig:comp-1}
  \end{subfigure}
  \hfill
  \begin{subfigure}[b]{.32\linewidth}
    \centering
    \includegraphics[page=4]{compaction-new}
    \caption{area = 28}
    \label{fig:comp-2}
  \end{subfigure}
  \hfill
  \begin{subfigure}[b]{.32\linewidth}
    \centering
    \includegraphics[page=6]{compaction-new}
    \caption{area = 24}
    \label{fig:comp-3}
  \end{subfigure}
  \caption{(a)~Drawing of a non-turn-regular orthogonal representation
    $H$; vertices $u$ and $v$ point to each other in the filled
    internal face, thus they represent a pair of kitty
    corners. Vertices $w$ and $z$ are a pair of kitty corners in the
    external face.  (b)~Another drawing of $H$ with minimum
    area. (c)~Minimum-area drawing of a turn-regular orthogonal
    representation of the same graph.}
  \label{fig:compaction}
\end{figure}

The research in this paper is motivated by the relevance of the
\textsc{OC} problem and by the growing interest in parameterized
approaches for NP-hard problems in graph
drawing~\cite{DBLP:journals/dagstuhl-reports/GanianMNZ21}. Recent
works on the subject include parameterized algorithms for book
embeddings and queue
layouts~\cite{DBLP:journals/jgaa/BannisterE18,DBLP:journals/jgaa/BhoreGMN20,DBLP:journals/jgaa/BhoreGMN22,DBLP:conf/compgeom/BinucciLGDMP19,kobayashi_et_al:LIPIcs:2018:8566},
upward planar
drawings~\cite{DBLP:conf/compgeom/BinucciLGDMP19,cdfgrs-paup-SoCG22},
orthogonal planarity testing and grid
recognition~\cite{DBLP:journals/jcss/GiacomoLM22,DBLP:conf/isaac/0002SZ21},
clustered planarity and hybrid
planarity~\cite{DBLP:conf/wg/LozzoEG018,LiRuTa-PCGPRCO-22,DBLP:journals/algorithmica/LozzoEGG21},
1-planar drawings~\cite{DBLP:journals/jgaa/BannisterCE18},
crossing
minimization~\cite{DBLP:conf/gd/BannisterES13,DBLP:journals/algorithmica/DujmovicFKLMNRRWW08,DBLP:journals/jda/DujmovicFK08},
and visual complexity
\cite{cdggkw-pcsn-GD23,cflrvw-cdgfl-JGAA23,bcghvw-bcong-SIDMA24}.

\paragraph{Contribution} Extending this line of research, we initiate
the study of the parameterized complexity of \textsc{OC} and
investigate several parameters:
\begin{itemize}
\item {\em Number of kitty corners.}  Given that \textsc{OC} can be
  solved efficiently for orthogonal representations without kitty
  corners, the number of kitty corners (that is, the number of corners
  involved in some pair of kitty corners) is a very natural parameter
  for \textsc{OC}.  We show that \textsc{OC} is fixed-parameter
  tractable (FPT) with respect to the number of kitty corners
  (\cref{th:fpt-kitty} in \cref{se:fpt-kitty}).
      
\item {\em Number of faces.} Since \textsc{OC} remains NP-hard for
  orthogonal representations of simple
  cycles~\cite{efkssw-mrpgas-CGTA22}, \textsc{OC} is para-NP-hard when
  parameterized by the number of faces. Hence, we cannot expect an FPT
  (or even an XP) algorithm in this parameter alone, unless
  P$\,=\,$NP. However, for orthogonal representations of simple cycles
  we show the existence of a polynomial kernel for \textsc{OC} when
  parameterized by the number of kitty corners
  (\cref{the:polynomial-kernel} in \cref{sec:kernel}).
    
\item {\em Maximum face-degree.} The maximum face-degree is the
  maximum number of vertices on the boundary of a face. Since both the
  NP-hardness reductions by Patrignani~\cite{DBLP:journals/comgeo/Patrignani01} and by Evans et al.~\cite{efkssw-mrpgas-CGTA22} require faces of linear size, it is interesting to know whether faces of constant size make the problem
  tractable.  We prove that this is not the case, i.e., \textsc{OC}
  remains NP-hard when parameterized by the maximum face degree
  (\cref{th:hard-face-degree} in \cref{se:hardness}).
    
\item {\em Height.} The \emph{height} of an orthogonal representation
  is the minimum number of distinct y-coordinates required to draw the
  representation.  Since a $w \times h$ grid has pathwidth at
  most~$h$, graphs with bounded height have bounded pathwidth, but the
  converse is generally not true~\cite{DBLP:journals/dcg/Biedl11}.  In
  fact, we show that \textsc{OC} admits an XP algorithm parameterized
  by the height of the given representation (see \cref{th:xp-height}
  in \cref{se:xp-height}).  In this context, we remark that a related
  problem has been considered by Chaplick et
  al.~\cite{cflrvw-dgflf-JoCG20}.
  Namely, given a planar graph~$G$, they define $\bar{\pi}^1_2(G)$ to
  be the minimum number of distinct y-coordinates required to draw the
  graph straight-line, and provide lower and upper bounds on this
  parameter. In their setting, however, the embedding of~$G$ is not
  fixed, and the graph is not necessarily drawn in the orthogonal
  drawing style.
\end{itemize}
We start with some basics in \cref{se:basics} and close with open
problems in \cref{se:conclusions}.

\section{Basic Definitions}
\label{se:basics}

Let $G=(V,E)$ be a connected planar graph of vertex-degree at most
four and let $\Gamma$ be a planar orthogonal drawing of $G$. We assume
that in $\Gamma$ all the vertices and bends have integer coordinates,
i.e., we assume that $\Gamma$ is an integer-coordinate grid
drawing.
The connected regions of the plane determined by $\Gamma$ are called
\emph{faces}; each face is described by the clockwise sequence of
vertices and edges along its boundary. Exactly one of these faces is
an infinite region and it is called the \emph{external face} (or
\emph{outer face}). The other faces of $\Gamma$ are the \emph{internal
  faces}. The \emph{planar embedding} of $\Gamma$ is the set of
(internal and external) faces determined by $\Gamma$.  Given a face
$f$ of $\Gamma$, the \emph{degree of $f$ } is the number of vertices
traversed while walking clockwise on its boundary, where each vertex
is counted the number of times it is traversed (if $G$ is biconnected,
each vertex of $f$ is traversed exactly once).

Two planar orthogonal drawings~$\Gamma_1$ and~$\Gamma_2$ of
$G$ are \emph{shape-equivalent} if: $(i)$~$\Gamma_1$ and~$\Gamma_2$
have the same planar embedding; $(ii)$~for each vertex $v \in V$, the
geometric angles at $v$ (formed by any two consecutive edges incident
on $v$) are the same in $\Gamma_1$ and $\Gamma_2$; $(iii)$~for each
edge $e=(u,v) \in E$ the sequence of left and right bends along $e$
while moving from $u$ to $v$ is the same in $\Gamma_1$ and
$\Gamma_2$. An \emph{orthogonal representation} $H$ of $G$ is a class
of shape-equivalent planar orthogonal drawings of $G$. It follows that
an orthogonal representation $H$ is completely described by a planar
embedding of $G$, by the values of the angles around each vertex (each
angle being a value in the set
$\{90^\circ,180^\circ,270^\circ,360^\circ\}$), and by the ordered
sequence of left and right bends along each edge $(u,v)$, moving from
$u$ to $v$; if we move from $v$ to $u$, then this sequence and the
direction (left/right) of each bend are reversed. If $\Gamma$ is a
planar orthogonal drawing in the class $H$, then we also say that
$\Gamma$ is a drawing of $H$. Without loss of generality, we also
assume that an orthogonal representation $H$ comes with a given
``orientation'', i.e., for each edge segment $\overline{pq}$ of $H$
(where $p$ and $q$ correspond to vertices or bends), we fix whether
$p$ lies to the left, to the right, above, or below~$q$.

\paragraph{Turn-regular orthogonal representations}
Let $H$ be a planar orthogonal representation. For the purpose of the
\textsc{OC} problem, and without loss of generality, we always assume
that each bend in $H$ is replaced by a degree-2 vertex, i.e., $H$ is a \emph{rectilinear representation} (each edge is either a horizontal or a vertical segment).  Let $f$ be a
face of~$H$ and assume that the
boundary of $f$ is traversed counterclockwise (resp. clockwise) if $f$
is internal (resp. external).  Let $u$ and $v$ be two reflex vertices
of $f$. Let $\rot(u,v)$ be the number of convex corners minus the
number of reflex corners encountered while traversing the boundary of
$f$ from $u$ (included) to $v$ (excluded); a reflex vertex of degree
one is counted like two reflex vertices. We say that $u$ and $v$ is a
pair of \emph{kitty corners} of~$f$ if $\rot(u,v)=2$ or
$\rot(v,u)=2$. A vertex is a kitty corner if it is part of a pair of
kitty corners.  A face $f$ of $H$ is \emph{turn-regular} if it does
not contain a pair of kitty corners.  The representation~$H$ is
\emph{turn-regular} if all faces are turn-regular.

\paragraph{Parameterized Complexity}

Let $\Pi$ be an NP-hard problem. In the framework of parameterized
complexity, each instance of $\Pi$ is associated with a \emph{parameter}~$k$. Here, the goal is to confine the combinatorial
explosion in the running time of an algorithm for $\Pi$ to depend only
on $k$. Formally, we say that $\Pi$ is \emph{fixed-parameter tractable (FPT)} if any instance $(I, k)$ of $\Pi$ is solvable in time
$f(k)\cdot |I|^{O(1)}$, where $f$ is an arbitrary computable function
of $k$. A weaker request is that for every fixed $k$, the problem
$\Pi$ would be solvable in polynomial time. Formally, we say that
$\Pi$ is \emph{slice-wise polynomial (XP)} if any instance $(I, k)$ of
$\Pi$ is solvable in time $f(k)\cdot |I|^{g(k)}$, where $f$ and $g$
are arbitrary computable functions of $k$.

A companion notion of fixed-parameter tractability is that of \emph{kernelization}. A \emph{compression algorithm} from a problem $\Pi$
to a problem $\Pi'$ is a polynomial-time algorithm that transforms an
arbitrary instance of $\Pi$ to an equivalent instance of $\Pi'$ whose
size is bounded by some computable function $g$ of the parameter $k$ of
the original instance. When $\Pi'=\Pi$, then the compression algorithm
is called a kernelization algorithm, and the resulting instance is
called a \emph{kernel}. Further, we say that $\Pi$ admits a compression
(or a kernel) of size $g(k)$ where $k$ is the parameter. 
It is well known that a problem is fixed-parameter tractable (FPT) by
a parameter $k$ if and only if it admits a kernelization algorithm for
the parameter $k$.
If $g(k)$ is the size of a kernel and $g$ is a polynomial function,
then the compression (or kernel) is called a \emph{polynomial
  compression} (or \emph{polynomial kernel}).  For more information on
parameterized complexity, we refer to books such as
\cite{DBLP:books/sp/CyganFKLMPPS15,downey2013fundamentals,fomin2019kernelization}.

\section{Number of Kitty Corners: An FPT Algorithm}
\label{se:fpt-kitty}

Turn-regular orthogonal representations can be compacted optimally in
linear time~\cite{DBLP:journals/comgeo/BridgemanBDLTV00}.  
The result in~\cite{DBLP:journals/comgeo/BridgemanBDLTV00} exploits an interesting relationship between planar orthogonal representations and upward planar embeddings. In the following, we first introduce the necessary terminology about upward planarity, then we recall the result of~\cite{DBLP:journals/comgeo/BridgemanBDLTV00}, and finally describe our FPT algorithm.

\paragraph{Upward planar embeddings and saturators}

Let $D=(V,E)$ be a plane DAG, i.e., an acyclic digraph with a given
planar embedding.  An \emph{upward planar drawing} $\Gamma$ of $D$ is
an embedding-preserving drawing of $D$ where each vertex $v$ is mapped
to a distinct point of the plane and each edge is drawn as a simple
Jordan arc monotonically increasing in the upward direction.  Such a
drawing exists if and only if $D$ is the spanning subgraph of a plane
\emph{$st$-graph}, i.e., a plane digraph with a unique source $s$ and
a unique sink $t$, which are both on the external
face~\cite{DBLP:journals/tcs/BattistaT88}.  Let $S$ be the set of
sources of~$D$, $T$ be the set of sinks, and
$I = V \setminus (S \cup T)$.  $D$~is~\emph{bimodal}~if, for every
vertex $v \in I$, the outgoing edges (and hence the incoming edges)
of~$v$ are consecutive in the clockwise order around~$v$.  If an
upward planar drawing $\Gamma$ of $D$ exists, then $D$ is necessarily
bimodal and $\Gamma$ uniquely defines the left-to-right orderings of
the outgoing and incoming edges of each vertex.  This set of orderings
(for all vertices of $D$) is an \emph{upward planar embedding} of~$D$,
and is regarded as an equivalence class of upward planar drawings of~$D$.
A plane DAG with a given upward planar embedding is an \emph{upward
  plane DAG}.

\begin{figure}[tb]
  \begin{subfigure}[b]{0.5\linewidth}
    \centering \includegraphics[page=1]{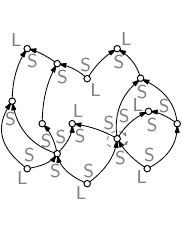}
    \caption{}
    \label{fig:updag-1}
  \end{subfigure}
  \hfill
  \begin{subfigure}[b]{0.5\linewidth}
    \centering \includegraphics[page=2]{upward-plane-DAG}
    \caption{}
    \label{fig:updag-2}
  \end{subfigure}
  \caption{(a) An upward plane DAG $D$ and the corresponding upward
    labeling. (b) A plane $st$-graph obtained by augmenting $D$ with a
    complete saturator (dotted edges).}
  \label{fig:updag}
\end{figure}

Let $e_1$ and $e_2$ be two consecutive edges on the boundary of a face
$f$ of a bimodal plane digraph $D$, and let $v$ be their common
vertex.  Vertex~$v$ is a \emph{source switch of~$f$} (resp. a
\emph{sink switch of~$f$}) if both $e_1$ and $e_2$ are outgoing edges
(resp. incoming edges) of $v$.  Note that, for each face $f$, the
number $n_f$ of source switches of~$f$ equals the number of sink
switches of~$f$. The \emph{capacity of $f$} is the function
$\mycap(f)=n_f-1$ if $f$ is an internal face and $\mycap(f)=n_f+1$ if
$f$ is the external face.  If $\Gamma$ is an upward planar drawing
of~$D$, then each vertex $v \in S \cup T$ (i.e., a source or a sink)
has exactly one angle larger than $180^\circ$, called a \emph{large
  angle}, in one of its incident faces, and $\deg(v)-1$ angles smaller
than $180^\circ$, called \emph{small angles}, in its other incident
faces.  For a source or sink switch of~$f$, assign either a label~\XL
or a label~\XS to its angle in~$f$, depending on whether this angle is
large or small.  For each face~$f$ of~$D$, the number of \XL-labels
determined by $\Gamma$ equals
$\mycap(f)$~\cite{DBLP:journals/algorithmica/BertolazziBLM94}.
Conversely, suppose given an assignment of \XL- and \XS-labels to the angles
at the source and sink switches of $D$; for each vertex~$v$, $\XL(v)$
(resp.\ $\XS(v)$) denotes the number of \XL- (resp.\ of \XS-) labels
at the angles of~$v$.  For each face~$f$, $\XL(f)$ (resp.\ $\XS(f)$)
denotes the number of \XL- (resp.\ of \XS-) labels at the angles
in~$f$.  Such an assignment corresponds to the labels induced by an
upward planar drawing of~$D$ if and only if the following properties
hold \cite{DBLP:journals/algorithmica/BertolazziBLM94}: (a)~$\XL(v)=0$
for each $v \in I$ and $\XL(v)=1$ for each $v \in S \cup T$;
(b)~$\XL(f)=\mycap(f)$ for each face $f \in F$.  We call such an
assignment an \emph{upward labeling of $D$}, as it uniquely
corresponds to (and hence describes) an upward planar embedding of
$D$; see \cref{fig:updag-1}. We will implicitly assume that a given
upward plane DAG is described by an upward labeling.

Given an upward plane DAG $D$, a \emph{complete saturator} of $D$ is a
set of vertices and edges, not belonging to $D$, used to augment $D$
to a plane $st$-graph $D'$. More precisely, a complete saturator
consists of a source $s$ and a sink $t$, which will belong to the
external face of $D'$, and of a set of edges where each edge $(u,v)$
is called a \emph{saturating edge} and fulfills one of the following
conditions (see, e.g., \cref{fig:updag-2}): (i)~$u,v \notin \{s, t\}$
and $u,v$ are both source switches of the same face $f$ such that $u$
has label~\XS in $f$ and $v$ has label~\XL in $f$; in this case $u$
\emph{saturates} $v$.  (ii)~$u,v \notin \{s, t\}$ and $u,v$ are both
sink switches of the same face $f$ such that $u$ has label~\XL in $f$
and $v$ has label~\XS in $f$; in this case~$v$ \emph{saturates}~$u$.
(iii)~$u=s$ and $v$ is a source switch of the external face with an
\XL angle.  (iv)~$v=t$ and $u$ is a sink switch of the external face
with an \XL angle.

\paragraph{Compaction of turn-regular representations}
We now recall how to compact in linear time a turn-regular rectilinear
representation with the approach of~\cite{DBLP:journals/comgeo/BridgemanBDLTV00}. Consider first any planar rectilinear representation $H$, which is not
necessarily turn-regular. Let $D_x$ be the plane DAG whose vertices
correspond to the maximal vertical chains of $H$ and such that two
vertices of $D_x$ are connected by an edge oriented rightward, if the
corresponding vertical chains are connected by a horizontal segment in
$H$.  Define the upward plane DAG $D_y$ symmetrically, where the
vertices correspond to the maximal horizontal chains of $H$ and where
the edges are oriented upward. Refer to
\cref{fig:turn-regular-compaction}.  $D_x$ and $D_y$ are both upward
plane DAGs (for $D_x$ rotate it by $90^\circ$ to see all edges flowing
in the upward direction).  For a vertex $v$ of $H$, $c_x(v)$ (resp.\
$c_y(v)$) denotes the vertex of $D_x$ (resp.\ of $D_y$) corresponding
to the maximal vertical (resp.\ horizontal) chain of $H$ that
contains~$v$.  For any two vertices $u$ and $v$ of $H$ such that
$c_x(u) \neq c_x(v)$, we write $u \rightsquigarrow_x v$ if there
exists a directed path from $c_x(u)$ to $c_x(v)$ in $D_x$.  We also
write $u \leftrightsquigarrow_x v$ if either $u \rightsquigarrow_x v$
or $v \rightsquigarrow_x u$, while $u \not \leftrightsquigarrow_x v$
means that neither $u \rightsquigarrow_x v$ nor
$v \rightsquigarrow_x u$.  The notations $u \rightsquigarrow_y v$,
$v \rightsquigarrow_y u$, $u \leftrightsquigarrow_y v$, and
$u \not\leftrightsquigarrow_y v$ are used symmetrically referring to
$D_y$ when $c_y(u) \neq c_y(v)$.

Bridgeman et al.~\cite{DBLP:journals/comgeo/BridgemanBDLTV00} showed
that $H$ is turn-regular if and only if, for every two vertices $u$
and $v$ in $H$, we have $u \leftrightsquigarrow_x v$, or
$u \leftrightsquigarrow_y v$, or both.  This is equivalent to saying
that the relative position along the $x$-axis or the relative position
along the $y$-axis (or both) between $u$ and $v$ is fixed over all
drawings of $H$.  Under this condition, the \textsc{OC} problem for
$H$ can be solved by independently solving in $O(n)$ time a pair of 1D
compaction problems for $H$, one in the $x$-direction and the other in
the $y$-direction.  The 1D compaction in the $x$-direction consists
of: $(i)$~augmenting $D_x$ to become a plane $st$-graph by means of a
complete saturator; $(ii)$~computing an optimal topological
numbering~$X$ of~$D_x$ (see~\cite{DBLP:books/ph/BattistaETT99},
p.~89); each vertex $v$ of $H$ receives an x-coordinate $x(v)$ such
that $x(v) = X(c_x(v))$.  We recall that a topological numbering of a
DAG $D$ is an assignment of integer numbers to the vertices of $D$
such that if there is a path from $u$ to $v$ then $u$ is assigned a
number smaller than the number of $v$.  A topological numbering is
optimal if the range of numbers that is used is the minimum possible.
Regarding step $(i)$ of the 1D compaction, note that $D_x$ admits a
unique complete saturator when $H$ is
turn-regular~\cite{DBLP:journals/comgeo/BridgemanBDLTV00}.  This is
due to the absence of kitty corners in each face of~$H$.  The 1D
compaction in the $y$-direction is solved symmetrically, so that each
vertex~$v$ receives a y-coordinate $y(v)=Y(c_y(v))$.
\Cref{fig:turn-regular-compaction} illustrates this process.

\begin{figure}[tb]
  \begin{subfigure}[b]{0.32\linewidth}
    \includegraphics[page=1]{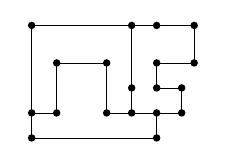}
    \caption{}
    \label{fig:turn-regular-compaction-1}
  \end{subfigure}
  \hfill
  \begin{subfigure}[b]{0.32\linewidth}
    \includegraphics[page=2]{turn-regular-compaction}
    \caption{}
    \label{fig:turn-regular-compaction-2}
  \end{subfigure}
  \hfill
  \begin{subfigure}[b]{0.32\linewidth}
    \includegraphics[page=3]{turn-regular-compaction}
    \caption{}
    \label{fig:turn-regular-compaction-3}
  \end{subfigure}
  \hfill
  \begin{subfigure}[b]{0.32\linewidth}
    \includegraphics[page=4]{turn-regular-compaction}
    \caption{}
    \label{fig:turn-regular-compaction-4}
  \end{subfigure}
  \hfill
  \begin{subfigure}[b]{0.32\linewidth}
    \includegraphics[page=5]{turn-regular-compaction}
    \caption{}
    \label{fig:turn-regular-compaction-5}
  \end{subfigure}
  \hfill
  \begin{subfigure}[b]{0.32\linewidth}
    \includegraphics[page=6]{turn-regular-compaction}
    \caption{}
    \label{fig:turn-regular-compaction-6}
  \end{subfigure}
  \caption{(a) A turn-regular orthogonal representation~$H$.
    (b)--(c)~The maximal horizontal and vertical chains of $H$ are
    highlighted.  (d)~The upward plane DAG $D_x$ with its complete
    saturator (dashed edges) and an optimal topological numbering.
    (e)~The same for $D_y$.  (f)~A minimum-area drawing of $H$ where
    the x- and y-coordinates correspond to the two optimal topological
    numberings.}
  \label{fig:turn-regular-compaction}
\end{figure}

Unfortunately, if $H$ is not turn-regular, the aforementioned approach
fails.  This is because there are in general many potential complete
saturators for augmenting the two upward plane DAGs $D_x$ and $D_y$ to
plane $st$-graphs. Also, even when an $st$-graph for each DAG is
obtained from a complete saturator, computing independently an optimal
topological numbering for each of the two $st$-graphs may lead to
non-planar drawings if no additional relationships are established for
the coordinates of kitty corner pairs, because for a pair $\{u,v\}$ of
kitty corners we have $u \not\leftrightsquigarrow_x v$ and
$u \not\leftrightsquigarrow_y v$.  

\paragraph{Our FPT algorithm}
We now prove that \textsc{OC} is
fixed-parameter-tractable when parameterized by the number of kitty
corners.

\begin{theorem}\label{th:fpt-kitty}
  Given a planar rectilinear representation~$H$ with $n$ vertices and
  $k>0$ kitty corners, we can compute a minimum-area drawing of $H$ in
  $O(2^{8.4k} n \log n)$ time.  In other words, OC is FPT when parameterized by~$k$.
\end{theorem}

\begin{proof}
  For each pair
  $\{u,v\}$ of kitty corners
  in each face,
  we guess the relative positions of $u$ and $v$ in a drawing of $H$,
  i.e., $x(u) \lesseqgtr x(v)$ and $y(u) \lesseqgtr y(v)$.
  This implies $O(k^2)$ guesses.

  More precisely, we generate all maximal plane DAGs (together with an upward
  planar embedding) that can be incrementally obtained from $D_x$ by
  repeatedly applying the following sequence of steps: Guess a pair
  $\{u,v\}$ of kitty corners of $H$ such that $c_x(u)$ and $c_x(v)$
  belong to the same face; for such a pair either add a directed edge
  $(c_x(u),c_x(v))$ (which establishes that $x(u)<x(v)$), or add a
  directed edge $(c_x(v),c_x(u))$ (which establishes that
  $x(u) > x(v)$), or identify $c_x(u)$ and $c_x(v)$ (which establishes
  that $x(u)=x(v)$); this last operation corresponds to adding in $H$
  a vertical segment between $u$ and $v$, thus merging the vertical
  chain of $u$ with the vertical chain of~$v$.  Analogously, we
  generate from~$D_y$ a set of maximal plane DAGs (together with an
  upward planar embedding).  Let $\overline{D_x}$ and $\overline{D_y}$
  be two upward plane DAGs generated from~$D_x$ and~$D_y$,
  respectively, as described above.  We augment
  $\overline{D_x}$ (resp. $\overline{D_y}$) with a complete saturator
  that makes it a plane $st$-graph. Observe that, by construction,
  neither $\overline{D_x}$ nor $\overline{D_y}$ contain two
  non-adjacent vertices in the same face whose corresponding chains of
  $H$ have a pair of kitty corners. Hence their complete saturators
  are uniquely defined. We finally compute a pair of optimal
  topological numberings to determine the $x$- and the $y$-coordinates
  of each vertex of $H$ using the algorithm of Bridgeman et
  al.~\cite{DBLP:journals/comgeo/BridgemanBDLTV00}.
  Note that, for some pairs of $\overline{D_x}$ and $\overline{D_y}$,
  the procedure described above may assign $x$- and $y$-coordinates to
  the vertices of $H$ that do not lead to a planar orthogonal
  drawing. If so, we discard the solution. Conversely, for those
  solutions that correspond to (planar) drawings of $H$, we compute
  the area and, at the end, we keep one of the drawings having minimum
  area.

  \begin{figure}[tbp]
    \begin{subfigure}[b]{0.235\linewidth}
      \centering \includegraphics[page=1]{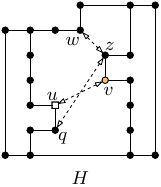}
      \caption{}
      \label{fig:non-turn-regular-compaction-1}
    \end{subfigure}
    \hfill
    \begin{subfigure}[b]{0.35\linewidth}
      \centering \includegraphics[page=2]{non-turn-regular-compaction}
      \caption{}
      \label{fig:non-turn-regular-compaction-2}
    \end{subfigure}
    \hfill
    \begin{subfigure}[b]{0.3\linewidth}
      \centering \includegraphics[page=3]{non-turn-regular-compaction}
      \caption{}
      \label{fig:non-turn-regular-compaction-3}
    \end{subfigure}
	
    \medskip
  
  \begin{subfigure}[b]{1\linewidth}
    \centering \includegraphics[page=4]{non-turn-regular-compaction}
    \caption{}
    \label{fig:non-turn-regular-compaction-4}
  \end{subfigure}
  \caption{(a) An orthogonal representation $H$ with three pairs of
    kitty corners, $\{u,v\}$, $\{w,z\}$, $\{q,z\}$,
    and $k=5$ kitty corners in total.
    (b)~Two distinct (saturated) upward plane DAGs $\overline{D_x}$ and
    $\overline{D_x}'$, and their optimal topological numberings $X$
    and $X'$; in $\overline{D_x}'$ the nodes $c_x(u)$ and $c_x(v)$ are
    identified (filled square).  (c)~Two distinct (saturated) upward
    plane DAGs $\overline{D_y}$ and $\overline{D_y}'$ and their
    optimal topological numberings $Y$ and $Y'$.  (d)~Drawings derived
    from the four different combinations of the topological
    numberings: $\Gamma_1$ and $\Gamma_3$ have sub-optimal areas,
    $\Gamma_2$ has minimum area, and $\Gamma_4$ is non-planar (the
    bold red face is self-crossing).}
  \label{fig:non-turn-regular-compaction}
\end{figure}

\Cref{fig:non-turn-regular-compaction} shows an example of a
non-turn-regular orthogonal representation;
\cref{fig:non-turn-regular-compaction-4} depicts four distinct
drawings resulting from different pairs of upward plane DAGs, each
establishing different $x$- and $y$-relationships between pairs of
kitty corners.  One of the drawings has minimum area; another one is
not planar and therefore discarded.

We now analyze the time complexity of the given algorithm.  Let
$\{f_1, f_2, \dots,$ $ f_h\}$ be the set of faces of $H$, and let
$k_i$ be the number of kitty corners in $f_i$ (i.e., the number of
vertices that are involved in at least one pair of kitty corners).
Since kitty corners are
reflex,
a vertex can be a kitty corner at most
with respect to one face.  Hence, $\sum_{i=1}^h k_i = k$.

Denote by $a_i$ the number of distinct maximal planar augmentations of
$f_i$ with edges that connect pairs of kitty corners. An upper bound
to the value of $a_i$ is the number $c_{k_i}$ of distinct maximal
outerplanar graphs with $k_i$ vertices, which corresponds to the
number of distinct triangulations of a convex polygon with $k_i$
vertices. It is known that $c_{k_i}$ equals the $(k_i-2)$-nd Catalan
number (see, e.g.,~\cite{p-2009}), whose standard estimate is
$c_{k_i-2} \sim \frac{4^{k_i-2}}{(k_i-2)^{3/2}\sqrt{\pi}}$. Therefore,
$a_i \le 4^{k_i}$. Note that, from an algorithmic point of view,
all distinct triangulations of a convex polygon with given set of
vertices can easily be generated with a recursive approach: every time
we guess an edge, it divides an internal face into two faces, and in
each of the two faces we recursively guess the next edge.

Now, for each edge $(u,v)$ of a maximal planar augmentation of $f_i$
such that $\{u,v\}$ is a pair of kitty corners in $H$, we have to
consider three alternative possibilities:
$\overline{D_x}$ has a directed edge $(c_x(u),c_x(v))$, or
$\overline{D_x}$ has a directed edge $(c_x(v),c_x(u))$, or $c_x(u)$
and $c_x(v)$ are identified in $\overline{D_x}$.

Recall that the number of edges of a maximal outerplanar graph on
$k_i$ vertices is $2k_i-3$.
  Hence, for each augmentation of~$H$ that adds $m_i$ edges inside
  each face $f_i$, we have $m_i \leq  2k_i-3$, i.e., $H$ consists of
  $\sum_{i=1}^h m_i \leq \sum_{i=1}^h (2k_i-3) \le 2k$ edges.
Therefore, the number of different possibilities to be considered
for~$\overline{D_{x}}$ is at most $3^{2k}$.  The same holds for
$\overline{D_y}$.
We have to consider at most $a_i \le 4^{k_i}$ augmentations for
face~$f_i$ and at most $\Pi_{i=1}^h 4^{k_i} \le 4^k$ augmentations for
the whole graph~$H$.

The total number of augmentation of~$H$ times the number of
possibilities to extend~$D_x$ to~$\overline{D_{x}}$ times the number
of possibilities to extend~$D_y$ to~$\overline{D_{y}}$ is then
$4^k \cdot 3^{2k} \cdot 3^{2k} = 324^{k} < 2^{8.4k}$.  For each of these
combinations, we augment the two upward plane DAGs~$\overline{D_{x}}$
and~$\overline{D_{y}}$ to plane $st$-graphs using complete saturators
and compute an optimal topological numbering in $O(n)$ time.  Then we
test whether the drawing resulting from the two topological numberings
is planar, which can be done in $O(n \log n)$ time by a sweep-line
algorithm (see, e.g.,~\cite{bo-79,sh-76}).  It follows that the whole
testing algorithm takes
$O(2^{8.4k} n \log n)$ time.
\end{proof}

\section{A Polynomial Kernel for Cycle Graphs}
\label{sec:kernel}

In this section, we give a polynomial kernel for cycle graphs parameterized by the number of kitty corners. To this end, without loss of generality, in this section, we consider \textsc{OC} as a decision problem as follows: Given a planar rectilinear representation $H$ of a connected planar graph $G$ and a rectangle $\cal B$, decide whether there exists an orthogonal drawing of $H$ having bounding box $\cal B$. We prove the following theorem.

\begin{theorem}\label{the:polynomial-kernel}
  Parameterized by the number of kitty corners, \textsc{OC} admits a polynomial kernel on cycle~graphs.
\end{theorem}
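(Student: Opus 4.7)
My plan is to design reduction rules that contract long portions of the cycle whose geometry is not directly affected by kitty corners, producing first a weighted compression with $O(k)$ vertices and then, by unrolling the weights, a polynomial kernel.

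The starting observation is that the $k$ kitty corners partition the boundary of $H$ into at most $k$ arcs, and by definition no arc contains both members of a kitty pair that lies entirely inside it. Hence each arc is locally turn-regular, and by the characterization of Bridgeman et al.\ already used in \cref{se:fpt-kitty}, its internal horizontal and vertical visibility structure is rigid once its endpoints are fixed; in particular, its forced horizontal extent $w_i$ and forced vertical extent $h_i$ are computable in linear time via a single 1D compaction. I would then replace every arc by a canonical \emph{gadget} on $O(1)$ vertices that preserves (i)~the same endpoint interface (directions and relative positions of the boundary edges entering and leaving the arc), (ii)~the same cross-face kitty relationships with corners in other arcs, and (iii)~the forced extents $w_i, h_i$, recorded as integer weights on two of the gadget's edges. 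Applied to every arc, this yields a cycle orthogonal representation on $O(k)$ vertices with integer-weighted edges: the compression asserted in the theorem.

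To upgrade this compression to a polynomial kernel for OC on cycles, I would unroll each weighted edge into a unit-length path. Raw weights can be $\Theta(n)$, so a direct unrolling is not bounded in $k$. The idea is to decompose the final minimum width $W$ (and symmetrically the minimum height $H$) into a kitty-independent baseline $W_0 = \sum_i w_i$ plus a kitty-dependent slack of magnitude only $\mathrm{poly}(k)$, subtract the baselines from the area threshold in the decision version of OC, and truncate every weight at a threshold polynomial in $k$. Unrolling the truncated weights then yields a cycle representation of size $\mathrm{poly}(k)$ equivalent to the original decision instance.

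The main obstacle will be bounding the kitty-dependent slack by $\mathrm{poly}(k)$. Two subtleties have to be handled carefully. First, a kitty pair may consist of corners lying in two different arcs, so each canonical gadget must preserve the arc's full externally-visible profile and not only its local geometry; getting the gadget's interface right is the bulk of the combinatorial work. Second, the width and height are coupled through the area objective, so showing that neither dimension is ever forced beyond its baseline by more than $\mathrm{poly}(k)$ units in any optimal drawing requires a structural analysis of how the $O(k)$ possible kitty-pair configurations propagate constraints through the DAGs $\overline{D_x}$ and $\overline{D_y}$ introduced in \cref{se:fpt-kitty}. I expect this coupled-slack bound to be the main technical hurdle.
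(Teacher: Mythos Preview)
Your proposal has two substantive gaps that would prevent the argument from going through as written.

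First, the claim that an arc between consecutive kitty corners has a single ``forced horizontal extent $w_i$ and forced vertical extent $h_i$'' computable by 1D compaction is not correct, and the appeal to Bridgeman et al.\ is a misapplication. Their rigidity result is about complete turn-regular representations, not about sub-arcs of a face boundary; an arc between kitty corners can still have many different drawings with different extents depending on how the rest of the cycle is drawn. The paper does \emph{not} compress an arc in one shot. Instead it repeatedly applies local rewriting rules to short labeled patterns ($\RCR$, $\CRC$, $\RCCC$, $\CCCR$, $\CRRR$, $\RRRC$, $\RCCR$, $\CRRC$), each time proving safety by arguing that the relevant sub-rectangle must be empty in every drawing---because otherwise some internal vertex of the arc would form a kitty pair, contradicting that the arc lies strictly between consecutive kitty corners. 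That case analysis is the heart of the proof and is not captured by a single ``forced extent'' computation.

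Second, after these rewrites the paper is left with arcs consisting only of $\R$-vertices, and it does \emph{not} reduce each such arc to $O(1)$ vertices. Instead it uses a global counting argument: it defines a matching from $\R$-vertices to $\C$-vertices and shows that long all-$\R$ arcs force their matched $\C$-vertices to be kitty corners on the outer face, so the \emph{total} length of all such arcs is $O(k)$. Your per-arc $O(1)$ gadget would need to handle arbitrarily long spirals of reflex corners, and it is not clear such a gadget exists.

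Finally, the route to a polynomial kernel is simpler than your baseline-plus-slack decomposition (whose additive formula $W_0=\sum_i w_i$ is already wrong, since arcs overlap in $x$-extent). The paper observes that either $k=\Omega(\log n)$, in which case binary-encoded weights are already polynomial in $k$, or $k=O(\log n)$, in which case the FPT algorithm of \cref{th:fpt-kitty} runs in polynomial time and no kernelization is needed; a standard compression-to-kernel argument then finishes.
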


Let $[a,b]$ denote the set of natural numbers
$\{a, a+1, \ldots, b-1, b\}$.  We first design a compression with a
linear number of vertices for the \textsc{OC} problem on cycle graphs,
parameterized by the number of kitty corners. We give a compression
algorithm from the \textsc{OC} problem on cycle graphs to the
\textsc{OC} problem on cycle graphs with additional weight constraints
on edges. We call this problem the \emph{weighted orthogonal
  compaction} problem and define it formally as follows. Given a
planar rectilinear representation $H$ of a connected planar graph~$G$
with integer edge weights and a rectangle $\cal B$  on the integer grid and of polynomial size, the weighted orthogonal compaction problem
(\textsc{Weighted OC} for short) asks whether there exists a drawing of $H$ having bounding box $\cal B$ such that the length of each edge in the drawing is at least the weight of the edge. 
At the end of our compression algorithm, by using \cref{pro:compToKern}, we will show how this gives us a polynomial kernel for \textsc{OC}.

Our compression algorithm is presented as a number of reduction
rules. A reduction rule is a polynomial-time procedure that replaces
an instance $(I,k)$ of a parameterized decision problem $\Pi$ by a new instance
$(I',k')$ of another parameterized decision problem $\Pi'$ where
$|I'| \leq |I|$ and $k' \leq k$. The rule is called \emph {safe} if
$(I,k)$ is a \yes\ instance of $\Pi$ if and only if $(I',k')$ is a
\yes\ instance of $\Pi'$.

Let $G$ be a cycle graph and let $H$ be a rectilinear representation
of $G$. As $G$ is a cycle graph, $H$ has only one internal face,
$f_\inter$, which traverses all the vertices and edges of
$G$. Therefore, for ease of writing in this section, whenever we
talk about a face of $H$, we refer to the face $f_\inter$ unless stated
otherwise. We traverse the face $f_\inter$ in the counterclockwise
direction and define a new labeled directed graph $\dirG$ as
follows. We direct every edge $e=(u,v)$ from $u$ to $v$ such that $u$
comes before $v$ while traversing $e$. We label every edge as
\E, \W, \N, or~\South depending on whether the edge is directed in the east,
west, north, or south direction, respectively. Moreover, we label every
vertex as~\F, \R, or~\C if and only if the vertex is a flat, reflex, or
convex vertex, respectively. See \cref{fig:dirG}. For a vertex~$v$,
let $\lab(v)$ be the label of~$v$ in $\dirG$. We say that $v$ is a
$\lab(v)$-vertex. Similarly, for an edge~$e$, let $\lab(e)$ be the
label of~$e$ in $\dirG$. We say that $e$ is a $\lab(e)$-edge. For any
two vertices $u$ and $v$, let $P_{u,v}$ be the directed path from $u$
to $v$ in $\dirG$.
In the rest of this section, we talk about the graph $\dirG$, unless stated
otherwise. Moreover, when we talk about a drawing $\Gamma$ of $\dirG$, we
refer to a planar (orthogonal) drawing of $H$. Furthermore, when we talk
about a labeling of a path of $\dirG$, we talk about vertex labeling,
unless stated otherwise.

As every vertex $v$ in $G$ has degree $2$, $v$ has exactly one
incoming edge and one outgoing edge in $\dirG$. Let $\pred(v)$ and
$\need(v)$ be the incoming edge to $v$ and the outgoing edge from $v$,
respectively. We call the ordered pair
$(\lab(\pred(v)),\lab(\need(v)))$ the \emph{edge label pair
  associated} to $v$. Observe that $v$ is an $\F$-vertex if and only
if its associated edge label pair belongs to the set
$\{(\E,\E), (\W,\W), (\N,\N), (\South,\South)\}$.  Similarly, $v$ is
an \R-vertex (resp., a \C-vertex) if and only if its associated edge
label pair belongs to the set
$\{(\W,\N), (\N,\E), (\E,\South), (\South,\W)\}$ (resp.,
$\{(\W,\South), (\South,\E), (\E,\N), (\N,\W)\}$).

We denote by $(v,\E)$, $(v,\W)$, $(v,\N)$ and $(v,\South)$ the ray
with the endpoint $v$ and going in the east, west, north and south
direction, respectively. Let each of $\ell_1$ and $\ell_2$ be a ray or
a line segment. We denote by $\intP(\ell_1,\ell_2)$ the intersection
point of $\ell_1$ and $\ell_2$ (if it exists). Given an edge $e$, we
denote by $\len(e)$ the weight of the edge $e$. We initialize all edge weights to be $1$.

\begin{figure}[tb]
  \centering
  \begin{subfigure}{0.31\textwidth}
    \includegraphics[page=1]{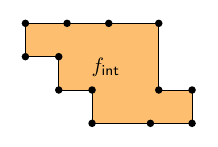}
    \caption{}
    \label{fig:dirG-1}
  \end{subfigure}
  \hfill
  \begin{subfigure}{0.31\textwidth}
    \includegraphics[page=2]{kittyCornersKernel}
    \caption{}
    \label{fig:dirG-2}
  \end{subfigure}
  \hfill
  \begin{subfigure}{0.31\textwidth}
    \includegraphics[page=3]{kittyCornersKernel}
    \caption{}
    \label{fig:dirG-3}
  \end{subfigure}
  \caption{An illustration of (a)~a cycle graph $G$ and the
    corresponding (b)~labeled directed edges and (c)~labeled vertices
    of $\dirG$.}
  \label{fig:dirG}
\end{figure}

Let $\langle c_1, c_2, \ldots, c_k, c_{k+1} = c_1 \rangle$ be the
cyclic counterclockwise order of kitty corners of $H$ in $\dirG$. We look at the path
$P_{c_i, c_{i+1}}$, for every $i \in [1,k]$, and bound the number of
internal vertices of this path as a function of $k$. As $\dirG$ is the
union of all such paths, this, in turn, will bound the size of the
reduced instance as a function of $k$.
We will give a series of reduction rules to reduce the number of
vertices of such paths. We will always apply the rules in the order
they are given. This, in turn, implies that when we apply some
Reduction Rule~$i$ on an instance, no other Reduction Rule~$j$ with
$j < i$ can be further applied to the instance.

We start with a simple reduction rule that reduces a path $P$ on
$\F$-vertices to a weighted edge. Formally, we have the following
rule.

\begin{reduction}\label{rer:flatPath}
  Suppose that there exists a path $P = (u_1, u_2, \ldots, u_t)$ in
  $\dirG$ such that $t \geq 3$ and $\lab(u_i) = \F$ for every
  $i \in [2,k-1]$. Then, delete the path $(u_2, \ldots, u_{t-1})$ and
  connect $u_1$ to $u_t$ by a directed edge from $u_1$ to $u_t$ whose
  weight is the sum of the weights of the edges of $P$.
\end{reduction}

This reduction rule is safe because a path on $\F$-vertices is always
drawn as a straight line path between its end points in any drawing of
$\dirG$ such that the length of the path is at least its weight. So we
can replace the path with a straight-line edge between its end points
whose weight is the sum of the weights of the edges of the path, and
vice-versa.

As we apply the reduction rules in order, due to \cref{rer:flatPath},
for the rest of the section, we assume that $\dirG$ does not have any
$\F$-vertex. So, all the internal vertices of $P_{c_i,c_{i+1}}$, for
any $i \in [1,k]$, are \R- or \C-vertices and none of them are kitty
corners of $H$. Let
$P_{c_i,c_{i+1}} = (v_1=c_i, v_2, \ldots, v_t=c_{i+1})$. If
$t \leq 16$, we do not reduce the path $P_{c_i,c_{i+1}}$. Otherwise,
we reduce the truncated path
$P^{\trun}_{c_i,c_{i+1}} = (v_6, v_7, \ldots, v_{t-5})$, leaving five
buffer vertices on both the sides. These buffer vertices will be
helpful later in our reduction rules. Note that the number of vertices
of $P^{\trun}_{c_i,c_{i+1}}$ is at least $7$ for any $i \in [1,k]$. 
We now give the following observation about $P^{\trun}_{c_i,c_{i+1}}$,
which will be useful in designing our reduction rules.

\begin{observation}\label{obs:patternMatching}
  Let $i \in [1,k]$. The path $P^{\trun}_{c_i,c_{i+1}}$ has either
  only \R-vertices or a subpath whose labeling belongs to the set
  $\{\RCR, \RCCC, \RCCR, \RRRC,$ $\CRC, \CRRR,$ $\CRRC, \CCCR\}$. See
  \cref{fig:patternMatching}.
\end{observation}
\begin{figure}[tb]
  \centering \includegraphics[page=27]{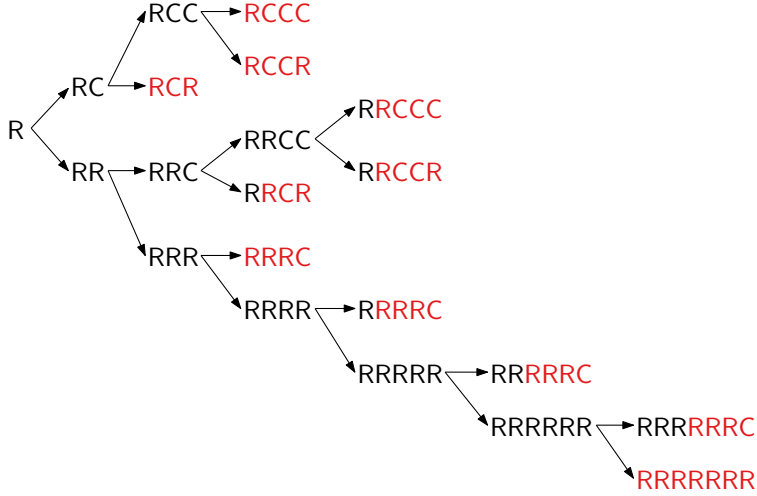}
  \caption{An illustration showing that a path on $7$ or more vertices
    starting with an \R-vertex has either only \R-vertices or a
    subpath (shown in red) whose labeling belongs to the set
    $\{\RCR, \RCCC, \RCCR, \RRRC\}$. The case of a path starting with a \C-vertex is symmetric and thus not shown in the figure.} 
  \label{fig:patternMatching}
\end{figure}

Due to the above observation, we only give reduction rules for the
(sub)paths whose labeling belongs to the set
$\{\RCR, \RCCC, \RCCR, \RRRC,\CRC,$ $\CRRR,$ $\CRRC, \CCCR\}$. Before
we start with our reduction rules, given an \R- or a \C-vertex $v$ and
a drawing $\Gamma$ of $\dirG$, we define two vertices $\nx(v,\Gamma)$ and
$\ny(v,\Gamma)$, called the \emph{nearest y-vertex} and the \emph{nearest
  x-vertex} of $v$ in $\Gamma$, respectively. Note that $\nx(v,\Gamma)$ can be
the same as $\ny(v,\Gamma)$. Intuitively, $\nx(v,\Gamma)$ (resp., $\ny(v,\Gamma)$) is
the vertex of $\dirG$ in the minimum size rectangle bounding
$v, \pred(v)$ and $\need(v)$, which is ``nearest'' to $v$ in the
x-direction (resp., y-direction) and
$\lab(\nx(v,\Gamma)) \in \{\R,\C\} \setminus \{\lab(v)\}$ (resp.,
$\lab(\ny(v,\Gamma)) \in \{\R,\C\} \setminus \{\lab(v)\}$) (see \cref{fig:nearXnY}).

Although the total number of different edge label pair associated with
an \R- or a \C-vertex is $8$, for defining the above two vertices we
need to look at only $4$ pairs of edge label pairs corresponding to
either an \R- or a \C-vertex as the other $4$ are symmetric in the
following sense. If we describe the neighbors of a vertex $v$ as to
the right of $v$ and above $v$, then $v$ is either an \R-vertex with
associated edge label pair $(\W,\N)$ or a \C-vertex with associated
edge label pair $(\South,\E)$. Similarly, $(\N,\E), (\E,\South)$ and
$(\South,\W)$ are symmetric to $(\W,\South), (\N,\W)$ and $(\E,\N)$,
respectively. So, given a \C- or \R-vertex $v$, let
$\lef(v), \rig(v), \abo(v), \bel(v)$ be the vertex which is to the
left, to the right, above and below $v$, if it exists. Observe that
given a \C- or \R-vertex $v$ exactly two out of
$\lef(v), \rig(v), \abo(v), \bel(v)$ exist. In what follows, we give
observations about the existence of $\nx(v,\Gamma)$ and $\ny(v,\Gamma)$ and some
of their properties (see \cref{fig:nearXnY}).

\begin{figure}[tb]
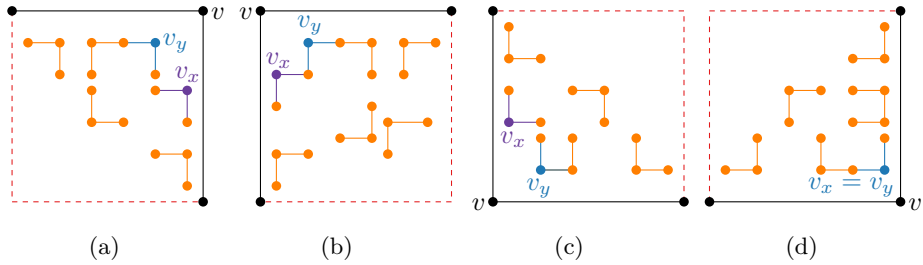

  \centering
  \begin{subfigure}{0.24\textwidth}
    \includegraphics[page=4]{kittyCornersKernel}
    \caption{}
    \label{fig:nearXnY-1}
  \end{subfigure}
  \hfill
  \begin{subfigure}{0.24\textwidth}
    \includegraphics[page=5]{kittyCornersKernel}
    \caption{}
    \label{fig:nearXnY-2}
  \end{subfigure}
  \hfill
  \begin{subfigure}{0.24\textwidth}
    \includegraphics[page=6]{kittyCornersKernel}
    \caption{}
    \label{fig:nearXnY-3}
  \end{subfigure}
  \hfill
  \begin{subfigure}{0.24\textwidth}
    \includegraphics[page=7]{kittyCornersKernel}
    \caption{}
    \label{fig:nearXnY-4}
  \end{subfigure}
  \caption{An illustration for the definitions of $\nx(v, \Gamma)$ (shown by
    $v_x$) and $\ny(v, \Gamma)$ (shown by $v_y$). The corresponding
    definitions are given in
    Observations~\ref{obs:LB},~\ref{obs:RB},~\ref{obs:RA}
    and~\ref{obs:LA} are illustrated in~(a),~(b),~(c) and~(d),
    respectively.}
  \label{fig:nearXnY}
\end{figure}

\begin{observation}\label{obs:LB}
  Let $v$ be an \R- or a \C-vertex of $\dirG$ such that $\lef(v)$ and
  $\bel(v)$ exist. Let $\Gamma$ be a drawing of $\dirG$ such that there
  exists a vertex $u$ for which $(i)$
  $x(u) \in [x(\lef(v)),x(\bel(v))]$, $(ii)$
  $y(u) \in [y(\bel(v)), y(\lef(v))]$, and $(iii)$
  $u, \pred(u), \need(u) \notin \{\lef(v), \bel(v)\}$. Let $S$ be the
  set of all such vertices. Let $\maY = \max\{y(v) \mid v \in S\}$ and
  $\maX = \max\{x(v) \mid v \in S\}$.
	
  Then, $\nx(v,\Gamma)$ is the vertex in $S$ such that $x(\nx(v,\Gamma)) = \maX$
  and $y(\nx(v,\Gamma)) = \max\{y(v) \mid v \in S \wedge x(v) =
  \maX\}$. Similarly, $\ny(v,\Gamma)$ is the vertex in $S$ such that
  $y(\ny(v,\Gamma)) = \maY$ and
  $x(\ny(v,\Gamma)) = \max\{x(v) \mid v \in S \wedge y(v) = \maY\}$. It
  follows from the definition that the neighbors of $\nx(v,\Gamma)$ (resp.,
  $\ny(v,\Gamma)$) are to the left of and below $\nx(v,\Gamma)$ (resp.,
  $\ny(v,\Gamma)$). Moreover, from the planarity of $\Gamma$, it follows that
  $\lab(\nx(v,\Gamma)) \in \{\R,\C\} \setminus \{\lab(v)\}$ and
  $\lab(\ny(v,\Gamma)) \in \{\R,\C\} \setminus \{\lab(v)\}$. See
  \cref{fig:nearXnY-1}.
\end{observation}

As the observation for the remaining cases, i.e. when either $\rig(v)$ and $\bel(v)$ (\cref{fig:nearXnY-2}), or $\rig(v)$ and $\abo(v)$ (\cref{fig:nearXnY-3}), or $\lef(v)$ and $\abo(v)$ (\cref{fig:nearXnY-4}) exist, are similar to \cref{obs:LB}, they are given as Observations~\ref{obs:RB}--\ref{obs:LA} in the \hyperref[app:kernel]{Appendix}. For ease of writing, if drawing is clear from the context, we may refer $\nx(v,\Gamma)$ and $\ny(v,\Gamma)$ as $\nx(v)$ and $\ny(v)$, respectively.

Let $v$ be an \R- or a \C-vertex of $\dirG$. Based on these
observations, we now give a lemma about the existence of kitty corner
pairs $(c,c')$, where $c \in \{\pred(v), \need(v)\}$ and
$c' \in \{\nx(v,\Gamma), \ny(v,\Gamma)\}$.

\begin{lemma}\label{lem:kittyCorner}
  Let $v$ be an \R- or a \C-vertex of $\dirG$. Let $\Gamma$ be a drawing of
  $\dirG$ such that $\nx(v,\Gamma)$ and $\ny(v,\Gamma)$ exist. If
  $\lab(\pred(v)) =$ $ \lab(\nx(v,\Gamma))$ (resp., $\lab(\need(v))$ $=
  \lab(\nx(v,\Gamma)))$, then $(\pred(v),$ $\nx(v,\Gamma))$ and $(\pred(v),$
  $\ny(v,\Gamma))$ (resp., $(\need(v), \nx(v,\Gamma))$ and
  $(\need(v), \ny(v,\Gamma)))$ are kitty corners of $H$.
\end{lemma}

\begin{figure}[tb]
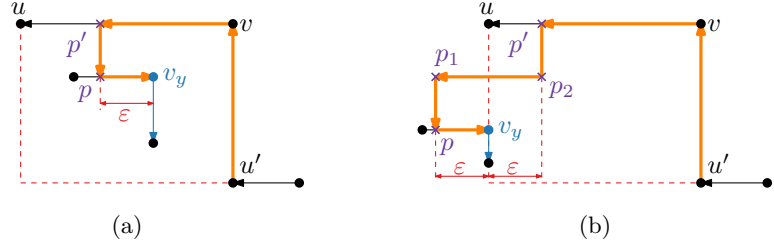

  \centering
  \begin{subfigure}{0.49\textwidth}
    \centering \includegraphics[page=8]{kittyCornersKernel}
    \caption{}
    \label{fig:kittyCorner-1}
  \end{subfigure}
  \hfill
  \begin{subfigure}{0.49\textwidth}
    \centering \includegraphics[page=9]{kittyCornersKernel}
    \caption{}
    \label{fig:kittyCorner-2}
  \end{subfigure}
  \caption{Illustration for the two cases considered in the proof of
    \cref{lem:kittyCorner} based on the relative positions of
    $\need(v)$ (shown by $u$) and $\ny(v,\Gamma)$ (shown by $v_y$). (a) The
    case where $x(\ny(v,\Gamma)) > x(\need(v))$. (b) The case where
    $x(\ny(v,\Gamma)) = x(\need(v))$. The paths from $\pred(v)$ (shown by
    $u'$) to $\ny(v,\Gamma)$, to show that they are a pair of kitty
    corners, are drawn in orange.}
  \label{fig:kittyCorner}
\end{figure}

\begin{proof}
  We prove the lemma for $\pred(v)$ and $\ny(v,\Gamma)$ for a \C-vertex
  $v$. The proofs for the other cases are similar. Without loss of
  generality, we assume that the edge label pair associated with $v$
  is $(\N,\W)$. This, in turn, implies that
  $\lab(\pred(v)) = \lab(\ny(v,\Gamma)) = \R$ and the edge label pair
  associated with $\ny(v,\Gamma)$ is $(\E,\South)$. Moreover, there also
  exists a path from $\ny(v,\Gamma)$ to $v$. We consider two cases based on
  whether $x(\ny(v,\Gamma)) = x(\need(v))$ or $x(\ny(v,\Gamma)) > x(\need(v))$.
	
  Let $\varepsilon \in \mathbb{R}$ such that $0 <\varepsilon < 1$. Let
  $p$ be a point that is immediately to the left of $\ny(v,\Gamma)$ on the
  line segment corresponding to the edge $(\pred(\ny(v,\Gamma)),\ny(v,\Gamma))$ in $\Gamma$,
  such that $x(p) = x(\ny(v,\Gamma)) - \varepsilon$. We first consider the
  case where $x(\ny(v,\Gamma)) > x(\need(v))$ (see
  \cref{fig:kittyCorner-1}). Let $p'$ be the point on the line segment
  corresponding to the edge $(v,\need(v))$ in $\Gamma$ such that
  $x(p') = x(p)$. From the definitions of $\ny(v,\Gamma)$, $p$ and $p'$, we
  have that the line segment $(p,p')$ does not intersect any line segment in $\Gamma$. We
  now focus on the directed cycle graph $\cal C$ formed by the path
  from $\ny(v,\Gamma)$ to $v$ in $\dirG$ followed by the path
  $(v,p',p,\ny(v,\Gamma))$ where $\lab(p) = \lab(p') = \C$,
  $\lab((v,p')) = W$, $\lab((p',p)) = \South$,
  $\lab((p,\ny(v,\Gamma)) = \E$ and the labels of all the other vertices
  and edges remain as in $\dirG$. Observe that by the construction,
  there exists a planar drawing of $\cal C$. As
  $\rot(\pred(v),\ny(v,\Gamma)) = (-1) + 1 + 1 + 1 = 2$,
  $\rot(\ny(v,\Gamma), \pred(v)) = 2$. As the path from $\ny(v,\Gamma)$ to
  $\pred(v)$ in $\cal C$ is the same as that in $\dirG$,
  $(\ny(v,\Gamma), \pred(v))$ is a pair of kitty corners of $H$.
	
  We now consider the case where $x(\ny(v,\Gamma)) = x(\need(v))$ (see
  \cref{fig:kittyCorner-2}). Let $p_1$ be a point such that
  $x(p_1) = x(p)$ and $y(p_1) = y(p) + \varepsilon$. Let $p_2$ be a
  point such that $x(p_2) = x(\ny(v,\Gamma)) + \varepsilon$ and
  $y(p_2) = y(p_1)$. Let $p'$ be the point on the line segment corresponding
  to the edge $(v,\need(v))$ in $\Gamma$ such that $x(p') = x(p_2)$. From
  the definitions of $\ny(v,\Gamma)$, $p$, $p_1$, $p_2$ and $p'$, we have
  that the line segments $(p,p_1)$, $(p_1,p_2)$ and $(p_2,p')$ do not
  intersect any line segment in $\Gamma$. We now focus on the directed cycle graph
  $\cal C$ formed by the path from $\ny(v,\Gamma)$ to $v$ in $\dirG$
  followed by the path $(v,p',p_2,p_1,p,\ny(v,\Gamma))$ where
  $\lab(p) = \lab(p_1) = \lab(p') = \C$, $\lab(p_2) = \R$,
  $\lab((v,p')) = W$, $\lab((p',p_2)) = \South$,
  $\lab((p_2,p_1)) = W$, $\lab((p_1,p)) = \South$,
  $\lab((p,\ny(v,\Gamma)) = \E$, and the labels of all the other vertices
  and edges remain as in $\dirG$. Observe that by the construction,
  there exists a planar drawing of $\cal C$. As
  $\rot(\pred(v),\ny(v,\Gamma)) = (-1) + 1 + 1 + (-1) + 1 + 1 = 2$,
  $\rot(\ny(v,\Gamma), \pred(v)) = 2$. As the path from $\ny(v,\Gamma)$ to
  $\pred(v)$ in $\cal C$ is the same as that in $\dirG$,
  $(\ny(v,\Gamma), \pred(v))$ is a pair of kitty corners of $H$.
\end{proof}

Without loss of generality, in the rest of the section, we assume that
the first edge of the path we want to reduce is a
$\mathsf{W}$-edge. We remark that a strategy based on ``rectangle
cutting'' that somewhat resembles our reduction rules has been
employed by Tamassia~\cite{t-eggmn-87} for a different purpose. We now
give our reduction rules for a path labeled $\RCR$ or $\CRC$.  We give the reduction rule for $\RCR$. As the reduction rule for $\CRC$ is similar, it is given as \cref{rer:CRC} in the \hyperref[app:kernel]{Appendix}. 

\begin{reduction}\label{rer:RCR}
  Suppose that there exists a path $P= (u_1, u_2, u_3,u_4,u_5)$ in
  $\dirG$ such that $\lab(u_2) = \R$, $\lab(u_3) = \C$ and
  $\lab(u_4) = \R$. Then, delete the vertex $u_3$ and the edges
  incident to it from $\dirG$ and add a new path $(u_2, u'_3, u_4)$ to
  $\dirG$. Moreover, assign $\lab(u_2) = \F$, $\lab(u'_3) = \R$,
  $\lab(u_4) = \F$, $\len((u_2,u'_3)) = \len((u_3,u_4))$ and
  $\len((u'_3,u_4)) = \len((u_2,u_3))$. The labels of all the
  remaining vertices and the weights of all the remaining edges stay
  the same. Let $\dirG_\redu$ be the reduced graph. See
  \cref{fig:RCR}.
\end{reduction}

\begin{figure}[tb]
  \centering
  \begin{subfigure}{0.32\textwidth}
    \includegraphics[page=10]{kittyCornersKernel}
    \caption{}
    \label{fig:RCR-1}
  \end{subfigure}
  \hfill
  \begin{subfigure}{0.32\textwidth}
    \includegraphics[page=11]{kittyCornersKernel}
    \caption{}
    \label{fig:RCR-2}
  \end{subfigure}
  \hfill
  \begin{subfigure}{0.32\textwidth}
    \includegraphics[page=12]{kittyCornersKernel}
    \caption{}
    \label{fig:RCR-3}
  \end{subfigure}
  \caption{An illustration for \cref{rer:RCR}. The original path $P$
    is shown in (a) and the reduced path is shown in~(c). (b)~shows
    the vertex $u'_3$, and the projection from the edges of $\dirG$
    that will be deleted when applying the reduction rule to the edges
    of $\dirG_\redu$. The projection is shown by curved purple solid
    and dotted edges from an edge to be deleted to its projected
    edge.}
  \label{fig:RCR}
\end{figure}

\begin{lemma}
  \label{lem:RCR}
  \cref{rer:RCR} is safe.
\end{lemma}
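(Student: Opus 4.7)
The plan is to show that, for any value $A$, $\dirG$ admits a valid drawing of area $A$ if and only if $\dirG_\redu$ does, via a single local swap of a staircase corner. Without loss of generality I take $(u_1,u_2)$ to be a $\W$-edge; the admissible edge label pairs at \R- and \C-vertices then force $(u_2,u_3)=\N$, $(u_3,u_4)=\W$, and $(u_4,u_5)=\N$, so $P$ draws as a NE-staircase whose outer corner is $u_3$. In any drawing $d$ of $\dirG$, I will focus on the axis-aligned rectangle $R$ whose three corners are $u_2$, $u_3$, $u_4$ and whose fourth corner is $q=(x(u_4),y(u_2))$.

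The central step will be to show that the interior of $R$ lies inside the internal face $f_\inter$ and contains no vertex or edge of $\dirG$. The face containment can be read off from the labels: traversing $u_2 \to u_3 \to u_4$ counterclockwise keeps $f_\inter$ on the SW side of the staircase, which is exactly the interior of~$R$. Emptiness is where I need the hypothesis that $G$ is a cycle: every vertex and every edge of $\dirG$ lies on the single boundary curve of $f_\inter$, so nothing can sit strictly inside $f_\inter$, and in particular not inside~$R$.

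The forward direction is then a direct swap. From a drawing $d$ of $\dirG$ of area $A$, I will build a drawing $d'$ of $\dirG_\redu$ by placing $u'_3$ at $q$ and leaving every other vertex fixed. I will verify: (i) the relabeling matches the geometry, i.e., $u_2$ and $u_4$ become flat (passing straight through in the W and N directions, respectively) and $u'_3$ is a reflex corner with incoming $\W$ and outgoing $\N$; (ii) the new weight constraints hold, using $x(u_2)-x(q)=x(u_3)-x(u_4)\ge\len((u_3,u_4))=\len((u_2,u'_3))$ and $y(u_4)-y(q)=y(u_3)-y(u_2)\ge\len((u_2,u_3))=\len((u'_3,u_4))$; (iii) the new staircase occupies the S and W sides of the already-empty $R$, so no crossings are introduced; and (iv) since $q$ sits between the unchanged $u_2$ and $u_4$, the bounding box and hence the area are preserved. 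The converse is symmetric: in any drawing of $\dirG_\redu$ the labels put $R$'s interior in the external face, which is again free of graph elements by the cycle hypothesis, so flipping $u'_3$ back to the NE corner recovers a valid $\dirG$-drawing of the same area.

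The main difficulty I foresee is bookkeeping rather than conceptual: keeping track of which side of the staircase is $f_\inter$ in each of the eight rotational orientations of the starting edge, and verifying that the weight swap prescribed in \cref{rer:RCR} exactly recovers the original distance constraints after the flip. The WLOG reduction to a $\W$-start handles the symmetry uniformly, and the cycle hypothesis does the real work by guaranteeing that the interior of $R$ is empty in both the original and the reduced drawings.
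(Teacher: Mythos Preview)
Your argument has a genuine gap at the ``central step'': the claim that the interior of $R$ lies entirely inside $f_\inter$ (and hence is free of graph elements) does not follow from the local observation that $f_\inter$ sits on the SW side of the staircase $u_2\to u_3\to u_4$. That observation only tells you which face is adjacent to the segments $u_2u_3$ and $u_3u_4$; it says nothing about the rest of $R$. The remainder of the cycle can perfectly well re-enter $R$ through its S or W side. Concretely, take the $10$-vertex cycle
\[
(4,0)\!\to\!(3,0)\!\to\!(3,3)\!\to\!(0,3)\!\to\!(0,4)\!\to\!(-1,4)\!\to\!(-1,1)\!\to\!(1,1)\!\to\!(1,-1)\!\to\!(4,-1)\!\to\!(4,0),
\]
with $u_1=(4,0)$, $u_2=(3,0)$, $u_3=(3,3)$, $u_4=(0,3)$, $u_5=(0,4)$. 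This is a planar orthogonal drawing, the traversal is counterclockwise (shoelace area is positive, and indeed $\#\C-\#\R=4$), the labels at $u_2,u_3,u_4$ are $\R,\C,\R$ as required, and the vertex $(1,1)$ sits strictly inside $R=[0,3]\times[0,3]$. So the emptiness of $R$ is \emph{not} a consequence of $G$ being a cycle; your flip would put $u'_3$ at $(0,0)$ and the new edge $u_2u'_3$ would cross the segment through $(1,1)$.

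What the paper actually uses is the implicit hypothesis that the $\RCR$ pattern is found inside the truncated path $P^{\trun}_{c_i,c_{i+1}}$, so $u_2,u_3,u_4$ are not kitty corners. The paper then argues via \cref{obs:LB} and \cref{lem:kittyCorner}: if some vertex $w$ lay in $R$, then $\ny(u_3,d)$ would exist and form a kitty-corner pair with $u_2$, contradicting that hypothesis. (In the example above one computes $\rot(u_2,(1,1))=2$, so $u_2$ is indeed a kitty corner there.) The reverse direction is handled analogously, again invoking the kitty-corner machinery rather than a bare face-containment argument. Your proposal never uses the no-kitty-corner hypothesis, and without it the lemma is simply false, as the example shows.
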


\begin{figure}
  \centering
  \begin{subfigure}{0.48\textwidth}
    \centering \includegraphics[page=13]{kittyCornersKernel}
    \caption{}
    \label{fig:proofRCR-1}
  \end{subfigure}
  \hfill
  \begin{subfigure}{0.48\textwidth}
    \centering \includegraphics[page=14]{kittyCornersKernel}
    \caption{}
    \label{fig:proofRCR-2}
  \end{subfigure}
  \caption{An illustration for the case considered in the reverse
    direction proof of \cref{lem:RCR}. A path from $u_2$ to
    $\ny(u'_3,\Gamma)$ (shown by $z$), to show that they are a pair of
    kitty corners, are drawn in orange.}
  \label{fig:proofRCR}
\end{figure}

\begin{proof}
  To prove that \cref{rer:RCR} is safe, we need to prove that there
  exists a planar drawing $\Gamma$ of $\dirG$ if and only if there exists a
  planar drawing $\Gamma_\redu$ of $\dirG_\redu$ such that both $\Gamma$ and
  $\Gamma_\redu$ have the same bounding box. Recall that, without loss of
  generality, we assume that $\lab((u_1,u_2)) = W$.

  \smallskip

  $(\Rightarrow)$ Let $\Gamma$ be a planar drawing of $\dirG$ having
  bounding box $\cal B$. We get a drawing $\Gamma_\redu$ of $\dirG_\redu$
  having the same bounding box $\cal B$ as follows. For every vertex
  $v \in V(\dirG_\redu) \setminus \{u'_3\}$, $x(v)$ and $y(v)$ in
  $\Gamma_\redu$ are the same as those in $\Gamma$. For $u'_3$, we assign $x(u'_3) =
  x(u_4)$ and $y(u'_3) = y(u_2)$. Observe that $\cal B$ is a bounding
  box of $\Gamma_\redu$. Moreover, if there does not exist any vertex $w
  \in V(\dirG)$ such that $x(w) \in [x(u_4),x(u_2)]$ and $y(w) \in
  [y(u_2),y(u_4)]$ in $\Gamma$, then $\Gamma_\redu$ is a planar drawing of
  $\dirG_\redu$. Assume for contradiction there exists such a
  $w$. Observe that $\pred(w), \need(w) \notin \{u_2,u_4\}$. Then, by
  \cref{obs:LB} and \cref{lem:kittyCorner}, $\ny(u_3)$ exists and
  $(\ny(u_3),u_2)$ is a kitty corner pair of $H$, a contradiction to
  the fact that the path that we are reducing does not contain any
  kitty corner (as it is a path between two consecutive kitty corner vertices).
  
  \smallskip

  $(\Leftarrow)$ Let $\Gamma_\redu$ be a planar drawing of $\dirG_\redu$
  having bounding box $\cal B$. We get a drawing $\Gamma$ of $\dirG$ having
  the same bounding box $\cal B$ as follows. For every vertex
  $v \in V(\dirG) \setminus \{u_3\}$, $x(v)$ and $y(v)$ in $\Gamma$ are the
  same as those in $\Gamma_\redu$. For $u_3$, we assign $x(u_3) = x(u_2)$
  and $y(u_3) = y(u_4)$. Observe that $\cal B$ is a bounding box of
  $\Gamma$. Moreover, if there does not exist any vertex
  $w \in V(\dirG_\redu)$ such that $x(w) \in [x(u_4),x(u_2)]$ and
  $y(w) \in [y(u_2),y(u_4)]$ in $\Gamma_\redu$, then $\Gamma$ is a planar
  drawing of $\dirG$. Assume for contradiction there exists such a
  $w$. Observe that $\pred(w), \need(w) \notin \{u_2,u_4\}$. Then, by
  \cref{obs:RA}, $\ny(u'_3)$ exists and $\lab(\ny(u'_3)) = \C$. Let
  $\Gamma'$ be a drawing of $\dirG$ obtained from $\Gamma_\redu$ as follows. Let
  $\varepsilon \in \mathbb{R}$ such that $0 <\varepsilon < 1$. For
  every vertex $v \in V(\dirG) \setminus \{u_2,u_3,u_4\}$, $x(v)$ and
  $y(v)$ in $\Gamma'$ are the same as those in $\Gamma_\redu$. For $u_2$, we assign
  $x(u_2) = x(\ny(u'_3)) - \varepsilon$ and $y(u_2) = y(u'_3)$. For
  $u_4$, we assign $x(u_4) = x(u'_3)$ and
  $y(u_4) = y(\ny(u'_3)) - \varepsilon$. For $u_3$, we assign
  $x(u_3) = x(u_2)$ and $y(u_3) = y(u_4)$. See
  \cref{fig:proofRCR-1}. 
  Observe that $\Gamma'$ is a planar drawing of
  $\dirG$.
  Similarly to the proof of
  \cref{lem:kittyCorner}, we can prove that $u_3$ and $\ny(u'_3)$
  (which is also a vertex of $\dirG$) make a pair of kitty corner on
  the outer face of $H$ (see \cref{fig:proofRCR-2}), a contradiction
  to the fact that the path that we are reducing does not contain any
  kitty corner.
\end{proof}

We will next give our reduction rules for paths labeled $\RCCR$,
$\CRRC$, $\RCCC$, $\CCCR$, $\CRRR$ and $\RRRC$.  Towards that, we give
some properties of the drawing of these paths in any drawing of
$\dirG$. We first consider a path labeled $\RCCR$.  

\begin{figure}[tb]
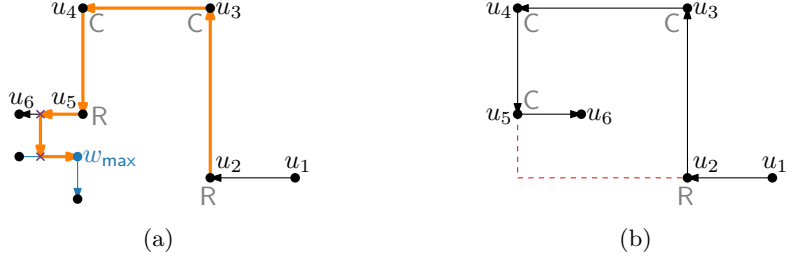

  \centering
  \begin{subfigure}{0.48\textwidth}
    \centering \includegraphics[page=15]{kittyCornersKernel}
    \caption{}
    \label{fig:length-1}
  \end{subfigure}
  \hfill
  \begin{subfigure}{0.48\textwidth}
    \centering \includegraphics[page=16]{kittyCornersKernel}
    \caption{}
    \label{fig:length-2}
  \end{subfigure}
  \caption{An illustration for cases considered in the proofs of
    (a)~\cref{lem:lenRCCR} and (b)~\cref{lem:lenRCCC}. A path from
    $u_2$ to $w_\ma$, to show that they are a pair of kitty corners,
    is drawn in orange.}
  \label{fig:length}
\end{figure}

\begin{lemma}\label{lem:lenRCCR}
  Suppose that there exists a path $P = (u_1,u_2,u_3,u_4,u_5,u_6)$ in
  $\dirG$ such that $\lab(u_2) = \lab(u_5) = \R$ and
  $\lab(u_3) = \lab(u_4) = \C$. For any drawing $\Gamma$ of $\dirG$, there
  exists another drawing $\Gamma'$ of $\dirG$ whose bounding box is the
  same as that of $\Gamma$ and such that $y(u_2) = y(u_5)$ in $\Gamma'$.
\end{lemma}

\begin{proof}
  Let $\Gamma$ be a drawing of $\dirG$. Assume that either
  $y(u_2) < y(u_5)$ or $y(u_5) < y(u_2)$ in $\Gamma$, else we are done (set
  $\Gamma' = \Gamma$). Specifically, we assume that $y(u_2) < y(u_5)$, since the
  case when $y(u_5) < y(u_2)$ is symmetric.
	
  Observe that we can assume that there exists a vertex $w$ such that
  $x(w) \in [x(u_6), x(u_5)]$ and $y(w) = y(u_5) - 1$, since otherwise
  we can shift $u_5$ and $u_6$ downwards. As $y(u_2) < y(u_5)$,
  $y(u_2) \leq y(w)$. Let $S$ be the set of all such vertices $w$. Let
  $w_\ma$ be the vertex in $S$ such that
  $x(w_\ma) = \max\{x(w) \mid w \in S\}$. Observe that by
  \cref{obs:LB} and \cref{lem:kittyCorner}, if there exists a vertex
  $z$ such that $x(z) \in [x(u_5), x(u_2)]$,
  $y(z) \in [y(u_2), y(u_3)]$ and
  $\pred(z), \need(z) \notin \{u_2,u_4\}$, then $u_2$ is a kitty
  corner of $H$. Therefore, there cannot be any such vertex $z$, as
  the path that we are reducing does not contain any kitty corner. Due
  to this, the neighbors of $w_\ma$ are to the left of and below
  $w_\ma$. Moreover, from the planarity of $\Gamma$, we get that
  $\lab(w_\ma) = \R$. Similarly to the proof of
  \cref{lem:kittyCorner}, we can prove that $(w_\ma, u_2)$ is a pair
  of kitty corners of $H$, again a contradiction to the fact that the
  path that we are reducing does not contain any kitty corner. See
  \cref{fig:length-1}. This, in turn, implies that $y(u_5) = y(u_2)$
  in $\Gamma$ or we can get another drawing $\Gamma'$ of $\dirG$ where we can
  shift $u_5$ and $u_6$ downwards such that $y(u_5) = y(u_2)$.
\end{proof}

A lemma similar to \cref{lem:lenRCCR} for a path labeled $\CRRC$ is given as \cref{lem:lenCRRC} in the \hyperref[app:kernel]{Appendix}.

We now give the following lemma for a path labeled $\RCCC$ or $\CCCR$.

\begin{lemma}\label{lem:lenRCCC}
  Suppose that there exists a path $P = (u_1,u_2,u_3,u_4,u_5,u_6)$ in
  $\dirG$ such that $\lab(u_2) = \R$ and
  $\lab(u_3) = \lab(u_4) = \lab(u_5) = \C$ (resp.,
  $\lab(u_2) = \lab(u_3) = \lab(u_4) =\C$ and $\lab(u_5) = \R$). For
  any drawing $\Gamma$ of $\dirG$, $y(u_5) < y(u_2)$ in $\Gamma$.
\end{lemma}

\begin{proof}
  We will prove the lemma for the $\RCCC$ case. The proof for the
  other case is symmetric. Let $\Gamma$ be a drawing of $\dirG$. Assume for
  contradiction that $y(u_2) \leq y(u_5)$. Then, $x(u_6) = x(u_5) + 1$
  and $y(u_6) \in [y(u_2), y(u_4)]$. Moreover,
  $\pred(u_6), \need(u_6) \notin \{u_2,u_4\}$. Then, by \cref{obs:LB}
  and \cref{lem:kittyCorner}, $\ny(u_3)$ exists and $(\ny(u_3),$
  $u_2)$ is a kitty corner pair of $H$, a contradiction to the fact
  that the path that we are reducing does not contain any kitty
  corner. See \cref{fig:length-2}. This implies that $y(u_5) < y(u_2)$
  in $\Gamma$.
\end{proof}

A lemma similar to \cref{lem:lenRCCC} for paths labeled $\CRRR$ or $\RRRC$ is given as \cref{lem:lenCRRR} in the \hyperref[app:kernel]{Appendix}.

We now give the reduction rules for paths labeled $\RCCC$, $\CCCR$,
$\CRRR$ and $\RRRC$, followed by those for paths labeled $\RCCR$ and
$\CRRC$.  We start by giving the reduction rule for $\RCCC$.  Recall
that, due to previous reduction rules,
there is no $\F$-vertex or a path labeled $\RCR$ or
$\CRC$ in $\dirG$.

\begin{reduction}\label{rer:RCCC}
  Suppose that there exists a path $P= (u_1, u_2, u_3,u_4,u_5,u_6)$ in
  $\dirG$ such that $\lab(u_2) = \R$ and
  $\lab(u_3) = \lab(u_4) = \lab(u_5) = \C$. Then, delete the vertices
  $u_3$ and $u_4$ and the edges incident to them from $\dirG$ and add
  a new path $(u_2, u'_3, u_5)$ to $\dirG$. Moreover, assign
  $\lab(u_2) = \F$, $\lab(u'_3) = \C$,
  $\len((u_2,u'_3)) = \len((u_3,u_4))$,
  $\len((\pred(u_1),u_1)) = \max\{\len((u_2,u_3)), \len((\pred(u_1),$ $u_1))\}$, and
  $\len((u'_3,u_5)) = \max\{$ $\len((u_4,$
  $u_5)) - \len((u_2,u_3)), 1\}$. The labels of all the remaining
  vertices and the weights of all the remaining edges stay the
  same. Let $\dirG_\redu$ be the reduced graph. See \cref{fig:RCCC}.
\end{reduction}

\begin{figure}[tb]
  \centering
  \begin{subfigure}{0.32\textwidth}
    \includegraphics[page=17]{kittyCornersKernel}
    \caption{}
    \label{fig:RCCC-1}
  \end{subfigure}
  \hfill
  \begin{subfigure}{0.32\textwidth}
    \includegraphics[page=18]{kittyCornersKernel}
    \caption{}
    \label{fig:RCCC-2}
  \end{subfigure}
  \hfill
  \begin{subfigure}{0.32\textwidth}
    \includegraphics[page=19]{kittyCornersKernel}
    \caption{}
    \label{fig:RCCC-3}
  \end{subfigure}
  \caption{An illustration for \cref{rer:RCCC}. The original path $P$
    is shown in~(a) and the reduced path is shown in~(c). (b)~shows
    the vertex~$u'_3$, and the projection from the edges of $\dirG$
    that will be deleted when applying the reduction rule to the edges
    of $\dirG_\redu$, represented by curved purple edges from an edge
    to be deleted to its projected edge.}
  \label{fig:RCCC}
\end{figure}

\begin{lemma}
  \label{lem:RCCC}
  \cref{rer:RCCC} is safe.
\end{lemma}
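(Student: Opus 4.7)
The plan is to mirror the two-direction argument from the proof of \cref{lem:RCR}, now adapted to the longer \RCCC pattern and the less-symmetric reweighting. Using the convention that $(u_1,u_2)$ is labeled \W, the label pair associated with $u_2$ forces the orientations along the removed subpath: $(u_2,u_3)$ is \N, $(u_3,u_4)$ is \W, $(u_4,u_5)$ is \South, and $(u_5,u_6)$ is \E. By \cref{lem:lenRCCC} every drawing of $\dirG$ satisfies $y(u_5)<y(u_2)$, so $\len((u_4,u_5))>\len((u_2,u_3))$; consequently the \RCCC bump is a rectangular L that pokes up to height $y(u_2)+\len((u_2,u_3))$ and then drops down strictly below $y(u_2)$ before turning east at $u_5$.

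For the forward direction ($\Rightarrow$), given a planar drawing $d$ of $\dirG$ with bounding box $\cal B$, I copy every coordinate outside $\{u_3,u_4\}$ and place $u'_3$ at $(x(u_4),y(u_2))$. Since $\lab(u_2)=\F$ and $\lab(u'_3)=\C$ in $\dirG_\redu$, the new segments $(u_2,u'_3)$ and $(u'_3,u_5)$ take the correct orientations (\W and \South) and precisely the lengths prescribed by the rule. To establish planarity it suffices to rule out a vertex $w$ with $\pred(w),\need(w)\notin\{u_2,u_5\}$ inside the axis-aligned rectangle bounded by $u'_3$ and $u_5$ or obstructing the segment $(u_2,u'_3)$. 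Applying \cref{obs:LB} at the $u_2$-$u'_3$ corner and \cref{obs:RA} at the $u_5$-$u'_3$ corner, any such intruder $w$ produces, via \cref{lem:kittyCorner}, a kitty-corner partner of one of $u_2,u_3,u_4,u_5$, contradicting the absence of internal kitty corners on $P_{c_i,c_{i+1}}$. The bounding box is preserved because the topmost row occupied by $u_3,u_4$ in $d$ was $y(u_2)+\len((u_2,u_3))$, and the reweighting $\len((\pred(u_1),u_1)):=\max\{\len((u_2,u_3)),\len((\pred(u_1),u_1))\}$ forces the vertical edge $(\pred(u_1),u_1)$ of $\dirG_\redu$ (vertical because $u_1$ has $\need(u_1)$ labeled \W and is an \R- or \C-vertex after \cref{rer:flatPath}) to reach this same height in any drawing realizing $\cal B$.

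The reverse direction ($\Leftarrow$) is dual: from a drawing $d_\redu$ of $\dirG_\redu$ with bounding box $\cal B$, reinsert $u_3$ at $(x(u_2),y(u_2)+\len((u_2,u_3)))$ and $u_4$ at $(x(u'_3),y(u_2)+\len((u_2,u_3)))$, keeping every other coordinate. A vertex that could collide with the reinserted L is again ruled out by the symmetric observation (\cref{obs:LA} or \cref{obs:RB}) combined with \cref{lem:kittyCorner}, so both reconstructed drawings are planar with bounding box $\cal B$. The main obstacle I anticipate is the case analysis around the $\max$ that appears in the weight of $(u'_3,u_5)$: I must verify that the boundary case $\len((u_4,u_5))=\len((u_2,u_3))+1$ is handled correctly so that the reinserted $u_3$ and $u_4$ sit strictly above $u_5$ and the segment $(u_3,u_4)$ does not overlap any edge on the east-going continuation past $u_5$. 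Once this corner case is dispatched, everything else is coordinate bookkeeping plus the same kitty-corner contradictions used to close the proofs of \cref{lem:RCR} and \cref{lem:CRC}.
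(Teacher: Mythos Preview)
Your overall skeleton---place $u'_3$ at $(x(u_4),y(u_2))$ in the forward direction, reinsert $u_3,u_4$ at height $y(u_2)+\len((u_2,u_3))$ in the reverse direction, and rule out intruding vertices via the ``nearest-vertex'' observations together with \cref{lem:kittyCorner}---matches the paper. But there is a genuine structural gap in how you handle the reweighted edge $(\pred(u_1),u_1)$.

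First, \cref{lem:lenRCCC} says that the \emph{drawn lengths} satisfy $y(u_5)<y(u_2)$; it does not let you conclude that the \emph{weights} satisfy $\len((u_4,u_5))>\len((u_2,u_3))$. You conflate the two, and this error propagates into your discussion of the $\max$ in $\len((u'_3,u_5))$.

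More importantly, you never pin down the label of $u_1$. You write that $u_1$ ``is an \R- or \C-vertex,'' but if $\lab(u_1)=\C$ then the incoming edge $(\pred(u_1),u_1)$ goes \emph{north}, i.e., $\pred(u_1)$ lies \emph{below} $u_1$; in that case increasing $\len((\pred(u_1),u_1))$ does nothing to guarantee that anything in $\dirG_\redu$ reaches the old height $y(u_3)$, and neither direction of your bounding-box argument goes through. The paper avoids this by using the order of the reduction rules: since \cref{rer:CRC} has already been applied, $u_1=\C$ would give a $\CRC$ pattern $u_1u_2u_3$, so necessarily $\lab(u_1)=\R$. Then the paper does a further case split on $\lab(\pred(u_1))$ and invokes \cref{lem:lenCRRC} (possibly after a $180^\circ$ rotation) to obtain $y(u_3)\le y(\pred(u_1))$ in $d$; this inequality is exactly what certifies that the drawn length of $(\pred(u_1),u_1)$ in the copied drawing already meets the new weight $\max\{\len((u_2,u_3)),\len((\pred(u_1),u_1))\}$. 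Your proposal has no analogue of this step---you instead say the reweighting ``forces'' the edge to reach the required height, but in the forward direction we are \emph{verifying} that the copied coordinates satisfy the new constraint, not imposing it.

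A smaller point: in the forward direction the region you must clear is the rectangle with corners $u_2,u_3,u_4,u'_3$ (this is where the two drawings differ), not ``the axis-aligned rectangle bounded by $u'_3$ and $u_5$,'' which is degenerate since $x(u'_3)=x(u_5)$. The paper checks precisely the former rectangle via \cref{obs:LB} applied at $u_3$.
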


\begin{proof}
  To prove that \cref{rer:RCCC} is safe, we need to prove that there
  exists a planar drawing $\Gamma$ of $\dirG$ if and only if there exists a
  planar drawing $\Gamma_\redu$ of $\dirG_\redu$ such that $\Gamma$ and
  $\Gamma_\redu$ have the same bounding box. Recall that, without loss of
  generality, we assume that $\lab((u_1,u_2)) = W$.
	
  \smallskip $(\Rightarrow)$ Let $\Gamma$ be a planar drawing of $\dirG$
  having bounding box $\cal B$. By \cref{lem:lenRCCC}, we get that
  $y(u_5) < y(u_2)$ in $\Gamma$. As there is neither an $\F$-vertex nor a
  path labeled $\CRC$ in $\dirG$, we get that $\lab(u_1) = \R$ and
  $\lab(\pred(u_1))$ is either $\R$ or $\C$. If
  $\lab(\pred(u_1)) = \C$, by \cref{lem:lenCRRC}, we get that there
  exists a drawing $\Gamma'$ of $\dirG$ (which may be the same as $\Gamma$)
  whose bounding box is $\cal B$ such that $y(\pred(u_1)) = y(u_3)$ in
  $\Gamma'$. Otherwise, we get that
  $\lab((\pred(\pred(u_1)),\pred(u_1))) = \E$. So, we can apply
  \cref{lem:lenCRRC} after rotating the drawing $\Gamma$ by
  $180^{\circ}$. This, in turn, implies that $y(u_3) < y(\pred(u_1))$
  in $\Gamma$. So, without loss of generality, we can assume that
  $y(u_5) < y(u_2)$ and $y(u_3) \leq y(\pred(u_1))$.
	
  We get a drawing $\Gamma_\redu$ of $\dirG_\redu$ having the same bounding
  box $\cal B$ as follows. For every vertex
  $v \in V(\dirG_\redu) \setminus \{u'_3\}$, $x(v)$ and $y(v)$ in
  $\Gamma_\redu$ are the same as those in $\Gamma$. For $u'_3$, we assign
  $x(u'_3) = x(u_4)$ and $y(u'_3) = y(u_2)$. Observe that as
  $y(u_5) < y(u_2)$ and $y(u'_3) = y(u_2)$,
  $y(u'_3) - y(u_5) \geq 1$. Moreover, $y(u_3) \leq y(\pred(u_1))$. This implies that the weight constraints
  of the edges $(\pred(u_1),u_1)$ and $(u'_3,u_5)$ in $\dirG_\redu$
  are satisfied by $\Gamma_\redu$ and $\cal B$ is a bounding box of
  $\Gamma_\redu$. Moreover, if there does not exist any vertex
  $w \in V(\dirG)$ such that $x(w) \in [x(u_4),x(u_2)]$ and
  $y(w) \in [y(u_2),y(u_4)]$ in $\Gamma$, $\Gamma_\redu$ is a planar drawing of
  $\dirG_\redu$. Assume for contradiction there exists such a
  $w$. Observe that $\pred(w), \need(w) \notin \{u_2,u_4\}$. Then, by
  \cref{obs:LB} and \cref{lem:kittyCorner}, $\ny(u_3)$ exists and
  $(\ny(u_3),u_2)$ is a kitty corner pair of $H$, a contradiction to
  the fact that the path that we are reducing does not contain any
  kitty corner.

  \begin{figure}[tb]
  \centering
  \begin{subfigure}{0.49\textwidth}
    \centering \includegraphics[page=20]{kittyCornersKernel}
    \caption{}
    \label{fig:proofRCCC-1}
  \end{subfigure}
  \hfill
  \begin{subfigure}{0.49\textwidth}
    \centering \includegraphics[page=21]{kittyCornersKernel}
    \caption{}
    \label{fig:proofRCCC-2}
  \end{subfigure}
  \caption{An illustration for the case considered in the reverse
    direction of the proof of \cref{lem:RCCC}. A path from $u_4$ to
    $\ny(u_1,\Gamma)$ (shown by $z$), to show that they are a pair of kitty
    corners, is drawn in orange.}
  \label{fig:proofRCCC}
\end{figure}
	
  \smallskip $(\Leftarrow)$ Let $\Gamma_\redu$ be a planar drawing of
  $\dirG_\redu$ having bounding box $\cal B$. We get a drawing $\Gamma$ of
  $\dirG$ having the same bounding box $\cal B$ as follows. For every
  vertex $v \in V(\dirG) \setminus \{u_3,u_4\}$, $x(v)$ and $y(v)$ in
  $\Gamma$ are the same as those in $\Gamma_\redu$. For $u_3$ and $u_4$, we assign
  $x(u_3) = x(u_2)$, $x(u_4) = x(u'_3)$ and
  $y(u_3) = y(u_4) = y(u_2) + \len((u_2,u_3))$. Observe that, as given in \cref{rer:RCCC}, the weight of the edge $(\pred(u_1),u_1)$ is the maximum of 
   $\len((u_2,u_3))$ and $\len((\pred(u_1), u_1))$, $\cal B$ is a bounding box of $\Gamma$.  Moreover, if there
  does not exist any vertex $w \in V(\dirG_\redu)$ such that
  $x(w) \in [x(u'_3),x(u_2)]$ and
  $y(w) \in [y(u_2),y(u_2) + \len((u_2,u_3))]$ in $\Gamma_\redu$, then $\Gamma$
  is a planar drawing of $\dirG$. Assume for contradiction there
  exists such a $w$. Observe that
  $\pred(w), \need(w) \notin \{\pred(u_1),u'_3\}$. Without loss of
  generality, for \cref{obs:LA}, we can assume that $u'_3$ is the left
  neighbor of $u_1$ as $\lab(u_2) = \F$ in $\dirG_\redu$. Then, by
  \cref{obs:LA}, $\ny(u_1)$ exists and $\lab(\ny(u_1)) = \C$. Let $\Gamma'$
  be a drawing of $\dirG$ obtained from $\Gamma_\redu$ as follows.  Let
  $\varepsilon \in \mathbb{R}$ such that $0 <\varepsilon < 1$. For
  every vertex $v \in V(\dirG) \setminus \{u_3,u_4\}$, $x(v)$ and
  $y(v)$ in $\Gamma'$ are the same as those in $\Gamma_\redu$. For $u_3$ and $u_4$,
  we assign $x(u_3) = x(u_2)$, $x(u_4) = x(u'_3)$ and
  $y(u_3) = y(u_4) = y(\ny(u_1)) - \varepsilon$. See
  \cref{fig:proofRCCC-1}. Observe that by the definition of
  $\ny(u_1)$, there exists no other vertex of $\dirG_\redu$ with
  x-coordinate in $[x(u'_3),x(u_1)]$ and y-coordinate smaller that
  that of $\ny(u_1)$ in $\Gamma_\redu$. Therefore, $\Gamma'$ is a planar drawing
  of $\dirG$.
  Similarly to the proof of
  \cref{lem:kittyCorner}, we can prove that $u_4$ and $\ny(u_1)$
  (which is also a vertex of $\dirG$) make a pair of kitty corner on
  the outer face of $H$ (see \cref{fig:proofRCCC-2}), a contradiction
  to the fact that the path that we are reducing does not contain any
  kitty corner.
\end{proof}

As the reduction rules for paths labeled $\CCCR$, $\CRRR$, and $\RRRC$ are similar, they are provided as Reduction Rules~\ref{rer:CCCR}--\ref{rer:RRRC} in the \hyperref[app:kernel]{Appendix}.

We now give the reduction rules for paths labeled $\RCCR$ or
$\CRRC$. We start by giving the reduction rule for $\RCCR$. Recall
that, due to previous reduction rules,
there is neither an $\F$-vertex nor a path labeled $\RCR$, $\CRC$,
$\RCCC$, $\CCCR$, $\CRRR$, or $\RRRC$ in $\dirG$.

\begin{reduction}\label{rer:RCCR}
  Suppose that there exists a path $P = (u_1,u_2,u_3,u_4,u_5,u_6)$ in
  $\dirG$ such that $\lab(u_2) = \lab(u_5) = \R$ and
  $\lab(u_3) = \lab(u_4) = \C$. Then, delete the vertices $u_3$ and
  $u_4$ and the edges incident to them from $\dirG$ and add a new edge
  $(u_2,u_5)$ to $\dirG$. Moreover, assign
  $\lab(u_2) = \lab(u_5) = \F$, $\len((u_2,u_5)) = \len((u_3,u_4))$
  and $\len((\pred(u_1),$ $u_1)) = \max\{\len((u_2,$
  $u_3)), \len((\pred(u_1),$ $u_1))\}$. The labels of all the
  remaining vertices and the weights of all the remaining edges stay
  the same. Let $\dirG_\redu$ be the reduced graph. See
  \cref{fig:RCCR}.
\end{reduction}

\begin{figure}[tb]
  \centering
  \begin{subfigure}{0.32\textwidth}
    \includegraphics[page=22]{kittyCornersKernel}
    \caption{}
    \label{fig:RCCR-1}
  \end{subfigure}
  \hfill
  \begin{subfigure}{0.32\textwidth}
    \includegraphics[page=23]{kittyCornersKernel}
    \caption{}
    \label{fig:RCCR-2}
  \end{subfigure}
  \hfill
  \begin{subfigure}{0.32\textwidth}
    \includegraphics[page=24]{kittyCornersKernel}
    \caption{}
    \label{fig:RCCR-3}
  \end{subfigure}
  \caption{An illustration for \cref{rer:RCCR}. The original path $P$
    is shown in~(a) and the reduced path is shown in~(c). (b)~shows
    the projection from the edges of $\dirG$ that will be deleted when
    applying the reduction rule to the edges of $\dirG_\redu$,
    represented by curved purple edges from an edge to be deleted to
    its projected edge.}
  \label{fig:RCCR}
\end{figure}

\begin{lemma}
  \label{lem:RCCR}
  \cref{rer:RCCR} is safe.
\end{lemma}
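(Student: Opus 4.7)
The plan is to mirror the proofs of \cref{lem:RCR} and \cref{lem:RCCC}: construct the reverse mapping between drawings of $\dirG$ and $\dirG_\redu$ explicitly, verify the weight constraints and the bounding box, and rule out obstructions by a kitty-corner argument. Under the convention $\lab((u_1,u_2))=\W$, the pattern $\RCCR$ traces a rectangular ``notch'' sticking upward from the horizontal segment at height $y(u_2)$, of width $\len((u_3,u_4))$ and height $\len((u_2,u_3))=\len((u_4,u_5))$. The reduction removes the notch and replaces it by the single horizontal $\W$-edge $(u_2,u_5)$, while the boosted weight on $(\pred(u_1),u_1)$ (which is vertical, since $u_1$ must be an $\R$- or $\C$-vertex after applying Reduction Rules~\ref{rer:flatPath}--\ref{rer:RRRC}) compensates for the vertical extent that the notch used to provide in $\mathcal{B}$.

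For the forward direction, given a planar drawing $d$ of $\dirG$ with bounding box $\mathcal{B}$, I first invoke \cref{lem:lenRCCR} to assume $y(u_2)=y(u_5)$ in $d$, so the notch is a true axis-aligned rectangle. I then define $d_\redu$ on $V(\dirG_\redu)$ by inheriting coordinates from $d$; the new edge $(u_2,u_5)$ has length $|x(u_5)-x(u_2)|$ equal to the length of $(u_3,u_4)$ in $d$, which is at least $\len((u_3,u_4))$, as required. If $d$ happens to have $y(\pred(u_1))-y(u_1)<\len((u_2,u_3))$, I slide $\pred(u_1)$ (and what is rigidly attached above it) upward within $\mathcal{B}$; this is possible because the top of $\mathcal{B}$ already reaches at least $y(u_3)$ in $d$ thanks to the notch. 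Planarity of $d_\redu$ then follows as in \cref{lem:RCR}: any vertex $w$ of $\dirG_\redu$ lying in the interior of the former notch region, combined with \cref{obs:LB} (or its symmetric variant) and \cref{lem:kittyCorner}, produces a kitty-corner pair involving $u_2$ or $u_5$, contradicting the kitty-corner-free assumption on $P^{\trun}_{c_i,c_{i+1}}$.

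For the reverse direction, given a planar drawing $d_\redu$ of $\dirG_\redu$ with bounding box $\mathcal{B}$, I rebuild $d$ by inserting $u_3$ directly above $u_2$ and $u_4$ directly above $u_5$, both at height $y(u_2)+\len((u_2,u_3))$, where $\len((u_2,u_3))$ is the original weight carried over by the reduction. All weight constraints along the reinserted notch edges hold by construction, and the boosted weight $\len((\pred(u_1),u_1))\geq\len((u_2,u_3))$ enforces $y(\pred(u_1))-y(u_1)\geq\len((u_2,u_3))$ in $d_\redu$, so the inserted notch fits within $\mathcal{B}$. To check planarity, I assume for contradiction that some vertex $w\in V(\dirG_\redu)$ lies inside the rectangle where the notch is to be placed; then \cref{obs:LA} applied to $u_1$, together with \cref{lem:kittyCorner}, exhibits a kitty-corner pair in $\dirG$ (mimicking the argument used in the reverse direction of \cref{lem:RCCC}), contradicting the hypothesis.

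The main obstacle is the reverse direction: the inserted notch must fit vertically and horizontally inside $\mathcal{B}$, and no vertex of $\dirG_\redu$ may obstruct it. The vertical fit is engineered by the boosted weight on $(\pred(u_1),u_1)$, the horizontal fit by the weight on the new edge $(u_2,u_5)$, and the planarity obstruction is ruled out by the kitty-corner argument through \cref{lem:kittyCorner}. The buffer vertices reserved on either side of $P^{\trun}_{c_i,c_{i+1}}$ guarantee that the neighbors needed to invoke the relevant Observation indeed exist, which is the technical crux that justifies our truncation by five vertices at each end.
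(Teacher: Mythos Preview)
Your reverse direction is essentially the paper's argument: reinsert $u_3,u_4$ at height $y(u_2)+\len((u_2,u_3))$, use the boosted weight on $(\pred(u_1),u_1)$ to stay in $\mathcal B$, and derive a kitty corner via \cref{obs:LA} and \cref{lem:kittyCorner} if some $w$ obstructs the notch. That part is fine.

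The forward direction has a genuine gap. You propose, when $y(\pred(u_1))-y(u_1)<\len((u_2,u_3))$, to ``slide $\pred(u_1)$ (and what is rigidly attached above it) upward within $\mathcal B$''. This is not a well-defined operation on an orthogonal drawing of a cycle: the incoming edge to $\pred(u_1)$ is horizontal, so moving $\pred(u_1)$ up forces $\pred(\pred(u_1))$ up, which in turn forces further vertices up, and you have not argued that this cascade terminates before wrapping around, nor that planarity and the bounding box survive it. The fact that the top of $\mathcal B$ reaches $y(u_3)$ in $d$ only tells you there is vertical room in the box; it says nothing about whether a coherent shearing of a subpath exists inside the drawing.

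The paper avoids this by exploiting the order of the reduction rules. Since Rules~\ref{rer:flatPath}--\ref{rer:RRRC} have already been applied when Rule~\ref{rer:RCCR} fires, one obtains $\lab(u_1)=\R$ (otherwise $u_1u_2u_3=\CRC$) and $\lab(\pred(u_1))=\C$ (otherwise a forbidden $\RRRC$/$\CRRR$ pattern appears). Thus $\pred(u_1),u_1,u_2,u_3$ is a $\CRRC$ subpath, and \cref{lem:lenCRRC} yields a drawing with the same bounding box in which $y(\pred(u_1))=y(u_3)$. The weight constraint on $(\pred(u_1),u_1)$ is then satisfied automatically, with no sliding needed. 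You should replace your ad hoc sliding by this structural argument.
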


\begin{proof}
  To prove that \cref{rer:RCCR} is safe, we need to prove that there
  exists a planar drawing $\Gamma$ of $\dirG$ if and only if there exists a
  planar drawing $\Gamma_\redu$ of $\dirG_\redu$ such that $\Gamma$ and
  $\Gamma_\redu$ have the same bounding box. Recall that, without loss of
  generality, we assume that $\lab((u_1,u_2)) = W$.
	
  \smallskip $(\Rightarrow)$ Let $\Gamma$ be a planar drawing of $\dirG$
  having bounding box $\cal B$. By \cref{lem:lenRCCR}, we get that
  $y(u_5) = y(u_2)$ in $\Gamma$. As there is neither an $\F$-vertex nor a
  path labeled $\CRC$ or $\CRRR$ in $\dirG$, we get that
  $\lab(u_1) = \R$ and $\lab(\pred(u_1)) = \C$. By \cref{lem:lenCRRC},
  we get that $y(\pred(u_1)) = y(u_3)$.  We get a drawing $\Gamma_\redu$ of
  $\dirG_\redu$ having the same bounding box $\cal B$ as follows. For
  every vertex $v \in V(\dirG_\redu)$, $x(v)$ and $y(v)$ in $\Gamma_\redu$
  are the same as those in $\Gamma$. Observe that $y(\pred(u_1)) = y(u_3)$
  implies that the weight constraint of the edge $(\pred(u_1),u_1))$
  in $\dirG_\redu$ is satisfied by $\Gamma_\redu$ and $\cal B$ is a
  bounding box of $\Gamma_\redu$. Moreover, if there does not exist any
  vertex $w \in V(\dirG)$ such that $x(w) \in [x(u_4),x(u_2)]$ and
  $y(w) \in [y(u_2),y(u_4)]$ in $\Gamma$, then $\Gamma_\redu$ is a planar
  drawing of $\dirG_\redu$. Assume for contradiction there exists such
  a $w$. Observe that $\pred(w), \need(w) \notin \{u_2,u_4\}$. Then,
  by \cref{obs:LB} and \cref{lem:kittyCorner}, $\ny(u_3)$ exists and
  $(\ny(u_3),u_2)$ is a kitty corner pair of $H$, a contradiction to
  the fact that the path that we are reducing does not contain any
  kitty corner.
	
  \smallskip $(\Leftarrow)$ Let $\Gamma_\redu$ be a planar drawing of
  $\dirG_\redu$ having bounding box $\cal B$. We get a drawing $\Gamma$ of
  $\dirG$ having the same bounding box $\cal B$ as follows. For every
  vertex $v \in V(\dirG) \setminus \{u_3,u_4\}$, $x(v)$ and $y(v)$ in
  $\Gamma$ are the same as those in $\Gamma_\redu$. For $u_3$ and $u_4$, we assign
  $x(u_3) = x(u_2)$, $x(u_4) = x(u_5)$ and
  $y(u_3) = y(u_4) = y(u_2) + \len((u_2,u_3))$. Observe that as
  $\len((\pred(u_1),u_1)) = \max\{\len((u_2,u_3)), \len((\pred(u_1),$
  $u_1))\}$, $\cal B$ is a bounding box of $\Gamma$.  Moreover, if there
  does not exist any vertex $w \in V(\dirG_\redu)$ such that
  $x(w) \in [x(u_5),x(u_2)]$ and
  $y(w) \in [y(u_2),y(u_2) + \len((u_2,u_3))]$ in $\Gamma_\redu$, then $\Gamma$
  is a planar drawing of $\dirG$. Assume for contradiction there
  exists such a $w$. Observe that
  $\pred(w), \need(w) \notin \{\pred(u_1),u_6\}$. Without loss of
  generality, from \cref{obs:LA}, we can assume that $u_6$ is the left
  neighbor of $u_1$ as $\lab(u_2) = \lab(u_5) = \F$ in
  $\dirG_\redu$. Then, by \cref{obs:LA}, $\ny(u_1)$ exists and
  $\lab(\ny(u_1)) = \C$. Let $\Gamma'$ be a drawing of $\dirG$ obtained
  from $\Gamma_\redu$ as follows.  Let $\varepsilon \in \mathbb{R}$ such
  that $0 <\varepsilon < 1$. For every vertex
  $v \in V(\dirG) \setminus \{u_3,u_4\}$, $x(v)$ and $y(v)$ in $\Gamma'$
  are the same as those in $\Gamma_\redu$. For $u_3$ and $u_4$, we assign
  $x(u_3) = x(u_2)$, $x(u_4) = x(u_5)$ and
  $y(u_3) = y(u_4) = y(\ny(u_1)) - \varepsilon$. See
  \cref{fig:proofRCCR-1}. Observe that by the definition of
  $\ny(u_1)$, there exists no other vertex of $\dirG_\redu$ with
  x-coordinate in $[x(u_5),x(u_1)]$ and y-coordinate smaller than that
  of $\ny(u_1)$ in $\Gamma_\redu$. Therefore, $\Gamma'$ is a planar drawing of
  $\dirG$.
  Similar to the proof of
  \cref{lem:kittyCorner}, we can prove that $u_4$ and $\ny(u_1)$
  (which is also a vertex of $\dirG$) make a pair of kitty corners on
  the outer face of $H$ (see \cref{fig:proofRCCR-2}), a contradiction
  to the fact that the path that we are reducing does not contain
  any kitty corner.
\end{proof}

\begin{figure}[tb]
  \centering
  \begin{subfigure}{0.48\textwidth}
    \centering \includegraphics[page=25]{kittyCornersKernel}
    \caption{}
    \label{fig:proofRCCR-1}
  \end{subfigure}
  \hfill
  \begin{subfigure}{0.48\textwidth}
    \centering \includegraphics[page=26]{kittyCornersKernel}
    \caption{}
    \label{fig:proofRCCR-2}
  \end{subfigure}
  \caption{An illustration for the case considered in the reverse
    direction proof of \cref{lem:RCCR}. A path from $u_4$ to
    $\ny(u_1,\Gamma)$ (shown by $z$), to show that they are a pair of kitty
    corners, are drawn in orange.}
  \label{fig:proofRCCR}
\end{figure}

As the reduction rule for a path labeled $\CRRC$ is similar, it is provided as \cref{rer:CRRC} in the \hyperref[app:kernel]{Appendix}.

Finally, to process a path labeled $\mathsf{RRRRRRR}$, we first define
a matching $M_{RC}(\dirG)$ from the \R-vertices to \C-vertices in
$\dirG$ as follows. To define the matching, we first define the notion
of a \emph{balanced path}. Let $u$ and $v$ be two vertices of
$\dirG$. We say that $P_{u,v}$ is \emph{balanced} if the number of
\R-vertices is equal to the number of \C-vertices in $P_{u,v}$. We
have the following observation about a balanced path starting at an
\R-vertex, which will be useful in defining the matching.

\begin{observation}\label{obs:balancedPath}
  Let $P_{u,v}$ be a balanced path that starts with an
  \R-vertex. Then, there exists a \C-vertex $x$ on this path such that
  $P_{u,x}$ is balanced.
\end{observation}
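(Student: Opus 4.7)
The plan is to prove this by a discrete intermediate-value argument on a simple counting function along the path. For each vertex $w$ on $P_{u,v}$, define $f(w)$ to be the number of \R-vertices minus the number of \C-vertices on the subpath $P_{u,w}$, counting both endpoints. Since $u$ is an \R-vertex we have $f(u)=1$, and since $P_{u,v}$ is balanced by hypothesis, $f(v)=0$.

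Next I would observe that, because all \F-vertices have already been removed by \cref{rer:flatPath}, every internal vertex of $P_{u,v}$ is either an \R-vertex or a \C-vertex. Hence when we move one step along the path from a vertex $w$ to its successor $w'$, the value $f(w')$ equals $f(w)+1$ if $w'$ is an \R-vertex and $f(w)-1$ if $w'$ is a \C-vertex. So $f$ takes integer values and changes by exactly $\pm 1$ at each step.

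Since $f(u)=1$ and $f(v)=0$, there must exist a first vertex $x$ on $P_{u,v}$ after $u$ with $f(x)=0$. Let $x^-$ be the vertex immediately preceding $x$ on the path; by the choice of $x$ we have $f(x^-)\neq 0$, and since $f$ changes by $\pm 1$ we must have $f(x^-)=1$ and $f(x)=f(x^-)-1$. Therefore the transition from $x^-$ to $x$ subtracts one, which means $x$ is a \C-vertex. Because $f(x)=0$, the subpath $P_{u,x}$ contains equally many \R- and \C-vertices, i.e.\ it is balanced, proving the claim.

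The only potential subtlety, which is also the main thing to state carefully, is that the argument relies on the absence of \F-vertices so that $f$ changes in unit steps; this is exactly the regime assumed in \cref{sec:kernel} after \cref{rer:flatPath} has been applied. No further machinery is needed.
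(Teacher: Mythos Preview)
Your proof is correct. The paper states \cref{obs:balancedPath} as an observation without proof, so there is nothing to compare against; the discrete intermediate-value argument on the running difference $f(w)=\#\R-\#\C$ along $P_{u,w}$ is precisely the natural justification.

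Two minor remarks. First, the step from ``$f(x^-)\neq 0$ and $f$ changes by $\pm 1$'' to ``$f(x^-)=1$'' tacitly uses that $f$ stays positive on every vertex before $x$: since $f(u)=1>0$ and $x$ is the \emph{first} zero, $f$ cannot have dipped below zero earlier, which rules out $f(x^-)=-1$. You clearly intend this, but it is worth making explicit. Second, the absence of \F-vertices is convenient but not actually needed for the argument: an \F-vertex would leave $f$ unchanged, so $f$ would still move in steps of $0$ or $\pm 1$, the intermediate-value reasoning would still apply, and the first zero would still be forced to occur at a \C-vertex (a step of $0$ cannot produce the first zero). So your caveat at the end is harmless but stronger than necessary.
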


We now define the matching $M_{RC}(\dirG)$. Intuitively, in
$M_{RC}(\dirG)$, every \R-vertex $u$ is matched to the closest
\C-vertex $v$ such that every internal \R-vertex on the path from $u$
to $v$ is mapped to an internal \C-vertex on the path from $u$ to
$v$. Formally, we define $M_{RC}(\dirG)$ as a matching from the set of
\R-vertices to the set of \C-vertices in $\dirG$ as follows.

An \R-vertex $u$ is matched to a \C-vertex $v$ if and only if i)
$P_{u,v}$ is balanced, and ii) for every internal \C-vertex $x$ of
$P_{u,v}$, the path $P_{u,x}$ is not balanced. Observe that as the
number of \C-vertices is larger by $4$ than the number of \R-vertices,
the matching $M_{RC}(\dirG)$ always exists. Moreover, it is also
unique. See \cref{fig:Matching}. In the following lemma, we also prove
that if an \R-vertex $u$ is matched to a \C-vertex $v$ then every
\R-vertex on the path from $u$ to $v$ is matched to a \C-vertex on the
path from $u$ to $v$.

\begin{figure}[tb]
  \centering \includegraphics[page=28]{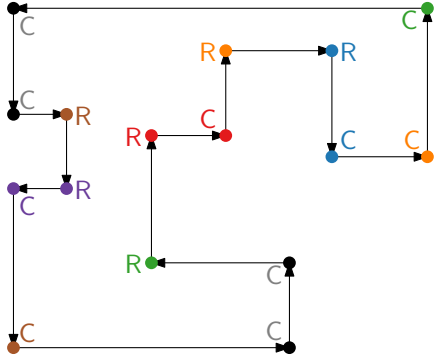}
  \caption{An illustration for the definition of the matching
    $M_{RC}(\dirG)$. The matched \R- and \C-vertex pairs are drawn in
    the same color and the remaining $4$ unmatched \C-vertices are
    drawn in black.}
  \label{fig:Matching}
\end{figure}

\begin{lemma}\label{lem:internalMatch}
  Let $u$ be an \R-vertex in $\dirG$. Moreover, let $v$ be the
  \C-vertex in $\dirG$ that is matched to $u$ by
  $M_{RC}(\dirG)$. Then, every \R-vertex on the path $P_{u,v}$ is
  matched to a \C-vertex on $P_{u,v}$ by $M_{RC}(\dirG)$.
\end{lemma}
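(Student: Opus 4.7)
The plan is to track a running balance along $P_{u,v}$ and show that the matching conditions for $u$ propagate to every internal \R-vertex of $P_{u,v}$ via a ``first return to zero'' argument; the endpoint $u$ itself is trivially matched to $v\in P_{u,v}$, so only internal \R-vertices need attention.

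Writing $P_{u,v}=(w_0,w_1,\dots,w_k)$ with $w_0=u$ and $w_k=v$, I would set
$b_i=|\{0\le \ell\le i:\lab(w_\ell)=\C\}|-|\{0\le \ell\le i:\lab(w_\ell)=\R\}|$.
Since \dirG contains no \F-vertex (by \cref{rer:flatPath}), $b_i$ changes by $\pm 1$ at each step, starts at $b_0=-1$, and ends at $b_k=0$. The first key observation is that condition~(ii) of the matching definition forces $b_i\le -1$ for every $0\le i<k$: the first index at which $b$ hits $0$ must correspond to a \C-vertex (since $b$ only increases at \C-vertices), but no internal \C-vertex of $P_{u,v}$ gives a balanced prefix $P_{u,w_i}$, so this first zero can only be at $w_k=v$.

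Next, I would pick an internal \R-vertex $u'=w_j$ (so $0<j<k$) and shift the counting to restart at $u'$, defining $b'_i=b_i-b_{j-1}$ for $j\le i\le k$. Since $w_j$ is \R one has $b_{j-1}=b_j+1$, giving $b'_j=-1$; and since $b_{j-1}\le -1$ by the previous step, $b'_k=-b_{j-1}\ge 1$. Let $i^*\in\{j+1,\dots,k\}$ be the smallest index with $b'_{i^*}=0$. By the same $\pm 1$-step argument used above for $v$, $w_{i^*}$ must be a \C-vertex. Setting $v'=w_{i^*}$, the path $P_{u',v'}$ is balanced (since $b'_{i^*}=0$) and no internal \C-vertex $w_i$ of it yields a balanced prefix (by minimality of $i^*$). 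Hence $u'$ is matched to $v'\in P_{u,v}$, as required.

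The main obstacle is the strict-negativity step $b_i\le -1$ on the interior of $P_{u,v}$; this is precisely where condition~(ii) of the matching is used, and it is what prevents the ``first zero'' of the shifted balance $b'$ from escaping beyond $v$. A secondary care point is to confirm $i^*\le k$, which is immediate from $b'_k\ge 1>0$, and that the constructed $v'$ is indeed the $M_{RC}(\dirG)$-partner of $u'$ --- this follows because the two verified conditions are exactly the defining ones, combined with the uniqueness of $M_{RC}(\dirG)$ already observed in the paper.
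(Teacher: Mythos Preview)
Your proof is correct and takes a genuinely different route from the paper's. The paper proceeds by induction on the length of $P_{u,v}$: it observes that the second vertex $x_2$ must be an \R-vertex, locates the match $x_j$ of $x_2$ inside $P_{u,v}$ by taking the shortest balanced prefix of $P_{x_2,x_{2t+1}}$, applies the induction hypothesis to $P_{x_2,x_j}$, then excises this subpath and recurses on what remains. Your argument instead handles every internal \R-vertex $u'$ directly via a Dyck-path ``first return to zero'' argument on the shifted balance, with no induction. The crucial enabling step in your version is the interior strict-negativity $b_i\le -1$, which cleanly encodes condition~(ii) and guarantees that the shifted walk started at $u'$ must hit zero no later than~$v$ (in fact strictly before, since $b'_k\ge 1$). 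The paper's approach recursively decomposes the matching and is arguably more structural; yours is more elementary and treats all internal \R-vertices uniformly in a single pass.
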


\begin{proof}
  We prove the statement by induction on the number of vertices $\num$
  of the path $P_{u,v}$. Observe that $\num$ is even as $P_{u,v}$ is
  balanced.
	
  \smallskip
  \noindent{\bf Base case $(\num=2)$.} When $\num = 2$,
  $P_{u,v} = (u,v)$ such that $\lab(u) = \R$ and $\lab(v) = \C$. As
  $u$ is matched to $v$, the lemma is true for $\num=2$.
	
  \smallskip
  \noindent{\bf Inductive hypothesis.} Suppose that the lemma is true
  for $\num = 2t \geq 2$.
	
  \smallskip
  \noindent{\bf Inductive step.} We need to prove that the lemma is
  true for $\num=2t+2$. Let
  $P_{u,v} = (x_1=u, x_2, \ldots, x_{2t+1}, x_{2t+2} = v)$. Observe
  that $\lab(x_2) = \R$, otherwise $P_{u,x_2}$ is balanced, which
  contradicts property ii) of the definition of the matching.  We now
  look at the path $P_{x_2,x_{2t+1}}$. Since $P_{u,v}$ is balanced,
  $P_{x_2,x_{2t+1}}$ is also balanced. Moreover, $\lab(x_2) = \R$; so,
  by \cref{obs:balancedPath}, there exists a \C-vertex $x_i$, for some
  $i \in [2,2t+1]$, such that $P_{x_2,x_i}$ is balanced. Let
  $j \geq 2$ be the smallest such index for which $P_{x_2,x_j}$ is
  balanced and $\lab(x_j) = \C$. Due to the choice of $j$, for every
  internal \C-vertex $w$ of $P_{x_2,x_j}$, the path $P_{x_2,w}$ is not
  balanced. Therefore, by the definition of the matching, we get that
  $x_2$ is matched to $x_j$. As the number of vertices of
  $P_{x_2,x_j}$ is at most $2t$, from the inductive hypothesis, we get
  that every \R-vertex on $P_{x_2,x_j}$ is matched to a \C-vertex on
  $P_{x_2,x_j}$.  Now, we remove the path $P_{x_2,x_j}$ from $P_{u,v}$
  and add an arc from $u$ to $x_{j+1}$. Let $P'$ be the resulting
  path. Then, $P'$ is balanced (since we removed a balanced path), and
  property ii) is also true for $P'$ as it is true for
  $P_{u,v}$. Moreover, the number of vertices of $P'$ is at most $2t$,
  so from the inductive hypothesis, we get the every \R-vertex on $P'$
  is matched to a \C-vertex on $P'$. This, in turn, implies that every
  \R-vertex on $P_{u,v}$ is matched to a \C-vertex on $P_{u,v}$.
\end{proof}

We now consider a path $(u_1, u_2, \ldots, u_7)$ such that
$\lab(u_i) = \R$, for every $i \in [1,7]$.  Let $v_i$ be the \C-vertex
that is matched to $u_i$ by $M_{RC}(\dirG)$, for every $i \in [1,7]$.
Observe that by the property i) of the definition of $M_{RC}(\dirG)$,
the number of \R-vertices is equal to the number of \C-vertices in
$P_{u_i,v_i}$ which implies that $\rot(u_i, v_i) = -1$, for every
$i \in [1,7]$.  Moreover, considering $u = u_6$ and $v=v_6$ in \cref{lem:internalMatch}, we get that $(u_6, u_7, \ldots, v_7,\ldots, v_6)$ is a path in $\dirG$. Thus, by recursively applying the same argument to $u_5, u_4, \ldots, u_1$, we get that $(u_1, u_2, u_3, u_4, u_5, u_6, u_7, \ldots, v_7, \ldots, v_6, \ldots, v_5,\ldots, v_4, \ldots, v_3, \ldots, v_2, \ldots, v_1)$ is a path in
$\dirG$.  Therefore $\rot(u_1,v_1) = \rot(u_1,u_7)$
$+ \rot(u_7,v_7) + \rot(v_7,v_1)$ $\Rightarrow$ $\rot(v_7,v_1) = 6$.
As every \C-vertex is a reflex vertex on the outer face
$f_{\textsf{out}}$ of $\dirG$, $\rot(v_1,v_7) = -6$ on
$f_{\textsf{out}}$.  This, in turn, implies that $v_1$ and $v_7$ are a
pair of kitty corners of $f_{\textsf{out}}$.  Based on this matching
$M_{RC}(\dirG)$, we give the following counting rule to count the
vertices of a path on \R-vertices having $7$ or more vertices.

\begin{counting}\label{cou:matchingRule}
  Let $k \in \mathbb{N}$ and $t \in [0,6]$. Let
  $P = (u_1, u_2, \ldots, u_{7k+t})$ be a maximal path on \R-vertices
  in $\dirG$ not containing any kitty corner, i.e., $\lab(u_i) = \R$
  and $u_i$ is not a kitty corner of $H$, for every $i \in
  [1, 7k+t]$. Let $v_i$ be the \C-vertex matched to $u_i$ by
  $M_{RC}(\dirG)$. Then, we count the set of vertices of $P$ against
  the set of kitty corners
  $S_P = \{v_1, v_7,v_8,v_{14}, \ldots,v_{7k-6}, v_{7k}\}$.
\end{counting}

Note that, for every maximal path $P$ on \R-vertices having $7$ or
more vertices, the set $S_P$ is unique. Moreover, if two such maximal
paths $P_1$ and $P_2$ are different, then they are also
vertex-disjoint and it holds that $S_{P_1} \cap S_{P_2} =
\emptyset$. Therefore, by the above counting rule, the total number of
occurrences of \R-vertices in $\dirG$ that belong to maximal paths $P$
on \R-vertices having $7$ or more vertices and not containing any
kitty corner is bounded by a function that is linear in the number of
kitty corners of~$H$.

By applying all the reduction rules in
their respective order until they can no longer be applied, from
\cref{obs:patternMatching}, we get that $P^\trun_{c_i,c_{i+1}}$ only
contains \R-vertices and no kitty corner, for every $i \in [1,k]$. By
applying \cref{cou:matchingRule}, the number of vertices of all such
paths is bounded by a function that is linear in the number of kitty
corners of $H$. As the number of vertices of every path
$P_{c_i,c_{i+1}}$ is $10$ more than $P^\trun_{c_i,c_{i+1}}$, we get an
instance $I$ of the \textsc{Weighted OC} problem such that the number
of vertices of $I$ is bounded by a function that is linear in the
number of kitty corners of $H$. To show that it is a compression, we
also need to prove that the sizes of the edge weights of $I$ (when
encoded in binary) are bounded by a function that is linear
in~$k$. Observe that the weight of any edge of $I$ is at most $n$, the
number of vertices of $\dirG$. Therefore, if the sizes of the edge
weights of $I$ (when encoded in binary) are not bounded by a function
that is linear in~$k$, we get that $k=O(\log n)$. In this case, by
\cref{th:fpt-kitty}, we can solve the \textsc{OC} problem on $G$ in
$n^{O(1)}$ time. So, without loss of generality, we assume that
$k = \Omega(\log n)$. This, in turn, implies that the size of $I$ is
bounded by a function that is polynomial in the number of kitty
corners of~$H$.  Thus we get a polynomial compression (with a linear
number of vertices) parameterized by the number of kitty corners
of~$H$. Moreover, if $k = \Omega(\log n)$, the \textsc{Weighted OC}
problem on cycle graphs is in NP due to the fact that we can always
guess the length of each edge in the drawing (the number of guesses
and the size of each number, that is, length, to guess are polynomial
in $n$). As the \textsc{OC} problem on cycle graphs is
NP-hard~\cite{efkssw-mrpgas-CGTA22}, we can exploit the following
proposition given in the book of Fomin et
al.~\cite{fomin2019kernelization} and conclude the proof of
\cref{the:polynomial-kernel}.

\begin{proposition}[\cite{fomin2019kernelization}, Theorem~1.6]\label{pro:compToKern}
  Let $Q \subseteq \Sigma^* \times \mathbb{N}$ be a parameterized
  language, and let $R \subseteq \Sigma^*$ be a language such that the
  unparameterized version of $Q \subseteq \Sigma^* \times \mathbb{N}$
  is NP-hard and $R \subseteq \Sigma^*$ is in NP. If there is a
  polynomial compression of $Q$ into $R$, then $Q$ admits a polynomial
  kernel.
\end{proposition}

\section{Maximum Face Degree: Parameterized Hardness}
\label{se:hardness}

We show that the OC problem remains NP-hard even if all faces have
constant degrees.  Our proof elaborates on the ideas of Patrignani's
NP-hardness proof for
\textsc{OC}~\cite{DBLP:journals/comgeo/Patrignani01}, which we will recall in the following.

  Let $\phi=({\cal X},{\cal C})$ be a \textsc{SAT} instance with
  variables ${\cal X}=X_1,\ldots,X_n$ and clauses
  ${\cal C}=C_1,\ldots,C_m$. Patrignani creates a graph $G_\phi$ with
  rectilinear representation $H_\phi$ and two variables $w_\phi=9n+4$
  and $h_\phi=9m+7$ such that $H_\phi$ admits an orthogonal grid
  drawing of size $w_\phi\cdot h_\phi$ if and only if $\phi$ is
  satisfiable.  On a high level, the reduction works as follows.
	
  \begin{figure}[bh]
    \begin{subfigure}{.47\textwidth}
      \includegraphics[page=1]{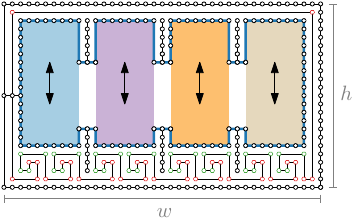}
      \caption{}
      \label{fig:app-titto-boundary-1}
    \end{subfigure}
    \hfill
    \begin{subfigure}{.47\textwidth}
      \includegraphics[page=2]{titto-boundary}
      \caption{}
      \label{fig:app-titto-boundary-2}
    \end{subfigure}
    \caption{The shifting variable rectangles (shaded) and the belt
      (the path with hexagonal vertices) in the NP-hardness proof by
      Patrignani~\cite{DBLP:journals/comgeo/Patrignani01}.}
    \label{fig:app-titto-boundary}
  \end{figure}
	
  The outer face of $H_\phi$ (the \emph{frame}) is a rectangle that
  requires width $w_\phi$ and height $h_\phi$ in its most compact
  drawing; see \cref{fig:app-titto-boundary}. Inside the frame, every
  variable is represented by a rectangular region (the \emph{variable
    rectangle}) of width~7 and height $h_\phi-7$.  Each variable
  rectangle is bounded from above and below by a horizontal path, but
  not necessarily from the left and the right, as the clause gadgets
  will be laid through them. The variable rectangles are connected
  horizontally with \emph{hinges} (short vertical paths) between them.
  Between the frame and the rectangles, there is a \emph{belt}: a long
  path of $16n+4$ vertices that consists of alternating subsequences of 4
  reflex vertices followed by 4 convex vertices.  The belt and the
  hinges make sure that every variable rectangle has to be drawn with
  exactly width~7 and height $h_\phi-7$, while the belt also ensures
  that every variable rectangle is either shifted to the top (which
  represents a \tru variable assignment) or to the bottom (which
  represents a \fal variable assignment) of the rectangle.
	
  \begin{figure}[t]
    \centering \includegraphics[page=2]{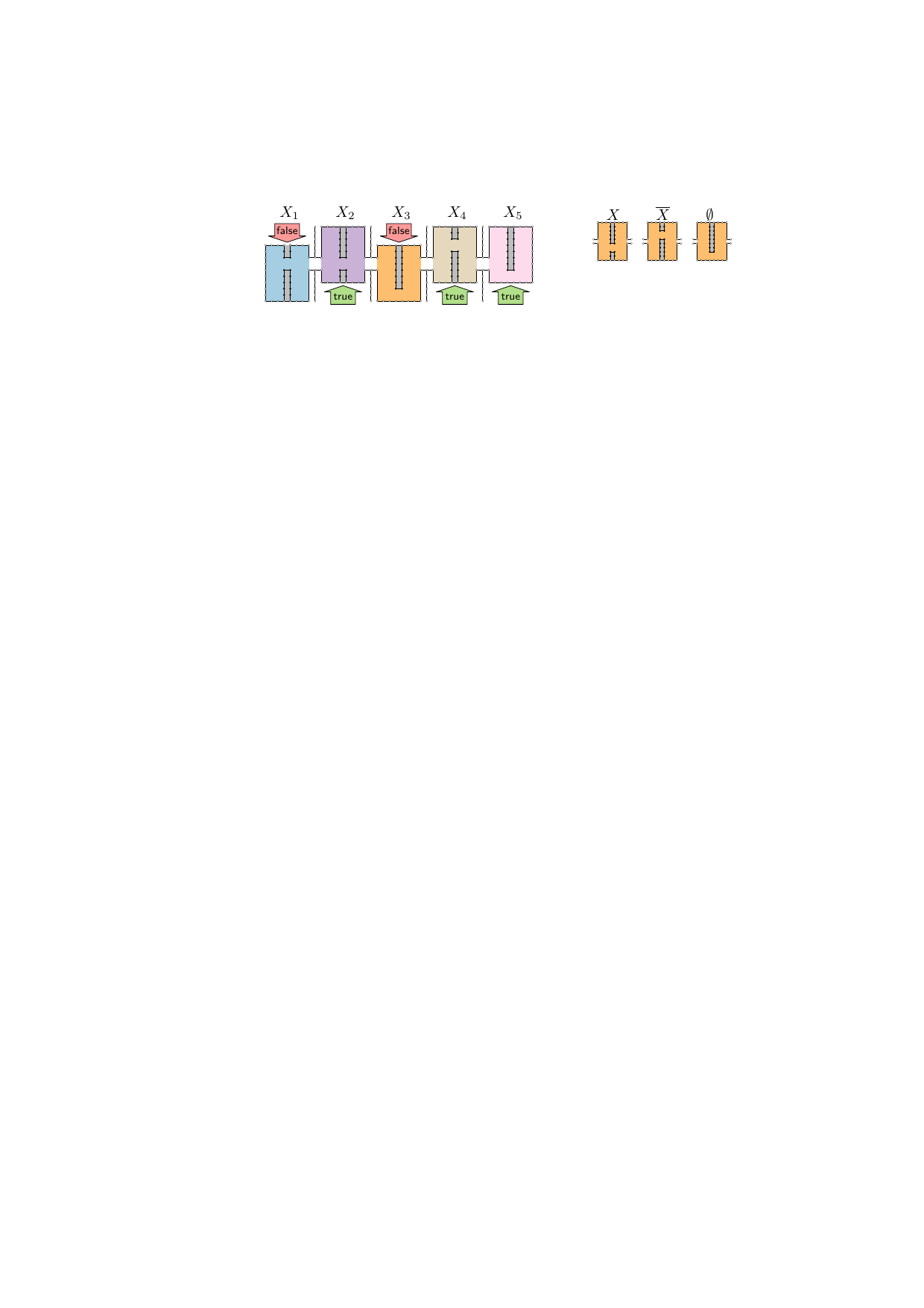}
    \caption{A clause gadget in the NP-hardness proof by
      Patrignani~\cite{DBLP:journals/comgeo/Patrignani01} for the
      clause $\overline{X_1}\vee X_2\vee \overline{X_4}$: the
      variable-clause rectangles (color shaded), the blocking
      rectangles (gray shaded), and the pathway (the path with diamond
      vertices). The segment that corresponds to a fulfilled variable
      assignment for this clause $(X_2)$ is highlighted.}
    \label{fig:app-titto-clause}
  \end{figure}
	
  Every clause is represented by a \emph{chamber} through the variable
  rectangles; see \cref{fig:app-titto-clause}. The chambers divide the
  variable rectangles evenly into $m$ subrectangles of height 9, the
  \emph{variable-clause rectangles}, with a height-2 connection
  between the rectangles; the height of the connection is ensured by
  further hinges above and below it.  Depending on the
  truth-assignments of the variables, the variable-clause rectangles
  are either shifted up or shifted down by 3.  Into each
  variable-clause rectangles, one or two \emph{blocker rectangles} of
  width~1 are inserted, centered horizontally: if the corresponding
  variable does not appear in the clause, then a rectangle of height~7
  at the top; if the variable appears negated, then a rectangle of
  height~2 at the top and a rectangle of height~5 at the bottom; and
  if the variable appears unnegated, then a rectangle of height~5 at
  the top and a rectangle of height~2 at the bottom. Into the chamber,
  a \emph{pathway} that consists of $2n-1$ \emph{$\mathsf{A}$-shapes}
  linked together by a horizontal segment is inserted.  Each of the
  $\mathsf{A}$-shapes can reside in the left or right half of a
  variable-clause rectangle. Since there are $n$ variable-clause
  rectangles, one half of such a rectangle remains empty; hence, one
  horizontal segment of the pathway has to pass a blocker rectangle,
  one half of a variable-clause rectangle, and the connection between
  two variable-clause rectangles. Thus, there must be one
  variable-clause rectangle where the opening between the blocker
  rectangles and the opening to the previous/next variable-clause
  rectangle are aligned vertically, which is exactly the case if the
  corresponding variable fulfills its truth assignment in the
  corresponding clause.  For a full reduction, see
  \cref{fig:app-titto-full}.
	
  \begin{figure}[t]
    \centering \includegraphics{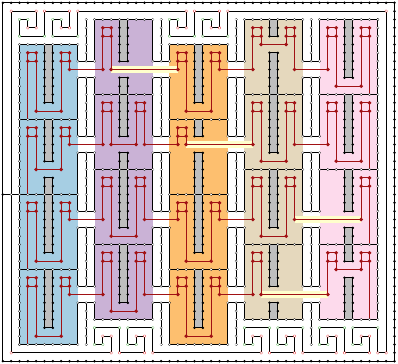}
    \caption{The full reduction by
      Patrignani~\cite{DBLP:journals/comgeo/Patrignani01} for the
      formula
      $(X_2\vee \overline{X_4})\wedge (X_1\vee X_2\vee
      \overline{X_3})\wedge (X_5)\wedge (X_4\vee \overline{X_5})$ with
      variable assignment $X_1=X_3=\fal$ and $X_2=X_4=X_5=\tru$.}
    \label{fig:app-titto-full}
  \end{figure}

\begin{theorem}
  \label{th:hard-face-degree}
  \textsc{OC} is para-NP-hard when parameterized by the maximum face
  degree.
\end{theorem}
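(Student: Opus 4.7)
\medskip

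\noindent\textbf{Proof proposal.} The plan is to start from Patrignani's NP-hardness reduction for \textsc{OC}~\cite{DBLP:journals/comgeo/Patrignani01} and to modify it so that every face of the constructed orthogonal representation has constant degree. Patrignani's reduction is from a constrained satisfiability/partition-style problem, and it builds an orthogonal representation in which variable gadgets, clause gadgets, and wires are connected by a few large ``transport'' faces whose degree grows with the number of literal occurrences. These large faces are the only obstacle to the theorem; the binary choices that drive the reduction (the truth value of a variable, the side on which a literal satisfies a clause) are already locally constant in size.

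The first step is to isolate a constant-size gadget $W$ that behaves as a \emph{wire}: a small orthogonal sub-representation containing a single pair of kitty corners and admitting, up to the surrounding constraints, exactly two area-minimal shapes (``tall-thin'' and ``short-wide''). Such a gadget can be taken as a small rectilinear polygon with four reflex corners arranged so that minimising its bounding box forces the two kitty corners to align either vertically or horizontally. Since $W$ has $O(1)$ vertices, every face of $W$ has degree $O(1)$.

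The second step is to replace every face $f$ of Patrignani's representation whose degree grows with the instance size by a chain $W_1,W_2,\dots,W_\ell$ of copies of $W$, glued side by side along common edges, so that the chain reproduces the long boundary of $f$ but subdivides its interior into $O(\ell)$ faces each of constant degree. The gluing must be designed so that in any area-minimal drawing the shape choice of $W_i$ propagates to $W_{i+1}$; this is the standard ``rigid transmission'' trick used in parameterized hardness reductions for geometric problems, and it can be enforced by sharing a reflex corner between consecutive gadgets so that a shape mismatch strictly increases the bounding box. Where Patrignani's original reduction branched a variable to several literals, I would use a constant-size ``fan-out'' gadget, obtained by joining three wire-ends at a common rectangle of bounded size, and cascade fan-outs in a tree of depth $O(\log n)$ so that the branching has constant face degree everywhere.

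Finally, I would redo the correctness analysis. The forward direction (satisfying assignment $\Rightarrow$ small-area drawing) is routine once the wires transmit truth values. The backward direction requires showing that any drawing that beats the target area must assign a consistent truth value to every wire chain and to the fan-out trees, and must satisfy every clause gadget; this follows from the area accounting inherited from Patrignani's proof, augmented by a small additive slack per wire that I choose below the savings achievable by a satisfied clause. The size of the construction is polynomial in the source instance, and the maximum face degree is a fixed constant $d$ depending only on the chosen gadgets, yielding para-NP-hardness with respect to the maximum face degree.

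The main obstacle is the design of the wire and fan-out gadgets, because they must simultaneously (i)~have constant face degree, (ii)~force exactly two locally area-optimal shapes, (iii)~propagate the chosen shape rigidly to all neighbours, and (iv)~avoid introducing spurious area savings that would allow non-satisfying assignments to meet the target area. Careful bookkeeping of the contribution of each gadget to the bounding box, using the kitty-corner structure described in \cref{se:basics,se:fpt-kitty}, is what makes the reduction go through.
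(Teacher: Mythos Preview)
Your high-level plan---start from Patrignani's reduction and bring every face down to constant degree---is the right one, and it matches the paper's strategy. But the concrete mechanism you propose diverges from the paper and carries a real risk.

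The paper does \emph{not} introduce any new binary-choice gadgets. It keeps Patrignani's variable rectangles, belt, chambers, and pathways essentially intact and simply \emph{subdivides} the few large faces by adding rigid structure that has no freedom of its own: (i)~for the two large chamber faces above and below each pathway, it attaches the pathway to the top and bottom boundary of every variable-clause rectangle via short fixed-shape connectors (after slightly enlarging the rectangles so that these connectors fit regardless of which half the $\mathsf{A}$-shape occupies); (ii)~for the long vertical sides of the outermost variable rectangles, it attaches a tree of nested rectangles that cuts a path of length~$z$ into $O(z)$ faces of degree~$O(1)$; (iii)~instead of one global belt, it uses a constant-size belt per variable rectangle, each enclosed in its own constant-size bounding-box path. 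Every added face is rigid in the target-area drawing, so Patrignani's correctness argument survives almost verbatim.

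Your proposal instead replaces the large faces by chains of ``wire'' gadgets, each carrying its own pair of kitty corners and its own two-state choice, with fan-out trees for branching. This is where the gap lies. First, a face is not a subgraph you can swap for a chain of gadgets; you have to say how the chain's boundary coincides with the existing boundary vertices and why the original gadgets on that boundary still behave as before. Second, every wire gadget you add contributes new degrees of freedom to the compaction, and you must now certify that in an optimal drawing all these auxiliary choices are forced \emph{and} do not create slack that an unsatisfying assignment could exploit. You flag this as ``the main obstacle'' but give no mechanism; the ``additive slack per wire below the savings of a satisfied clause'' idea is fragile, because Patrignani's target area is exact, not an inequality with slack. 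Third, your wire description (``four reflex corners \dots\ forces the two kitty corners to align either vertically or horizontally'') is self-contradictory: if both alignments realise the minimum bounding box, neither is forced; if only one does, the gadget does not transmit a bit.

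In short, the paper's insight is that no information-carrying wires are needed at all---rigid subdivisions suffice---and this is precisely what makes the correctness a one-paragraph check rather than a delicate new area-accounting argument.
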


\begin{proof}	
  To prove this theorem, we describe how to adjust Patrignani's reduction such that every face has
  constant size, thus creating a graph $G'_\phi$ that has a constant
  maximum face degree with rectilinear representation $H'_\phi$ and two
  variables $w'_\phi$ and $h'_\phi$ such that $H'_\phi$ admits an
  orthogonal grid drawing of size $w'_\phi\cdot h'_\phi$ if and only
  if $\phi$ is satisfiable.
	
  \begin{figure}[t]
    \centering \includegraphics[page=3]{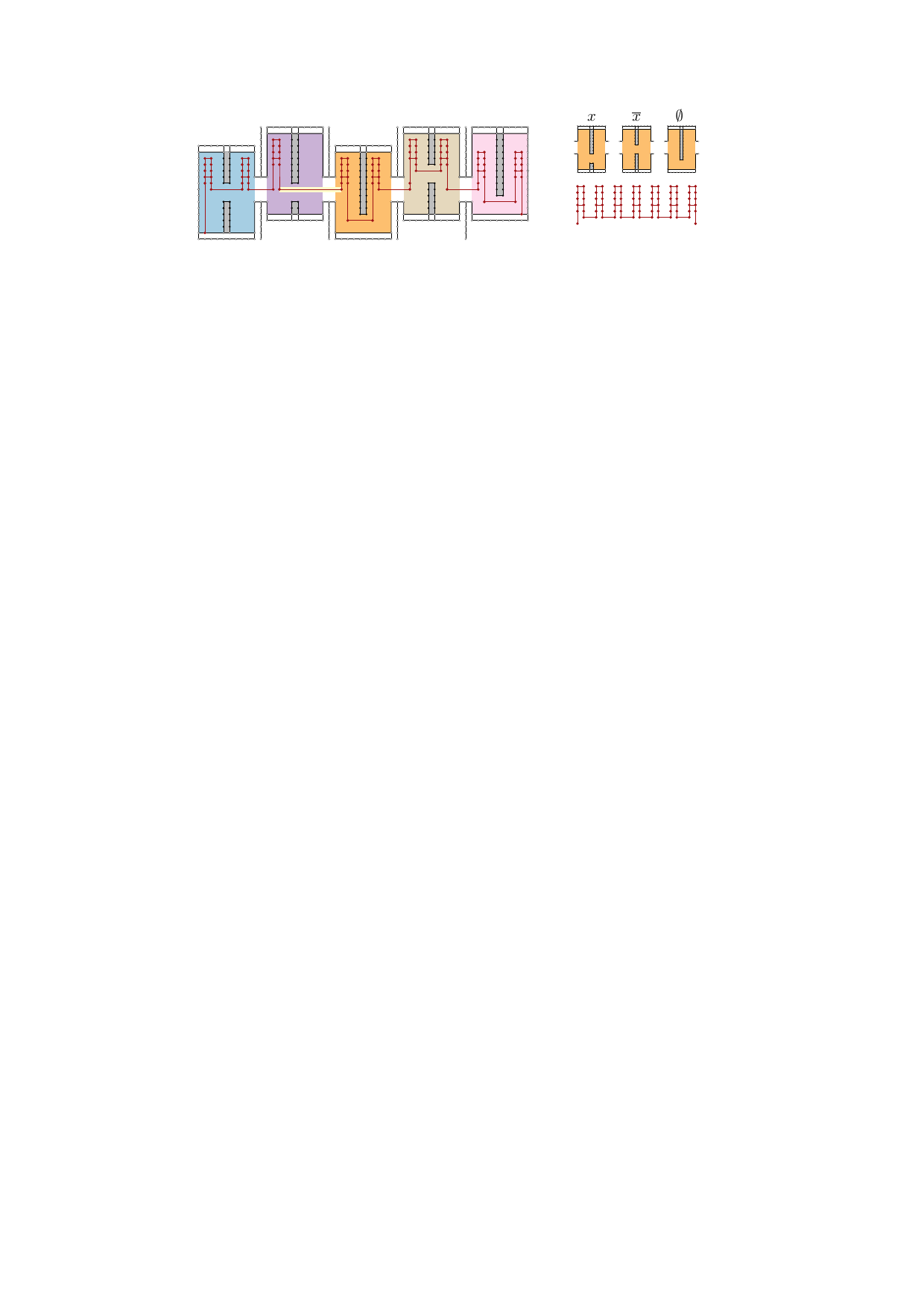}
    \caption{A clause gadget in our adjusted proof for the same clause
      and variable assignments as in \cref{fig:app-titto-clause}. The connector vertices are drawn as large (green) unfilled diamonds.}
    \label{fig:app-faces-clause}
  \end{figure}
	
  We start with the clause gadgets. Observe that the chamber consists
  of two large faces of size $O(m)$: above the pathway and below the
  pathway. To avoid these large faces, we seek to connect the
  pathway to the boundary of each of the variable-clause rectangles;
  once at the top and once at the bottom; see
  \cref{fig:app-faces-clause}. To achieve this, we first increase the
  width of all variable-clause rectangles by~2 and their height by~4.
  The height of the connection between the rectangles is increased
  to~4.  In each variable-clause rectangle, the height of the top
  blocking rectangle is increased by~3, thus increasing the height of
  the opening between/below them to~3.  The larger openings are
  required for the connections from the pathway to the boundary of the
  variable-clause rectangles. We have to make sure that the
  $\mathsf{A}$-shapes of the pathway still have to reside completely
  inside one half of a variable-clause rectangle. To achieve this, we
  also increase the necessary heights of the $\mathsf{A}$-shapes by
  adding to more vertices to the vertical segments of their top
  square. Now, the top part of each $\mathsf{A}$-shape is a rectangle
  of height at least~3, so it does not fit into any of the openings
  with size~3 or~4 without overlaps. We also add a vertex to the first
  and last vertical segment of the $\mathsf{A}$-shape that we will use
  to connect to the variable-clause rectangles later.
	
  We seek to connect every second $\mathsf{A}$-shape to the top and
  bottom boundary of a variable-clause rectangle; namely, the $2i$-th
  $\mathsf{A}$-shape of the pathway shall be connected to the
  variable-clause rectangle corresponding to variable~$X_i$,
  $1\le i\le n-1$. In particular, we always want to connect to the
  part to the right of the blocking rectangle. However, we do not know
  exactly which of the vertices to connect to, as it depends on
  whether the $\mathsf{A}$-shape is placed inside the same
  variable-clause rectangle or in the next one. Hence, we remove all
  vertices (except the corner vertices and those on the blocking
  rectangles) from the top and bottom boundary of the variable-clause
  rectangles, and add a single \emph{connector vertex} between the
  right corner and the blocking rectangle, if it exists, or between
  the right and the left corner, otherwise. To make sure that the
  variable-clause rectangles still have the required width and that
  the blocking rectangles are placed in the middle, we add a path of
  length~11 (consisting of two vertical and nine horizontal segments)
  above and below the variable-clause rectangles, and connect the two
  middle vertices of the path to the blocking rectangle (if it
  exists).
	
  Consider now the $2i$-th $\mathsf{A}$-shape and the $i$-th
  variable-clause gadget; refer again to
  \cref{fig:app-faces-clause}. We connect the first vertex of the
  $\mathsf{A}$-shape to the bottom connector vertex by a path of
  length~3 that consists of a vertical downwards segment, followed by
  a horizontal leftwards segment and a vertical downwards segment. We
  connect the second vertex of the $\mathsf{A}$-shape to the top
  connector vertex by a path of length~2 that consists of a
  horizontal leftwards segment followed by a vertical upwards
  segment.  If the $\mathsf{A}$-shape lies in the $i$-th variable
  clause gadget, then it lies in the right half of it. Since we
  increase the width of the rectangle, this half has width~3 and thus
  there is enough horizontal space for the connections if the
  $\mathsf{A}$-shape is drawn as far right as possible.  Also, since
  the opening between/below the blocking rectangles has height 2,
  there is enough vertical space between the $\mathsf{A}$-shape and
  the bottom connector if the horizontal segment to the $(2i-1)$-th
  $\mathsf{A}$-shape is drawn as far up as possible.  On the other
  hand, if the $\mathsf{A}$-shape lies in the $(i+1)$-th variable
  clause gadget, then it lies in the left half of it.  Since the
  opening between the variable-clause rectangles now has height 3,
  there is enough vertical space to fit the the connection to the
  $(2i-1)$-th $\mathsf{A}$-shape as well as the horizontal segment on
  the path to the connector vertices through the opening, and if the
  $\mathsf{A}$-shape that lies in the right half of the $i$-th
  variable-clause rectangle is drawn as far left as possible, then
  there is also enough horizontal space to fit the vertical
  connection.
	
  We still have to make sure that one half of a variable-clause
  rectangle can be ``skipped'' by the pathway if and only if the
  truth-assignment of a variable is fulfilled in the corresponding
  clause. We increase the lengths of the top hinges by~3 and the
  lengths of the bottom hinges by~1. This ensures that the opening of
  the connections has height~4 as required, and that a horizontal
  segment can pass through a connection and a blocking rectangle
  opening if and only if a variable is variable rectangle is shifted
  down (the variable is assigned \fal) and the variable appears
  negated in the clause, or if a variable is shifted up (the variable
  is assigned \fal) and the variable appears unnegated in the clause,
  therefore satisfying our requirements.  Furthermore, observe that
  each face in the clause gadget now has degree at most 55: there are
  at most 25 vertices from the pathway (2 $\mathsf{A}$-shapes + 1
  vertex), at most 18 vertices from a blocking rectangle, at most 6
  vertices from the paths to (including) the connector vertices, and
  at most 6 vertices from the boundary of the chamber.
	
  We now consider the frame and the variable rectangles. There are
  three large faces: the outer face, the face between the frame and
  the belt, and the face between the belt and the variable rectangles;
  all of these have degree $\Theta(n+m)$.
	
  \begin{figure}[t]
    \begin{subfigure}{.49\textwidth}
      \includegraphics[trim = {.8cm .6cm 0 .4cm}, clip,
      page=5]{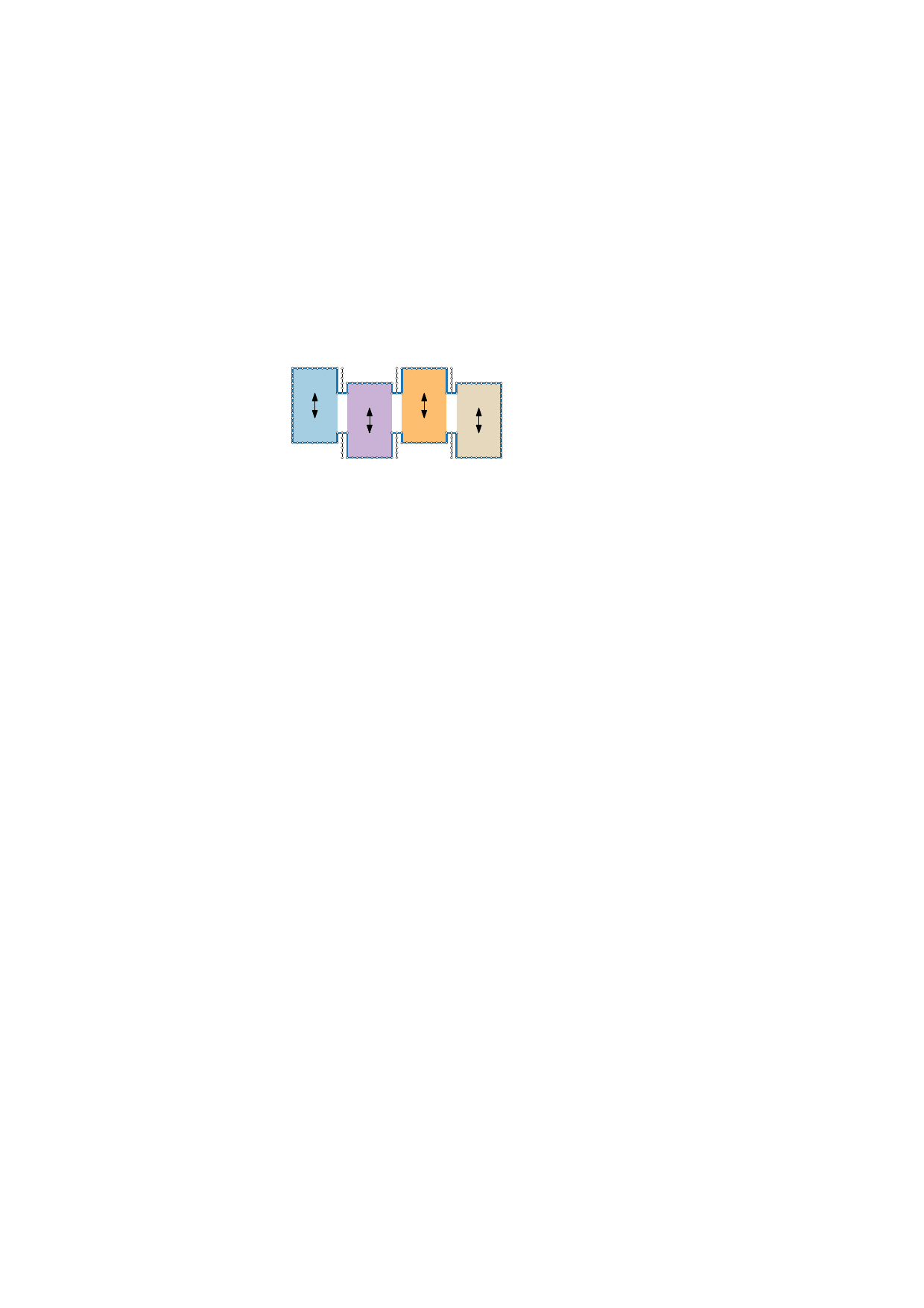}
      \caption{}
      \label{fig:app-faces-boxes-1}
    \end{subfigure}
    \hfill
    \begin{subfigure}{.49\textwidth}
      \centering \includegraphics[scale=.7,page=13]{faces-boxes}
      \caption{}
      \label{fig:app-faces-boxes-2}
    \end{subfigure}
    \caption{The frame in our adjusted NP-hardness reduction. (a) The
      left and right extensions of the first and last variable
      rectangle; (b) the belts around the variable rectangles.}
    \label{fig:app-faces-boxes}
  \end{figure}
	
  We first show how to deal with the face between the variable
  rectangles and the belt. We ignore the belt for now and focus on the
  path around the variable rectangles; see
  \cref{fig:app-faces-boxes}. This path consists of two vertical paths of
  length $O(m)$ on the left and the right of the first and the last
  variable rectangle, respectively, and an $x$-monotone path of length
  $O(n)$ between them.  Consider the vertical path on the left of the
  first variable rectangle, let $z\in O(m)$ be its length and pick
  some constant integer $c\ge 2$.  We first place a rectangle to the
  left of the path by connecting the topmost with the bottommost vertex
  with a path of length~3, consisting of a horizontal leftward segment
  followed by a vertical downward and a horizontal rightward segment.
  This creates a new interior rectangular face with 2 vertices on the
  left and~$z$ vertices on the right.  We now iteratively decrease the
  degree of this face, adding new faces of degree~10 and rectangular
  faces of degree at most~$c+3$, until this face also has constant
  degree.  We place at most~$z/c$ rectangles to the left of the right
  path, each of height at most~$c$, by adding a vertical path of at
  most~$z/c+1$ vertices that are connected to the second vertex from
  the top, the second vertex from the bottom, and every $c$-th vertex
  in between. This way, we create at most~$z/c$ new faces of degree at
  most $c+3$, and a new vertical path of length at most~$z/c+1$. We
  connect the topmost and the bottommost vertex of this new path to
  create a new rectangular face as before. This creates a C-shaped
  face of degree~10, and a new interior rectangular face with two
  vertices on the left and at most~$z/c+1$ vertices on the right.  We
  repeat this process at most~$\log_c(z)+1$ times, until the vertical
  path on the right has length at most~$c$ and thus the interior face
  has degree at most~$c+6$; see \cref{fig:app-faces-boxes-1}. We do
  the same to the vertical path on the right of the last variable
  rectangle. This way, we create at most~$2z$ new faces, all of constant
  degree, and the full path now has length $O(n)$.
	
  Instead of placing a single belt and outer face, we now add a small
  belt of only 20 vertices around every variable rectangle. Namely,
  the belt starts and ends at the rightmost vertex of the top and
  bottom boundary of the variable rectangles, respectively. For the
  first variable rectangle, the belt bounds a face of at most $30+c$
  vertices: 20 from the belt, 10 from the top, and $c$ from the
  extension to the left.  We add another path between the extreme
  vertices of the hinges between the variable rectangles that consists
  of four $90^\circ$-vertices; this creates a face of degree~40: 4
  from this path, 20 from the belt, 10 from the hinges, and 6 from the
  variable rectangles. This path also functions as a bounding box for
  the belt: the minimum bounding box for this path can only be
  achieved if both spirals of the belt are either below or above the
  variable rectangle, thus creating the binary choice of shifting the
  variable rectangle up (\tru) or down (\fal).
	
  For the next variable rectangles, we have to be a bit more
  careful. Repeating the process as for the first variable rectangle
  does not immediately work, as the new belt has to walk around the
  path that describes the bounding box for the previous variable
  rectangle plus its belt. Thus, there would be a gap of height~2
  above and below the variable rectangle. This means that the variable
  rectangle could shift up- and downwards by two coordinates, which
  makes the clause gadgets invalid.  To avoid this issue and to close
  the gap, we place a box of height 2 above and below the second
  variable rectangle, and we place a box of height $2(i-1)$ around the
  $i$-th variable rectangle, for each $i \in \{3,\dots,n\}$.  To avoid
  a face of degree $O(n)$, we use the same strategy as for the
  leftmost path of the left variable gadget: we place a rectangle with
  an interior vertical path of height $2(i-1)$ and we keep refining
  the resulting faces until they all have constant degree. This way,
  we create $2(n-1)$ faces of degree at most 46: 20 from the belt, 4
  from the bounding box path, 10 from the hinges, 6 from the vertical
  extension boxes, and 6 from the variable rectangles.  The outer face
  of the graph has only degree~6: four from the bounding box, and the
  two corners of the rightwards extension of the last variable
  rectangle.
	
  With this adjustment, we can still encode the variable assignments
  of a variable rectangle being either shifted up (\tru) or down
  (\fal), and we can draw the rectilinear representation with minimum
  width and height if and only if every clause gadget has a compact
  drawing, thus the whole rectilinear representation has an orthogonal
  drawing with area $w'_\phi\times h'_\phi$ for some fixed width
  $w'_\phi$ and height $h'_\phi$ if and only if $\phi$ is satisfyable.
  Clearly, $w'_\phi$ and $h'_\phi$ (and the number of vertices) are
  polynomial in $n$ and $m$, and as described above all faces have
  constant degree.
	
  Hence, \textsc{OC} remains NP-hard even for graphs of constant face
  degree and thus it is para-NP-hard when parameterized by the maximum
  face degree.
\end{proof}

\section{Height of the Representation: An XP Algorithm}
\label{se:xp-height}

Given a connected planar graph~$G$ of vertex-degree at most~4, recall that
the \emph{height} of a rectilinear representation~$H$
of~$G$ is the 
minimum height of a rectangular section of the integer grid required
to draw~$H$ orthogonally. By ``guessing'' for every column of the
drawing what lies on each grid point, we obtain an XP algorithm for
\textsc{OC} parameterized by the height of the representation.

\begin{theorem}
  \label{th:xp-height}
  Given a connected planar graph~$G$, a rectilinear representation~$H$
  of~$G$ with $n$ vertices, and an integer $h \ge 1$, we can decide,
  in $n^{O(h)}$ time, whether $H$ admits an orthogonal drawing of
  height~$h$.  In other words, \textsc{OC} is XP with respect
  to~$h$.
\end{theorem}

\begin{proof}
  We want to decide whether $H$ admits an orthogonal drawing on a grid
  that consists of~$h$ horizontal lines.  Given a solution, that is, a
  drawing of~$H$, we can remove any column that does not contain any
  vertex.  Hence it suffices to check whether there
  exists a drawing of~$H$ on a grid of (height~$h$ and) width
  $w \le n$.
	
  To this end, we use dynamic programming (DP) with a table~$B$.  Each
  entry of~$B[c,t]$ corresponds to a column $c$ of the grid and an
  $h$-tuple~$t$.  For an example, see \cref{fig:dp-table}.  Each
  component of~$t$ represents an object (if any) that lies on the
  corresponding grid point in column~$c$.  In a drawing of~$H$, a grid
  point~$g$ can either be empty or it is occupied by a vertex or by an
  edge.  In the case of an edge, the edge can pass through~$g$ either
  horizontally or vertically, 
  as prescribed by~$H$.  Let $\mathcal{T}$ be the set of $h$-tuples
  constructed in this way.  Due to our observation regarding~$w$
  above, $\mathcal{T}$ does not contain any $h$-tuple that consists
  exclusively of horizontally crossing edges and empty grid points.
  Hence $|\mathcal{T}| \in (O(n))^h$.

  \begin{figure}[tb]
    \begin{subfigure}[b]{.66\linewidth}
      \centering
      \includegraphics[page=1]{compaction-new}
      \includegraphics[page=3]{compaction-new}
      \caption{two drawings of the same orthogonal representation}
      \label{fig:ortho-rep}
    \end{subfigure}
    \hfill
    \begin{subfigure}[b]{.32\linewidth}
      \centering
      \includegraphics{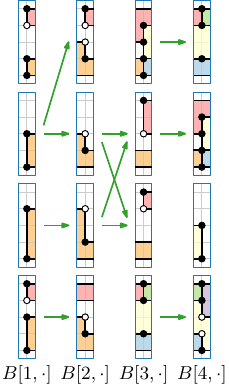}
      \caption{part of the DP table} 
      \label{fig:dp-table}
    \end{subfigure}

    \caption{Part of the table~(b) used by our DP for checking whether
      the orthogonal representation depicted in~(a) admits a drawing
      of height~5.  The arcs that enter an entry $B[c,t]$ come from
      the entries that are taken into account when computing
      $B[c,t]$.  The white vertices are part of a pair of kitty
      corners.}
  \end{figure}
	
  Observe that, in an orthogonal drawing, each column can be
  considered a \emph{cut} of the graph that the drawing represents.
  Hence, any $t \in \mathcal{T}$ uniquely defines all elements of~$H$
  that, in any height-$h$ grid drawing, cross~$c$ or lie to the left
  of~$c$.  We refer to this part of~$H$ as $\lef_H(t)$.  Analogously,
  we define the set $\on_H(t)$ as the elements that are encoded by~$t$,
  that is, vertices or edges of~$H$ that lie on the corresponding grid
  column, or edges that cross the column.  Note that every horizontal
  edge in~$\on_H(t)$ is also contained in~$\lef_H(t)$, whereas every
  vertical edge in~$\on_H(t)$ is {\em not} contained in~$\lef_H(t)$.
  The table entry $B[c,t]$ stores a Boolean value that is true if an
  orthogonal drawing of $\lef_H(t)$ on a grid of size $c \times h$
  (counting grid lines) exists, false otherwise.
	
  We fill the table $B[\cdot,\cdot]$ from left to right,
  that is, in order of increasing first component.  If
  $c=1$, we check, for each $t \in \mathcal{T}$, whether
  $\lef_H(t)=\emptyset$.  In this case, we set $B[c,t]=\tru$,
  otherwise to \fal.  For the case $c>1$, we define~$t' \in
  \mathcal{T}$ to be a \emph{direct predecessor} of~$t \in\mathcal{T}$
  if $\lef_H(t)=\lef_H(t') \cup \on_H(t')$ and, if we place $t'$
  on a column directly in front of~$t$, they ``match'', that is, every
  edge that starts in~$t'$ or crosses~$t'$ either ends on~$t$ or
  crosses~$t$ (in the same row in~$t$ and~$t'$).
  In particular, edges that cross both columns must lie in the same row.
  For every $t \in \mathcal{T}$, we set $B[c,t]=\tru$ if $t$ has a
  direct predecessor $t'$ such that $B[c-1,t']=\tru$.  Otherwise, we
  set $B[c,t]=\fal$.
	
  The DP returns \tru if and only if there exists a pair
  $(c,t) \in [w] \times \mathcal{T}$ such that $B[c,t]=\tru$ and
  $\lef_H(t) \cup \on_H(t) = V(H) \cup E(H)$.
	
  The number of different $h$-tuples is $(O(n))^h$.  For each
  $(c,t) \in [w] \times \mathcal{T}$, we can compute
  $B[c,t]$ in $O(h) \cdot (O(n))^h$ time.  Hence, the DP runs in
  $n^{O(h)}$ total time.
\end{proof}

\section{Conclusions and Open Problems}
\label{se:conclusions}

In this paper we studied the problem of computing an orthogonal
drawing of a given orthogonal representation such that the area of the
drawing is minimized. While the problem is known to be NP-complete in
general, we show that it is FPT when parameterized by the number of
kitty corners and that, if the graph is a simple cycle, there is a
kernel of polynomial size.  We also considered parameters other than
the number of kitty corners: We proved that it remains NP-hard when
parameterized by the maximum face degree, and showed the existence of
an XP algorithm when the parameter is the height of the  drawing. We
conclude by listing some open problems that, in our opinion, can be
the subject of further investigation.

\begin{enumerate}[label=(\arabic*)]
\item Can we find a polynomial kernel for \textsc{OC} with respect to
  the number of kitty corners, or at least with respect to the number
  of kitty corners plus the number of faces, for general graphs? 
\item Does \textsc{OC} admit an FPT algorithm parameterized by the
  height of the orthogonal representation?
\item Is \textsc{OC} solvable in $2^{O(\sqrt{n})}$ time?  This bound
  would be tight assuming that the Exponential Time Hypothesis
  is true.
\item If we parameterize by the number of {\em pairs} of kitty
  corners, can we achieve substantially better running times?
\item Can we substantially reduce the base of the exponent in the running time of the FPT algorithm? In particular, is the problem solvable in time $c^k\cdot n^{O(1)}$ for $c<10$?
\end{enumerate}

\noindent{\bf Acknowledgements.} This research was initiated at
Dagstuhl Seminar 21293: {\em Parameterized Complexity in Graph
  Drawing}.  We are grateful to the organizers for making the seminar
possible and to the other participants for useful discussions.

\bibliographystyle{elsarticle-num} 
\bibliography{compaction}

\newpage

\appendix

\section*{Appendix}
\label{app:kernel}

In this section, we provide the observations, reduction rules, and lemmas omitted in \cref{sec:kernel} to improve readability. 
We first give observations similar to \cref{obs:LB} for the cases when either $\rig(v)$ and $\bel(v)$ (\cref{fig:nearXnY-2}), or $\rig(v)$ and $\abo(v)$ (\cref{fig:nearXnY-3}), or $\lef(v)$ and $\abo(v)$ (\cref{fig:nearXnY-4}) exist.

\begin{observation}\label{obs:RB}
  Let $v$ be an \R- or a \C-vertex of $\dirG$ such that $\rig(v)$ and
  $\bel(v)$ exist. Let $\Gamma$ be a drawing of $\dirG$ such that there
  exists a vertex $u$ for which $(i)$
  $x(u) \in [x(\bel(v)),x(\rig(v))]$, $(ii)$
  $y(u) \in [y(\bel(v)), y(\rig(v))]$, and $(iii)$
  $u, \pred(u), \need(u) \notin \{\rig(v), \bel(v)\}$. Let $S$ be the
  set of all such vertices. Let $\maY = \max\{y(v) \mid v \in S\}$ and
  $\miX = \min\{x(v) \mid v \in S\}$.
	
  Then, $\nx(v,\Gamma)$ is the vertex in $S$ such that $x(\nx(v,\Gamma)) = \miX$
  and $y(\nx(v,\Gamma)) = \max\{y(v) \mid v \in S \wedge x(v) =
  \miX\}$. Similarly, $\ny(v,\Gamma)$ is the vertex in $S$ such that
  $y(\ny(v,\Gamma)) = \maY$ and
  $x(\ny(v,\Gamma)) = \min\{x(v) \mid v \in S \wedge y(v) = \maY\}$. It
  follows from the definition that the neighbors of $\nx(v,\Gamma)$ (resp.,
  $\ny(v,\Gamma)$) are to the right of and below $\nx(v,\Gamma)$ (resp.,
  $\ny(v,\Gamma)$). Moreover, from the planarity of $\Gamma$, it follows that
  $\lab(\nx(v,\Gamma)) \in \{\R,\C\} \setminus \{\lab(v)\}$ and
  $\lab(\ny(v,\Gamma)) \in \{\R,\C\} \setminus \{\lab(v)\}$. See
  \cref{fig:nearXnY-2}.
\end{observation}

\begin{observation}\label{obs:RA}
  Let $v$ be an \R- or a \C-vertex of $\dirG$ such that $\rig(v)$ and
  $\abo(v)$ exist. Let $\Gamma$ be a drawing of $\dirG$ such that there
  exists a vertex $u$ for which $(i)$
  $x(u) \in [x(\abo(v)),x(\rig(v))]$, $(ii)$
  $y(u) \in [y(\rig(v)), y(\abo(v))]$, and $(iii)$
  $u, \pred(u), \need(u) \notin \{\rig(v), \abo(v)\}$. Let $S$ be the
  set of all such vertices. Let $\miY = \min\{y(v) \mid v \in S\}$ and
  $\miX = \min\{x(v) \mid v \in S\}$.
	
  Then, $\nx(v,\Gamma)$ is the vertex in $S$ such that $x(\nx(v,\Gamma)) = \miX$
  and $y(\nx(v,\Gamma)) = \min\{y(v) \mid v \in S \wedge x(v) =
  \miX\}$. Similarly, $\ny(v,\Gamma)$ is the vertex in $S$ such that
  $y(\ny(v,\Gamma)) = \miY$ and
  $x(\ny(v,\Gamma)) = \min\{x(v) \mid v \in S \wedge y(v) = \miY\}$. It
  follows from the definition that the neighbors of $\nx(v,\Gamma)$ (resp.,
  $\ny(v,\Gamma)$) are to the right of and above $\nx(v,\Gamma)$ (resp.,
  $\ny(v,\Gamma)$). Moreover, from the planarity of $\Gamma$, it follows that
  $\lab(\nx(v,\Gamma)) \in \{\R,\C\} \setminus \{\lab(v)\}$ and
  $\lab(\ny(v,\Gamma)) \in \{\R,\C\} \setminus \{\lab(v)\}$. See
  \cref{fig:nearXnY-3}.
\end{observation}

\begin{observation}\label{obs:LA}
  Let $v$ be an \R- or a \C-vertex of $\dirG$ such that $\lef(v)$ and
  $\abo(v)$ exist. Let $\Gamma$ be a drawing of $\dirG$ such that there
  exists a vertex $u$ for which $(i)$
  $x(u) \in [x(\lef(v)),x(\abo(v))]$, $(ii)$
  $y(u) \in [y(\lef(v)), y(\abo(v))]$, and $(iii)$
  $u, \pred(u), \need(u) \notin \{\lef(v), \abo(v)\}$. Let $S$ be the
  set of all such vertices. Let $\miY = \max\{y(v) \mid v \in S\}$ and
  $\maX = \max\{x(v) \mid v \in S\}$.
	
  Then, $\nx(v,\Gamma)$ is the vertex in $S$ such that $x(\nx(v,\Gamma)) = \maX$
  and $y(\nx(v,\Gamma)) = \min\{y(v) \mid v \in S \wedge x(v) =
  \maX\}$. Similarly, $\ny(v,\Gamma)$ is the vertex in $S$ such that
  $y(\ny(v,\Gamma)) = \miY$ and
  $x(\ny(v,\Gamma)) = \max\{x(v) \mid v \in S \wedge y(v) = \miY\}$. It
  follows from the definition that the neighbors of $\nx(v,\Gamma)$ (resp.,
  $\ny(v,\Gamma)$) are to the left of and above $\nx(v,\Gamma)$ (resp.,
  $\ny(v,\Gamma)$). Moreover, from the planarity of $\Gamma$, it follows that
  $\lab(\nx(v,\Gamma)) \in \{\R,\C\} \setminus \{\lab(v)\}$ and
  $\lab(\ny(v,\Gamma)) \in \{\R,\C\} \setminus \{\lab(v)\}$. See
  \cref{fig:nearXnY-4}.
\end{observation}

We now give a reduction rule similar to \cref{rer:RCR} for a path
labeled $\CRC$.

\begin{reduction}\label{rer:CRC}
  Suppose that there exists a path $P= (u_1, u_2, u_3,u_4,u_5)$ in
  $\dirG$ such that $\lab(u_2) = \C$, $\lab(u_3) = \R$ and
  $\lab(u_4) = \C$. Then, delete the vertex $u_3$ and the edges
  incident to it from $\dirG$ and add a new path $(u_2, u'_3, u_4)$ to
  $\dirG$. Moreover, assign $\lab(u_2) = \F$, $\lab(u'_3) = \C$,
  $\lab(u_4) = \F$, $\len((u_2,u'_3)) = \len((u_3,u_4))$ and
  $\len((u'_3,u_4)) = \len((u_2,u_3))$. The labels of all the
  remaining vertices and the weights of all the remaining edges stay
  the same. Let $\dirG_\redu$ be the reduced graph.
\end{reduction}

\begin{lemma}
  \label{lem:CRC}
  \cref{rer:CRC} is safe.
\end{lemma}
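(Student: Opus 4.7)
The plan is to mirror the proof of \cref{lem:RCR} under the symmetry that swaps the roles of reflex and convex corners. The CRC configuration is the ``dual'' of RCR: both rules replace the middle corner of a three-corner path with a new corner of the same label placed on the opposite diagonal of the witness rectangle, so the geometric observations and kitty-corner arguments transfer with the labels \R and \C interchanged.

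I would first fix the geometry by assuming without loss of generality that $\lab((u_1,u_2))=\W$. The label sequence then forces $(u_2,u_3)=\South$, $(u_3,u_4)=\W$, and $(u_4,u_5)=\South$, so in any drawing $u_2$ occupies the NE corner and $u_4$ the SW corner of an axis-aligned rectangle whose SE corner is $u_3$. The reduction places $u'_3$ at the NW corner $(x(u_4),y(u_2))$. Crucially, $u_3$ has $\abo(u_3)=u_2$ and $\lef(u_3)=u_4$, matching the hypothesis of \cref{obs:LA}, whereas $u'_3$ has $\rig(u'_3)=u_2$ and $\bel(u'_3)=u_4$, matching the hypothesis of \cref{obs:RB}.

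For the $(\Rightarrow)$ direction, given a planar drawing $d$ of $\dirG$ with bounding box $\mathcal{B}$, I would keep all coordinates unchanged and place $u'_3$ at the NW corner. The bounding box and edge-weight constraints are preserved by construction, so only planarity needs checking. If some vertex $w$ with $\pred(w),\need(w)\notin\{u_2,u_4\}$ obstructs the rectangle, then \cref{obs:LA} applied to $u_3$ produces $\nx(u_3,d)$ and $\ny(u_3,d)$ with label \C, and since $\lab(\pred(u_3))=\lab(u_2)=\C$ matches, \cref{lem:kittyCorner} declares $u_2$ to be a kitty corner, contradicting that the reduced path contains no kitty corners. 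For the $(\Leftarrow)$ direction, I would symmetrically place $u_3$ at the SE corner $(x(u_2),y(u_4))$; if planarity fails, \cref{obs:RB} applied to $u'_3$ produces $\ny(u'_3,d_\redu)$ with label \R. Following the perturbation argument in the proof of \cref{lem:RCR}, I would build a drawing $d'$ of $\dirG$ by shifting $u_2$, $u_3$, and $u_4$ by some $\varepsilon<1$ to detour around $\ny(u'_3,d_\redu)$, and then argue as in the proof of \cref{lem:kittyCorner} that the induced directed cycle from $u_3$ to $\ny(u'_3,d_\redu)$ accumulates $\rot=2$ on the outer face of $H$, making $u_3$ itself a kitty corner, again a contradiction.

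The only delicate step is selecting which of \cref{obs:LB,obs:RB,obs:RA,obs:LA} to apply in each direction; the CRC orientation flips the ``left/right'' with the ``above/below'' roles compared to RCR, so the labels that \cref{lem:kittyCorner} forces are \C in the forward direction and \R in the reverse, opposite to the RCR case. Once the correct observations are pinned down, the routine planarity and edge-weight verifications together with the perturbation-plus-kitty-corner argument transfer almost verbatim from the proof of \cref{lem:RCR}.
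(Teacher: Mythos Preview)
Your proposal is correct and follows exactly the approach the paper takes: the paper's own proof of \cref{lem:CRC} is a one-line pointer to \cref{lem:RCR}, and you have correctly unwound the symmetry, in particular identifying that \cref{obs:LA} (for $u_3$) and \cref{obs:RB} (for $u'_3$) replace \cref{obs:LB} and \cref{obs:RA} from the \RCR case, with the labels produced by \cref{lem:kittyCorner} swapping accordingly. There is nothing to add.
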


\begin{proof}
  The proof of the lemma follows similarly to the proof of
  \cref{lem:RCR}.
\end{proof}

We now give a lemma similar to \cref{lem:lenRCCR} for a path labeled $\CRRC$.

\begin{lemma}\label{lem:lenCRRC}
  Suppose that there exists a path $P = (u_1,u_2,u_3,u_4,u_5,u_6)$ in
  $\dirG$ such that $\lab(u_2) = \lab(u_5) = \C$ and
  $\lab(u_3) = \lab(u_4) = \R$. For any drawing $\Gamma$ of $\dirG$, there
  exists another drawing $\Gamma'$ of $\dirG$ whose bounding box is same as
  that of $\Gamma$ such that $y(u_2) = y(u_5)$ in $\Gamma'$.
\end{lemma}

\begin{proof}
  The proof of the lemma follows similarly to the proof of
  \cref{lem:lenRCCR}.
\end{proof}

We now give a lemma similar to \cref{lem:lenRCCC} for paths labeled $\CRRR$ or $\RRRC$.

\begin{lemma}\label{lem:lenCRRR}
  Suppose that there exists a path $P = (u_1,u_2,u_3,u_4,u_5,u_6)$ in
  $\dirG$ such that $\lab(u_2) = \C$ and
  $\lab(u_3) = \lab(u_4) = \lab(u_5) = \R$ (resp.,
  $\lab(u_2) = \lab(u_3) = \lab(u_4) = \R$ and $\lab(u_5) = \C$). For
  any drawing $\Gamma$ of $\dirG$, $y(u_2) < y(u_5)$ in $\Gamma$.
\end{lemma}

\begin{proof}
  The proof of the lemma follows similarly to the proof of
  \cref{lem:lenRCCC}.
\end{proof}

We now give reduction rules similar to \cref{rer:RCCC} for paths
labeled $\CCCR$, $\CRRR$ and $\RRRC$.

\begin{reduction}\label{rer:CCCR}
  Suppose that there exists a path $P= (u_1, u_2, u_3,u_4,u_5,u_6)$ in
  $\dirG$ such that $\lab(u_2) = \lab(u_3) = \lab(u_4) = \C$ and
  $\lab(u_5) = \R$. Then, delete the vertices $u_3$ and $u_4$ and the
  edges incident to them from $\dirG$ and add a new path
  $(u_2, u'_3, u_5)$ to $\dirG$. Moreover, assign $\lab(u_2) = \C$,
  $\lab(u'_3) = \C$, $\lab(u_5) = \F$,
  $\len((u'_3,u_5)) = \len((u_3,u_4))$,
  $\len((u_6, \need(u_6))) = \max\{$
  $\len((u_4,u_5)), \len((\need(u_1),$ $u_1))\}$, and
  $\len((u_2,u'_3)) = \max\{$ $\len((u_2,$
  $u_3)) - \len((u_4,u_5)), 1\}$. The labels of all the remaining
  vertices and the weights of all the remaining edges stay the
  same. Let $\dirG_\redu$ be the reduced graph.
\end{reduction}

\begin{lemma}
  \label{lem:CCCR}
  \cref{rer:CCCR} is safe.
\end{lemma}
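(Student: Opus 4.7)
The plan is to mirror the proof of \cref{lem:RCCC}, exploiting the fact that the pattern $\CCCR$ is the reverse of $\RCCC$ (with all edge directions flipped), so the roles of the two endpoints of the path are swapped. I need to establish both directions of the equivalence between planar drawings of $\dirG$ and $\dirG_\redu$ with identical bounding boxes.

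For the forward direction, I would start with a planar drawing $d$ of $\dirG$ with bounding box~$\mathcal{B}$ and apply \cref{lem:lenRCCC} in its $\CCCR$ form to deduce $y(u_5) < y(u_2)$. Because \cref{rer:flatPath,rer:RCR,rer:CRC} can no longer be applied, $u_6$ is labeled $\R$; depending on $\lab(\need(u_6))$ I can invoke \cref{lem:lenCRRC} directly or after a $180^\circ$ rotation, playing the role that the vertex $\pred(u_1)$ did in the proof of \cref{lem:RCCC}, to establish an inequality between $y(u_4)$ and $y(\need(u_6))$. This inequality is exactly what justifies the weight assignment $\len((u_6,\need(u_6))) = \max\{\len((u_4,u_5)), \len((\need(u_1), u_1))\}$ in $\dirG_\redu$. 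I then construct $d_\redu$ by deleting $u_3, u_4$ and placing $u'_3$ at the mirror position $(x(u_3), y(u_5))$; the weight constraint on $(u_2, u'_3)$ follows from the two inequalities above. Planarity is shown by contradiction: any vertex $w$ obstructing the vacated rectangle yields, via the appropriate observation among \cref{obs:LB,obs:RB,obs:RA,obs:LA} at the $\CCCR$ corner together with \cref{lem:kittyCorner}, a kitty corner pair along the reduced path, contradicting the absence of such pairs.

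For the backward direction, starting from a planar drawing $d_\redu$ of $\dirG_\redu$ with bounding box~$\mathcal{B}$, I would reinsert $u_3, u_4$ symmetrically to the $\RCCC$ reconstruction by setting $x(u_3) = x(u'_3)$, $x(u_4) = x(u_5)$, and $y(u_3) = y(u_4) = y(u_5) + \len((u_4,u_5))$. The chosen weight of $(u_6, \need(u_6))$ in $\dirG_\redu$ ensures that $\mathcal{B}$ still bounds $d$. If planarity fails, an obstructing vertex $w$ lies in the strip $[x(u'_3), x(u_5)] \times [y(u_5), y(u_5) + \len((u_4,u_5))]$; the symmetric counterpart of the argument from \cref{lem:RCCC} (invoking the appropriate observation at $u_6$ together with \cref{lem:kittyCorner}) produces an auxiliary drawing $d'$ in which $u_3$, paired with a suitable vertex of $\dirG$, forms a kitty corner pair on the outer face of~$H$, contradicting our assumption.

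The main obstacle, as in \cref{lem:RCCC}, is the backward direction: identifying the correct nearest-vertex observation for the $\CCCR$ geometry and constructing the auxiliary drawing that exposes the kitty corner pair. Once the correspondence with the $\RCCC$ case is properly set up, all remaining verification steps (bounding-box preservation, consistency of the reassigned vertex labels, and satisfaction of edge-weight constraints) follow routinely.
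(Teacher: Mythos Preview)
Your proposal is correct and takes essentially the same approach as the paper, which simply states that the proof ``follows similarly to the proof of \cref{lem:RCCC}.'' You have spelled out in detail the symmetric argument that the paper leaves implicit; just note that in your backward reconstruction the sign should read $y(u_3)=y(u_4)=y(u_5)-\len((u_4,u_5))$, since $(u_4,u_5)$ is an $\N$-edge in the $\CCCR$ geometry.
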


\begin{proof}
  The proof of the lemma follows similarly to the proof of
  \cref{lem:RCCC}.
\end{proof}

\begin{reduction}\label{rer:CRRR}
  Suppose that there exists a path $P= (u_1, u_2, u_3,u_4,u_5,u_6)$ in
  $\dirG$ such that $\lab(u_2) = \C$ and
  $\lab(u_3) = \lab(u_4) = \lab(u_5) = \R$. Then, delete the vertices
  $u_3$ and $u_4$ and the edges incident to them from $\dirG$ and add
  a new path $(u_2, u'_3, u_5)$ to $\dirG$. Moreover, assign
  $\lab(u_2) = \F$, $\lab(u'_3) = \R$, $\lab(u_5) = \R$,
  $\len((u_2,u'_3)) = \len((u_3,u_4))$,
  $\len((\pred(u_1),u_1)) = \max\{$
  $\len((u_2,u_3)), \len((\pred(u_1),$ $u_1))\}$, and
  $\len((u'_3,u_5)) = \max\{$ $\len((u_4,$
  $u_5)) - \len((u_2,u_3)), 1\}$. The labels of all the remaining
  vertices and the weights of all the remaining edges stay the
  same. Let $\dirG_\redu$ be the reduced graph.
\end{reduction}

\begin{lemma}
  \label{lem:CRRR}
  \cref{rer:CRRR} is safe.
\end{lemma}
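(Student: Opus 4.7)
The plan is to mirror the proof of \cref{lem:RCCC} by exploiting the symmetry between the CRRR and RCCC patterns: the two rules have identical structure except that the \C and \R labels are interchanged along the relevant subpath (together with the implicit swap of some left/right and above/below directions). Accordingly, the same scheme for placing $u'_3$ in the forward direction, and for reinserting $u_3$ and $u_4$ in the backward direction, will work after replacing \cref{lem:lenRCCC} with \cref{lem:lenCRRR}, and after switching each invocation of one of the near-vertex observations \cref{obs:LB}--\cref{obs:LA} to the one matching the CRRR geometry.

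For the forward direction, I would start from a planar drawing $d$ of $\dirG$ with bounding box~$\cal B$. First, \cref{lem:lenCRRR} yields $y(u_2) < y(u_5)$ in $d$. Because the earlier reduction rules have removed all \F-vertices and all $\RCR$ subpaths, we must have $\lab(u_1) = \C$ (otherwise $(u_1,u_2,u_3) = \RCR$), and $\lab(\pred(u_1)) \in \{\R,\C\}$; depending on which case occurs, I would use \cref{lem:lenRCCR} (after a $180^\circ$ rotation when needed) to adjust $d$ so that the endpoints of the CRRR path have aligned $y$-coordinates, exactly as in the proof of \cref{lem:RCCC}. I then define $d_\redu$ by inheriting the coordinates of all surviving vertices and placing $u'_3$ at the symmetric image of the RCCC placement. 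Checking that $\cal B$ still bounds $d_\redu$, and that the adjusted weights on $(\pred(u_1),u_1)$ and $(u'_3,u_5)$ are satisfied, is a direct calculation from the alignment established above. Planarity is proved by contradiction: an obstructing vertex $w$ inside the rectangle spanned by $u_2$ and $u_4$ would, by the appropriate near-vertex observation together with \cref{lem:kittyCorner}, form a kitty-corner pair with $u_2$, contradicting the assumption that the reduced path contains no kitty corners.

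For the backward direction, given a planar drawing $d_\redu$ of $\dirG_\redu$ with bounding box~$\cal B$, I would reinsert $u_3$ and $u_4$ by setting $x(u_3) = x(u_2)$, $x(u_4) = x(u'_3)$, and $y(u_3) = y(u_4) = y(u_2) + \len((u_2,u_3))$, exactly as in \cref{lem:RCCC}; the bounding-box condition holds because the weight $\len((\pred(u_1),u_1))$ was enlarged to absorb the extra vertical extent. If some vertex $w$ obstructs planarity, I would perturb the drawing by a small $\varepsilon$ so that $u_4$ sits immediately adjacent to a near-vertex $\ny(u_1)$, and then build a cycle from $u_4$ to $\ny(u_1)$ whose rotation sum equals $2$, exhibiting $(u_4,\ny(u_1))$ as a kitty-corner pair on the outer face of $H$ and contradicting the assumption on the reduced path. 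The main obstacle I expect is the careful bookkeeping of which of \cref{obs:LB}--\cref{obs:LA} applies at each step: swapping \C and \R reverses some of the left/right and above/below relationships used to define $\nx$ and $\ny$, so the choice of perturbation direction and of the rotation certificate must be re-derived (not merely copy-pasted) to remain consistent with the edge directions of the CRRR pattern.
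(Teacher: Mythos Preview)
Your proposal is correct and takes essentially the same approach as the paper, which proves \cref{lem:CRRR} simply by stating that it ``follows similarly to the proof of \cref{lem:RCCC}.'' One minor point: in the backward direction your placement $y(u_3)=y(u_4)=y(u_2)+\len((u_2,u_3))$ should have a minus sign (since for \CRRR the edge $(u_2,u_3)$ goes south rather than north), but you already flag this kind of direction-swap as something to re-derive rather than copy.
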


\begin{proof}
  The proof of the lemma follows similarly to the proof of
  \cref{lem:RCCC}.
\end{proof}

\begin{reduction}\label{rer:RRRC}
  Suppose that there exists a path $P= (u_1, u_2, u_3,u_4,u_5,u_6)$ in
  $\dirG$ such that $\lab(u_2) = \lab(u_3) = \lab(u_4) = \R$ and
  $\lab(u_5) = \C$. Then, delete the vertices $u_3$ and $u_4$ and the
  edges incident to them from $\dirG$ and add a new path
  $(u_2, u'_3, u_5)$ to $\dirG$. Moreover, assign $\lab(u_2) = \R$,
  $\lab(u'_3) = \R$, $\lab(u_5) = \F$,
  $\len((u'_3,u_5)) = \len((u_3,u_4))$,
  $\len((u_6, \need(u_6))) = \max\{$
  $\len((u_4,u_5)), \len((\need(u_1),$ $u_1))\}$, and
  $\len((u_2,u'_3)) = \max\{$ $\len((u_2,$
  $u_3)) - \len((u_4,u_5)), 1\}$. The labels of all the remaining
  vertices and the weights of all the remaining edges stay the
  same. Let $\dirG_\redu$ be the reduced graph.
\end{reduction}

\begin{lemma}
  \label{lem:RRRC}
  \cref{rer:RRRC} is safe.
\end{lemma}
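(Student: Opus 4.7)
The plan is to follow the same template as the proof of \cref{lem:RCCC}, exploiting the fact that \cref{rer:RRRC} is the image of \cref{rer:CCCR} under reversing the face-traversal direction (equivalently, swapping source/sink switches). We need to show that there exists a planar drawing $d$ of $\dirG$ with bounding box $\cal B$ if and only if there exists a planar drawing $d_\redu$ of $\dirG_\redu$ with the same bounding box~$\cal B$. Before starting, observe that because Reduction Rules~\ref{rer:flatPath}, \ref{rer:RCR}, and \ref{rer:CRC} are no longer applicable, $u_6$ must be an \R- or \C-vertex and, together with its successor, cannot form an~$\RCR$ or~$\CRC$ pattern; this is exactly the setting needed to invoke the length lemmas on the $u_6$ side.

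For the forward direction, first apply \cref{lem:lenCRRR} (in its $\RRRC$ incarnation) to $P$ to conclude that in any drawing $d$ of $\dirG$ we already have $y(u_2)<y(u_5)$. Next, to align the ``right" side of the removed rectangle, invoke \cref{lem:lenRCCR} or \cref{lem:lenCRRC} to the five-vertex window ending at $u_6$ (possibly after a $180^\circ$ rotation, as in the proof of \cref{lem:RCCC}), obtaining a drawing $d'$ with the same bounding box in which $x(u_4)$ is constrained to $x(\need(u_6))$. Then define $d_\redu$ by keeping coordinates of every vertex in $V(\dirG_\redu)\setminus\{u'_3\}$ as in $d'$ and placing $u'_3$ at the inner reflex corner determined by $u_2$ and $u_5$; the new weights assigned by \cref{rer:RRRC} (in particular $\len((u_6,\need(u_6)))$ and $\len((u_2,u'_3))$) are chosen precisely so that $\cal B$ remains a valid bounding box. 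Finally, if some vertex $w$ of $\dirG$ with $\pred(w),\need(w)\notin\{u_3,u_5\}$ lay inside the rectangle that collapses under the reduction, then \cref{obs:RB} (in the orientation forced by $\RRRC$) together with \cref{lem:kittyCorner} would yield a kitty-corner pair with one endpoint on the interior of $P$, contradicting the hypothesis that $P$ contains no kitty corner.

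For the reverse direction, given a planar drawing $d_\redu$ of $\dirG_\redu$ with bounding box $\cal B$, reinsert $u_3$ and $u_4$ at the coordinates forced by $x(u_3)=x(u'_3)$, $x(u_4)=x(u_5)$, and a common $y$-value chosen so that the prescribed edge weight $\len((u'_3,u_5))=\len((u_3,u_4))$ is realised; the choice of $\len((u_6,\need(u_6)))$ in the rule guarantees that the reinserted ``notch" fits vertically inside $\cal B$. If some vertex $w\in V(\dirG_\redu)$ were to block this placement, we perturb $u_3,u_4$ toward $\ny(u_6)$ (or its symmetric counterpart) as in the corresponding step of the proof of \cref{lem:RCCC}; by \cref{obs:LA} the label of this near-vertex forces, via the argument underlying \cref{lem:kittyCorner}, a kitty-corner pair involving $u_4$ (or $u_3$) and a vertex of $\dirG$, contradicting the assumption on $P$. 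This completes safety.

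The main obstacle is bookkeeping: the $\RRRC$ orientation is the mirror image of $\CCCR$ rather than of $\RCCC$, so one must be careful to select the correct one of \cref{obs:LB,obs:RB,obs:RA,obs:LA} at each step and to rotate the invocations of \cref{lem:lenCRRC,lem:lenRCCR,lem:lenCRRR} so that the ``$y(u_2)<y(u_5)$'' constraint points the right way with respect to the normalising assumption that the first edge of~$P$ is a $\W$-edge. Once this orientation is fixed, every remaining step is a direct translation of the corresponding step in the proof of \cref{lem:RCCC}, and the weight updates in \cref{rer:RRRC} are exactly those dictated by the bounding-box analysis.
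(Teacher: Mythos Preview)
Your proposal is essentially the same approach the paper takes: the paper's proof is the single sentence ``The proof of the lemma follows similarly to the proof of \cref{lem:RCCC},'' and you are spelling out that analogy. One small bookkeeping slip: with the normalising assumption $\lab((u_1,u_2))=\W$, the $\RRRC$ path traces $\W,\N,\E,\South,\E$, so $u_6$ is a $\C$-vertex (else $u_4u_5u_6=\RCR$), $(u_6,\need(u_6))$ points $\N$, and the constraint you need on the ``right'' side is $y(u_4)\le y(\need(u_6))$, not an $x$-coordinate constraint; also, $\RRRC$ is not literally $\CCCR$ with the traversal reversed (that would give $\RCCC$), but rather the $\R\leftrightarrow\C$ dual of $\CCCR$, which is why the proof mirrors the $\RCCC$/$\CCCR$ pair with the roles of $\pred(u_1)$ and $\need(u_6)$ swapped.
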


\begin{proof}
  The proof of the lemma follows similarly to the proof of
  \cref{lem:RCCC}.
\end{proof}

We now give a reduction rule similar to \cref{rer:RCCR} for a path
labeled $\CRRC$.

\begin{reduction}\label{rer:CRRC}
  Suppose that there exists a path $P = (u_1,u_2,u_3,u_4,u_5,u_6)$ in
  $\dirG$ such that $\lab(u_2) = \lab(u_5) = \C$ and
  $\lab(u_3) = \lab(u_4) = \R$. Then, delete the vertices $u_3$ and
  $u_4$ and the edges incident to them from $\dirG$ and add a new edge
  $(u_2,u_5)$ to $\dirG$. Moreover, assign
  $\lab(u_2) = \lab(u_5) = \F$, $\len((u_2,u_5)) = \len((u_3,u_4))$
  and $\len((\pred(u_1),$ $u_1)) = \max\{\len((u_2,$
  $u_3)), \len((\pred(u_1),$ $u_1))\}$. The labels of all the
  remaining vertices and the weights of all the remaining edges stay
  the same. Let $\dirG_\redu$ be the reduced graph.
\end{reduction}

\begin{lemma}
  \label{lem:CRRC}
  \cref{rer:CRRC} is safe.
\end{lemma}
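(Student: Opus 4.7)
The plan is to mirror the proof of Lemma~\ref{lem:RCCR} essentially verbatim, swapping the roles of reflex and convex vertices throughout (and, correspondingly, the vertical orientation: the two interior $\R$-vertices of a $\CRRC$ subpath face the opposite direction compared to the two interior $\C$-vertices of an $\RCCR$ subpath). As in Lemma~\ref{lem:RCCR}, I would prove that a planar drawing $d$ of $\dirG$ with a given bounding box $\cal B$ exists if and only if a planar drawing $d_\redu$ of $\dirG_\redu$ with the same bounding box exists.

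For the forward direction, given a planar drawing $d$ of $\dirG$ with bounding box $\cal B$, I would first invoke Lemma~\ref{lem:lenCRRC} to deduce that, without loss of generality, $y(u_2)=y(u_5)$ in $d$. Since Reduction Rules~\ref{rer:flatPath}--\ref{rer:RRRC} have already been applied, $\dirG$ contains no $\F$-vertex and no subpath labeled $\RCR$, $\CRC$, $\RCCC$, $\CCCR$, $\CRRR$, or $\RRRC$; consequently the left neighbor $u_1$ of the initial $\C$-vertex $u_2$ must be an $\R$-vertex whose predecessor $\pred(u_1)$ is a $\C$-vertex. Applying Lemma~\ref{lem:lenRCCR} in its rotated form to the five vertices surrounding the $\RCCR$-like pattern at $(\pred(\pred(u_1)),\pred(u_1),u_1,u_2,u_3)$ then yields the equality that forces the boosted weight of $(\pred(u_1),u_1)$ in $\dirG_\redu$ to be realized by the inherited coordinates. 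Defining $d_\redu$ by keeping every surviving coordinate of $d$, the bounding box $\cal B$ is preserved, and the only remaining obstruction to planarity would be a vertex $w$ landing in the rectangle spanned by $u_2,u_3,u_4,u_5$; by the appropriate rotated form of Observation~\ref{obs:LB} combined with Lemma~\ref{lem:kittyCorner}, such a $w$ would produce a kitty-corner pair containing $u_2$, contradicting the assumption that the truncated path has no kitty corners.

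For the reverse direction, given a planar drawing $d_\redu$ of $\dirG_\redu$ with bounding box $\cal B$, I would reinsert $u_3$ and $u_4$ as $\R$-vertices at $x$-coordinates $x(u_2)$ and $x(u_5)$ respectively, on the side of the segment $u_2 u_5$ dictated by the $\CRRC$ orientation (opposite to the $\RCCR$ case because the inner vertices are now reflex), at distance $\len((u_2,u_3))$. The inflated weight $\len((\pred(u_1),u_1))$ in $\dirG_\redu$ ensures $\cal B$ still contains the reinserted rectangle. Should some $w\in V(\dirG_\redu)$ obstruct the reinsertion, I would, as in the second half of the proof of Lemma~\ref{lem:RCCR}, construct an auxiliary drawing $d'$ in which $u_3$ and $u_4$ are placed just past $\ny(u_1)$, and then invoke the rotated form of Observation~\ref{obs:LA} together with an argument analogous to the proof of Lemma~\ref{lem:kittyCorner} to conclude that $u_4$ and $\ny(u_1)$ form a pair of kitty corners on the outer face of $H$ — the desired contradiction.

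The main obstacle is purely bookkeeping: correctly tracking ``up'' versus ``down'' and ``left'' versus ``right'' for the $\CRRC$ pattern, identifying the right rotated incarnations of Observations~\ref{obs:LB} and~\ref{obs:LA} and of the kitty-corner construction from Lemma~\ref{lem:kittyCorner}, and verifying that the edge-weight updates in Reduction Rule~\ref{rer:CRRC} exactly match the constraints that fall out of the forward and reverse arguments. Beyond this careful symmetry matching, no new idea is needed: the proof is a routine dualization of the $\RCCR$ argument.
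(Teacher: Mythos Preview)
Your approach is exactly the paper's: its entire proof reads ``The proof of the lemma follows the proof of \cref{lem:RCCR},'' and your proposal spells out precisely that dualization. One small bookkeeping slip to fix: in the $\CRRC$ case the symmetric argument forces $\lab(u_1)=\C$ (otherwise $u_1,u_2,u_3$ would form an $\RCR$ pattern) and $\lab(\pred(u_1))=\R$, not the other way around as you wrote; with that correction, $(\pred(u_1),u_1,u_2,u_3)$ is the $\RCCR$ pattern to which \cref{lem:lenRCCR} applies, as you intended.
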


\begin{proof}
  The proof of the lemma follows the proof of \cref{lem:RCCR}.
\end{proof}

\end{document}